\newcommand{\be}{\begin{equation}}
\newcommand{\ee}{\end{equation}}
\newcommand{\ben}{\begin{equation*}}
\newcommand{\een}{\end{equation*}}
\newcommand{\mc}{\mathcal}
\newcommand{\e}{\epsilon}
\newcommand{\mcb}{\mathcal{B}_{M,L}}
\newcommand{\abs}[1]{\lvert#1\rvert}
\newcommand{\norm}[1]{\lVert#1\rVert}
\newcommand{\expec}{\mathbb{E}}
\newcommand{\bks}{\backslash}
\newcommand{\secl}{\text{sec}(i)}
\newcommand{\mscrs}{\mathscr{S}}
\newcommand{\snr}{\textsf{snr}}
\newtheorem{lem}{Lemma}
\newtheorem{defi}{Definition}
\newtheorem{thm}{Theorem}
\newtheorem{fact}{Fact}
\newtheorem{prop}{Proposition}
\begin{document}
\title{Capacity-achieving Sparse Superposition Codes via \\ Approximate Message Passing Decoding}
\author{Cynthia Rush,~\IEEEmembership{Member,~IEEE,}
Adam Greig,~\IEEEmembership{Student Member,~IEEE,}
and Ramji Venkataramanan,~\IEEEmembership{Member,~IEEE}
\thanks{This paper was presented in part at the 2015 IEEE International Symposium on Information Theory.}%
\thanks{C.~Rush was with the Department of Statistics, Yale University. She is now with the Department of Statistics, New York, NY 10027, Columbia University, USA (e-mail: cynthia.rush@columbia.edu).}%
\thanks{A.~Greig and R.~Venkataramanan are with Department of Engineering, University of Cambridge, Cambridge CB2 1PZ, UK (e-mail: \{ag611, rv285\}@cam.ac.uk).}
%
%\thanks{Communicated by H.~Pfister, Associate Editor for Coding Theory.}
%\thanks{Copyright (c) 2014 IEEE. Personal use of this material is permitted.  However, permission to use this material for any other purposes must be obtained from the IEEE by sending a request to pubs-permissions@ieee.org.}
}
\maketitle

\begin{abstract}
Sparse superposition codes were recently introduced by Barron and Joseph for reliable communication over the AWGN channel at rates approaching the channel capacity. The codebook is defined in terms of a Gaussian design matrix, and codewords are sparse linear combinations of columns of the matrix. In this paper, we propose an approximate message passing decoder for sparse superposition codes, whose decoding complexity scales linearly with the size of the design matrix. The performance of the decoder is rigorously analyzed and it is shown to asymptotically achieve the AWGN capacity with an appropriate power allocation. Simulation results are provided to demonstrate the performance of the decoder at finite blocklengths. We introduce a power allocation scheme to improve the empirical performance, and demonstrate how the decoding complexity can be significantly reduced by using Hadamard design matrices.
\end{abstract}

\begin{IEEEkeywords}
Sparse regression codes, capacity-achieving codes, AWGN channel, coded modulation, low-complexity decoding, compressed sensing
\end{IEEEkeywords}

\section{Introduction}
\label{sec:intro}
\IEEEPARstart{T}his paper considers the problem of constructing low-complexity,  capacity-achieving codes for the memoryless additive white Gaussian noise (AWGN) channel.  The channel generates output $y$ from input $x$ according to
\be
y = x + w,
\label{eq:awgn}
\ee
where the noise $w$ is a  Gaussian  random variable with zero mean and variance $\sigma^2$. There is an average power constraint $P$ on the input $x$: if $x_1, \ldots, x_n$ are transmitted over $n$ uses of the channel, then we require that $\tfrac{1}{n}\sum_{i=1}^n x_i^2 \leq P$. The signal-to-noise ratio $\tfrac{P}{\sigma^2}$  is denoted by \snr. The goal is to construct  codes with computationally efficient encoding and decoding, whose rates approach the channel capacity given by
\be \mc{C}: =\tfrac{1}{2} \log (1 + \snr). \ee

 Sparse superposition codes, also called Sparse Regression Codes (SPARCs),  were  recently introduced by Barron and Joseph \cite{AntonyML,AntonyFast} for communication over the channel in \eqref{eq:awgn}. They proposed an efficient decoding algorithm called `adaptive successive decoding', and showed that for any fixed rate $R < \mc{C}$, the probability of decoding error decays to zero exponentially in $\tfrac{n}{\log n}$, where $n$ is the block length of the code. 
 Despite the strong theoretical performance guarantees,  the rates achieved by this decoder for practical block lengths are significantly less than $\mc{C}$. Subsequently, a soft-decision iterative decoder was proposed by Cho and Barron \cite{BarronC12,choThesis}, with theoretical guarantees  similar to the earlier decoder in \cite{AntonyFast} but improved empirical performance for finite block lengths.

In this paper, we propose an approximate message passing (AMP) decoder for SPARCs. We analyze its performance and prove  that  the probability of decoding error goes to zero with growing block length for all fixed rates $R < \mc{C}$. The decoding complexity is proportional to the size of the design matrix defining the code, which is a low order polynomial in $n$.  

\subsection{Approximate Message Passing (AMP)}
``Approximate message passing" refers to a class of algorithms \cite{DonMalMont09,DonMalMontITW, BayMont11,MontChap11, bayMontLASSO, krz12,Rangan11,DonSpatialC13}
that are Gaussian or quadratic approximations of  loopy belief propagation algorithms (e.g., min-sum, sum-product) on dense factor graphs. AMP has proved  particularly effective for the problem of reconstructing sparse signals from a small number of noisy linear measurements. This problem, commonly referred to as compressed sensing \cite{CSspissue08}, is described by the measurement model
\be
y = A\beta + w.
\label{eq:cs_model}
\ee
 Here ${A}$ is an $n \times N$ measurement matrix with $n <N$, $\beta \in \mathbb{R}^{N}$ is a sparse vector to be estimated from the observed vector $y \in \mathbb{R}^n$, and $w \in \mathbb{R}^n$ is the measurement noise. One popular class of algorithms to reconstruct $\beta$ is $\ell_1$-norm based convex optimization, e.g. \cite{CandesTaoLP,DonohoCS,TroppRelax}. Though these algorithms have strong theoretical guarantees and excellent empirical performance, the computational cost  makes it challenging to implement the convex optimization procedures for problems where $N$ is large.  A fast AMP reconstruction algorithm for the model in \eqref{eq:cs_model} was proposed in \cite{DonMalMont09}. Its empirical performance  (for a large class of measurement matrices)  was found to be  similar to convex optimization based methods at significantly lower computational cost. 

The factor graph corresponding to the model in \eqref{eq:cs_model} is dense,  hence it is infeasible to implement message passing algorithms in which the messages are complicated real-valued functions. AMP circumvents this difficulty by passing only scalar parameters corresponding to these functions. For example, the scalars could be the mean and the variance if the functions are posterior distributions.   The references\cite{DonMalMontITW,MontChap11,krz12,Rangan11}  describe how various flavors of AMP for the model  in \eqref{eq:cs_model} can be obtained by approximating the standard message passing equations. These approximations reduce the message passing equations to a set of simple rules for computing successive estimates of $\beta$. 

In \cite{DonMalMont09}, it was demonstrated via numerical experiments that the mean-squared reconstruction error of these estimates of $\beta$ could be tracked by a simple scalar iteration called \emph{state evolution}.   In \cite{BayMont11}, it was rigorously proved that the state evolution is accurate in the large system limit\footnote{The large system limit considered in \cite{BayMont11} lets $n,N \to \infty$ with $n/N$  held constant.} for measurement matrices $A$ with i.i.d.\ Gaussian entries.

In addition to  compressed sensing,  AMP has also been applied to a variety of related problems, e.g. \cite{KamilovRFU14, schAMP1,schAMP2}.  We will not attempt a complete survey of the growing literature on AMP; the reader is referred to \cite{Rangan11,DonSpatialC13} for comprehensive lists of related work.
%%%%%%%%%

\subsection{Contributions of the Paper}

\begin{compactitem}
\item We propose an AMP decoder for sparse regression codes, which is derived via a first-order approximation of a min-sum-like message passing algorithm.

\item The main result of the paper is Theorem \ref{thm:main_amp_perf}, in which we rigorously show that  the probability of decoding error goes to zero  as the  block length tends to infinity, for all rates $R < \mc{C}$. 

\item The performance of the decoder for finite block lengths  is demonstrated via simulation results. We introduce a power allocation scheme that significantly improves the empirical performance for rates not close to $\mc{C}$. We also show how the decoding complexity can be reduced by using Hadamard-based design matrices.
\end{compactitem}

To prove our main result, we use the framework of Bayati and Montanari \cite{BayMont11,  bayMontLASSO}, who in turn built on techniques introduced by Bolthausen \cite{Bolt12}. However, we remark that the analysis of the proposed algorithm does not follow directly from the results in \cite{BayMont11, JavMonState13}. The main reason for this is that the \emph{undersampling ratio} $n/N$ in our setting goes to zero in the large system limit, whereas previous rigorous analyses of AMP consider the case where the undersampling ratio is a constant. This point, as well as  other differences from the analysis in \cite{BayMont11, bayMontLASSO},  is discussed further in Section \ref{subsec:lem_comments}.

\subsection{Related work on communication with SPARCs}
The adaptive successive decoder of  Joseph-Barron \cite{AntonyFast} and the iterative soft-decision decoder of Cho-Barron \cite{BarronC12,choThesis}  both have probability of error that decays as $n/\log n$ for any fixed rate $R < \mc{C}$, but the latter has better empirical performance. Theorem  \ref{thm:main_amp_perf} shows that the probability of error for the AMP decoder goes to zero for all $R  < \mc{C}$, but does not give a rate of decay; hence we cannot theoretically compare its  performance with the Cho-Barron decoder in \cite{choThesis}. We can, however, compare the two decoders qualitatively.

Both the AMP and the Cho-Barron decoder generate a succession of estimates $\beta^1, \beta^2, \ldots$ for the message vector $\beta$ based on test statistics $s^0, s^1, \ldots$, respectively.  At step $t$, the Barron-Cho decoder generates statistic $s^t$ based on an orthonormalization of the observed vector $y$ and the previous `fits' $A \beta^1, \ldots, A\beta^{t}$. In contrast, the test statistic in the AMP decoder is based on a modified version of the residual $(y-A\beta^t)$.  Despite being generated in very different  ways, the test statistics of the AMP and Cho-Barron decoders have a similar structure: they are asymptotically equivalent to an observation of $\beta$ corrupted by additive Gaussian noise whose variance decreases with $t$. However,  the AMP statistic is faster to compute in each step, which makes it feasible to implement the decoder for larger block lengths. 

An approximate message passing decoder for sparse superposition codes was recently proposed by Barbier and Krzakala in \cite{barbKrzISIT14}. This decoder has different update rules from the AMP proposed here. A replica-based analysis of the decoder in \cite{barbKrzISIT14} suggested it could not achieve rates beyond a threshold which was strictly smaller than $\mc{C}$.  Subsequently, Barbier et al \cite{BarbSchKrz15} reported empirical results which show that the performance of the decoder in \cite{barbKrzISIT14} can be improved by using spatially coupled Hadamard matrices to define the code. 

Finally, we mention that bit-interleaved coded modulation \cite{bicmAlbert} is a technique widely used for communication over AWGN channels. Some alternative approaches to designing high-rate codes for the AWGN channel are 
  low-density lattice codes \cite{LDLC08} and the recently proposed polar lattices \cite{YanLLW14}. 
  
\subsection{Paper outline  and Notation}
The paper is organized as follows. The SPARC construction is described in Section \ref{sec:sparc}. We describe the AMP channel decoder in Section \ref{sec:amp_channel_decoder},   and provide some intuition about its iterations. We also show how the decoder can be derived as a first-order approximation to a min-sum-like message passing algorithm. Section \ref{sec:AMP_perf} contains the main result,  which  characterizes the performance of the AMP decoder for any rate $R < \mc{C}$ in the large system limit.  In Section \ref{sec:ex_results}, we present simulation results to demonstrate the performance of the decoder at finite block lengths. Section \ref{sec:amp_proof} contains the proof of the main result, and the proof of a key technical  lemma is given in Section \ref{sec:lem1_proof}.

\emph{Notation}:  The $\ell_2$-norm of vector $x$  is denoted by $\norm{x}$. The transpose of a matrix $B$ is denoted by $B^*$.  The Gaussian distribution with mean $\mu$ and variance $\sigma^2$ is denoted by $\mc{N}(\mu,\sigma^2)$. For any positive integer $m$, $[m]$ denotes the set
 $\{1, \dots, m \}$. The indicator function of an event $\mc{A}$ is denoted by $\mathbf{1}(\mc{A})$. $f(x)=o(g(x))$ means $\lim_{x \to \infty} f(x)/g(x) =0$; $f(x)=\Theta(g(x))$ means $f(x)/g(x)$ asymptotically lies in an interval $[\kappa_1,\kappa_2]$ for some constants $\kappa_1,\kappa_2>0$.  $\log$  and $\ln$ are used to denote logarithms with base $2$ and base $e$, respectively. Rate is measured in bits.
 %%%%%%%%%%
 
 \section{The Sparse Regression Codebook} \label{sec:sparc}
\begin{figure}[t]
\centering
\includegraphics[width=3.5in]{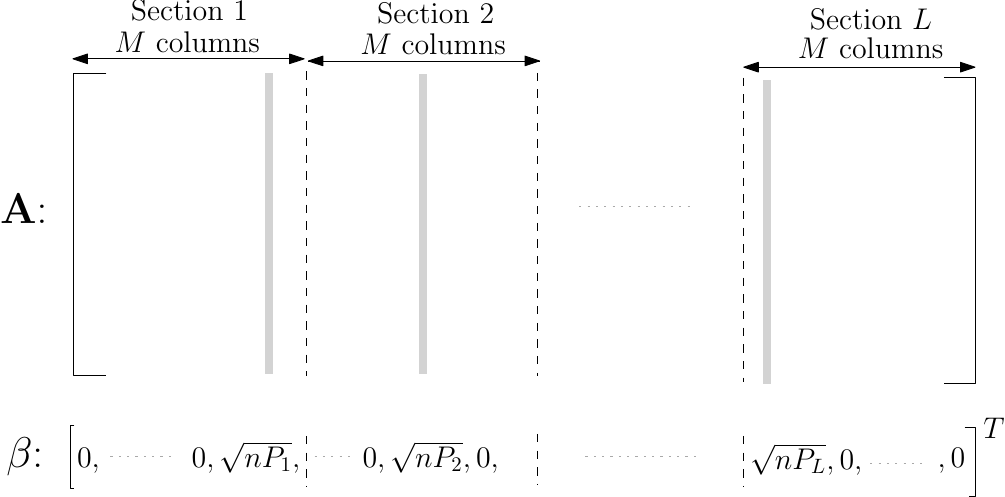}
\caption{\small{$A$ is an $n \times ML$ matrix and $\beta$ is a $ML \times 1$ vector. The positions of the non-zeros in $\beta$  correspond to the gray columns of $A$ which combine to form the codeword $A\beta$.}}
\vspace{-7pt}
\label{fig:sparserd}
\end{figure}

A sparse regression code  is defined in terms of a dictionary or design matrix $A$ of dimension $n \times ML$,  whose entries are i.i.d.\ $\mathcal{N}(0,\tfrac{1}{n})$. Here $n$ is the block length, and $M, L$ are integers whose values are specified below in terms of $n$ and the rate $R$.  As shown in Fig. \ref{fig:sparserd}, one can think of the matrix $A$ being composed of $L$ sections with $M$ columns each. Each codeword is a linear combination of $L$ columns, with one column from each section.
Formally, a codeword can be expressed as  $A \beta$, where $\beta$ is  an $ML \times 1$ vector $(\beta_1, \ldots, \beta_{ML})$ with the following property:  there is exactly one non-zero $\beta_j$ for  $1 \leq j \leq M$, one non-zero $\beta_j$ for $M+1 \leq j \leq 2M$, and so forth.  The non-zero value of $\beta$ in section $\ell \in [L]$  is set to $\sqrt{n P_\ell}$, where the positive constants $P_\ell$ satisfy $\sum_{\ell=1}^L P_\ell = P$.
Denote the set of all $\beta$'s that satisfy this property by $\mcb(P_1, \ldots,P_L)$.

Since each of the $L$ sections contains $M$ columns, the total number of codewords is $M^L$. To obtain a communication rate of $R$ bits/sample, we need
\be
M^L = 2^{nR} \quad \text{ or } \quad L \log M = nR.
\label{eq:ml_nR}
\ee
 There are several choices for the pair $(M,L)$ which satisfy \eqref{eq:ml_nR}. For example, $L=1$ and $M=2^{nR}$ recovers the Shannon-style random codebook in which the number of columns in $A$ is $2^{nR}$. For our constructions, we will choose $M$ equal to $L^{\textsf{a}}$, for some constant $\textsf{a} >0$. In this case, \eqref{eq:ml_nR} becomes
 \be
 \textsf{a} L \log L = nR.
 \label{eq:llogl_nR}
 \ee
  Thus $L= \Theta(\tfrac{n}{\log n})$, and the size of the design matrix $A$ (given by $n \times ML = n \times L^{\textsf{a}+1}$) now grows as $n^{2+\mathsf{a}}/(\log n)^{\mathsf{a}+1}$.

\emph{Encoding}: The encoder splits its stream of input bits into segments of $\log M$ bits each. A length $ML$  message  vector $\beta_0$ is indexed by $L$ such segments---the decimal equivalent of segment $\ell$ determines the position of the non-zero coefficient in section $\ell$ of $\beta_0$. The input codeword is then computed as $x=A\beta_0$; note that computing $x$ simply involves adding  $L$ columns of $A$, weighted by the appropriate coefficients.

  \emph{Power Allocation}: The power allocation $\{P_\ell \}_{\ell =1}^L$,  plays an important role in determining the performance of the decoder. We will consider allocations where $P_\ell=\Theta(\tfrac{1}{L})$. Two examples are:
  \begin{compactitem}
 \item Flat power allocation across sections: $P_\ell=\tfrac{P}{L}$, $\ell \in [L]$.
  \item Exponentially decaying power allocation:  Fix parameter  $ \kappa  >0$. Then
  $ P_\ell  \propto  2^{-\kappa \ell/L}, \  \ell \in [L]$.
  \end{compactitem}
  We use the exponentially decaying allocation with $\kappa=2\mc{C}$ for Theorem \ref{thm:main_amp_perf}. In Section \ref{sec:ex_results}, we discuss other power allocations,  and find that an appropriate combination of exponential and flat allocations yields good decoding performance at finite block lengths.   

Both the design matrix $A$ and the power allocation $\{P_\ell\}$  are known to the encoder and the decoder before communication begins.

\emph{Some more notation}:  In the analysis, we will treat the message as a random vector $\beta$, which is uniformly distributed over 
$\mcb(P_1,\ldots,P_L)$, the set of length $ML$ vectors that have a single non-zero entry $\sqrt{nP_\ell}$ in section $\ell$, for $\ell \in [L]$. We will denote the true message vector by $\beta_0$; $\beta_0$ should be understood as a \emph{realization} of the random vector $\beta$. 

We will use indices $i, j$ to denote specific entries of $\beta$, while the index $\ell$ will be used to denote the entire section $\ell$ of $\beta$. Thus $\beta_i, \beta_j$ are scalars, while $\beta_\ell$ is a length $M$ vector. We also set $N = ML$.  

The performance of the SPARC decoder will be characterized in the limit as the dictionary size  goes to $\infty$. We write $\lim x$ to denote the limit of the quantity $x$ as the SPARC parameters $n, L, M \to \infty$ simultaneously, according to
$M=L^{\textsf{a}} \quad  \text{and }  \ \textsf{a} L \log L = nR$.

%%%%%%%%%%%%%%%%%%%%

\section{The AMP Channel Decoder} \label{sec:amp_channel_decoder}

Given the received vector $y= A \beta_0 + w$, the AMP decoder generates successive estimates of the message vector, denoted by $\{ \beta^t \}$, where $\beta^t \in \mathbb{R}^N$ for $t=1,2,\ldots$. Set $\beta^0=0$, the all-zeros vector. For $t=0,1,\ldots$, compute
\begin{align}
z^t & =y - A\beta^t + \frac{z^{t-1}}{\tau^2_{t-1}}\left( P - \frac{\norm{\beta^t}^2}{n} \right), \label{eq:amp1}\\
\beta^{t+1}_i & = \eta^t_{i}( \beta^t + A^*z^t), \quad  \text{ for } i=1,\ldots,N=ML, \label{eq:amp2}
\end{align}
where  quantities with negative indices are set equal to zero. The constants $\{ \tau_{t} \}$, and the estimation functions $\eta^t_i(\cdot)$ are defined as follows for $t=0,1,\ldots$.

Define
\begin{align}
\tau^2_{0} & =\sigma^2+P,  \qquad
\tau^2_{t+1}   =  \sigma^2 + P(1-x_{t+1}), \quad t \geq 0,
\label{eq:taut_def}
\end{align}
where
\be
x_{t+1} = \sum_{\ell=1}^{L} \frac{P_\ell}{P} \, \expec \left[
\frac{e^{\frac{\sqrt{n P_\ell}}{\tau_t} \, (U^{\ell}_1  + \frac{\sqrt{n P_\ell}}{\tau_t})} }
{e^{\frac{\sqrt{n P_\ell}}{\tau_t} \, (U^{\ell}_1  + \frac{\sqrt{n P_\ell}}{\tau_t})} + \sum_{j=2}^M e^{\frac{\sqrt{n P_\ell}}{\tau_t}U^{\ell}_j} } \right].
\label{eq:xt_def}
\ee
In \eqref{eq:xt_def}, $\{ U^\ell_j\}$ are i.i.d.\ $\mc{N}(0,1)$ random variables for $j\in [M], \ \ell \in [L]$.

The notation $j \in \text{sec}(\ell)$ will be used as shorthand for ``index $j$ in section $\ell$", i.e., $j \in \{(\ell-1)M +1, \ldots, \ell M\}$ where $\ell \in [L]$. 
For $i \in [N]$ such that $i \in \text{sec}(\ell)$,  define
\be
\eta^t_i(s) = \sqrt{nP_\ell}\,  \frac{e^{ {s_i \sqrt{n P_\ell}}/ {\tau^2_t}}}
{\sum_{j \in \text{sec}(\ell)} \, e^{ {s_j \sqrt{n P_\ell}}/ {\tau^2_t}} }. 
\label{eq:eta_def}
\ee
 Notice that $\eta^t_i(s)$ depends on all the components of $s$ in the \emph{section} containing $i$.  For brevity, the argument of $\eta^t_i$ in \eqref{eq:amp2} is written as $A^* z^t +   \beta^t$, with the understanding that only the components in the section containing $i$ play a role in computing $\eta^t_i$.

 Before running the AMP decoder,  the constants $\{ \tau_t \}$ must be iteratively computed using \eqref{eq:taut_def} and \eqref{eq:xt_def}. This is an offline computation: for  given values of $M,L,n$, the expectations in \eqref{eq:xt_def} can be computed via Monte Carlo simulation.  The relation \eqref{eq:taut_def}, which describes how $\tau_{t+1}$ is obtained from  $\tau_{t}$,  is called \emph{state evolution}, following the terminology in \cite{DonMalMont09,BayMont11}.  In Section \ref{sec:AMP_perf} (Lemmas \ref{lem:conv_expec} and \ref{lem:lim_xt_taut}), we derive closed form expressions for  $x_{t}$ and $\tau^2_t$ as $n \to \infty$ for each $t>0$, which we denote by $\bar{x}^{t}$ and $\bar{\tau}^2_t$.   In Section \ref{sec:AMP_perf}, it is shown that for an appropriately chosen power allocation, $\bar{x}^{t}$ strictly increases with $t$ until it reaches $1$ in a finite number of steps $T^*$ for any fixed $R < \mc{C}$. (For the exponentially decaying allocation used in Theorem \ref{thm:main_amp_perf},  $T^*= \left\lceil \frac{2 \mc{C}}{\log(\mc{C}/R)} \right\rceil$, as given in \eqref{eq:Tstar_def}.)  
  
The AMP decoder is run for $T^*$ steps, and  iteratively computes codeword estimates $\beta^1, \ldots, \beta^{T^*}$ using \eqref{eq:amp1} and \eqref{eq:amp2}.  Finally,  in each section $\ell$ of $\beta^{T^*}$, set the maximum value to $\sqrt{nP_\ell}$ and remaining entries to $0$ to obtain the decoded message $\hat{\beta}$.  Our main theoretical result  (Theorem \ref{thm:main_amp_perf}) characterizes the  performance of the AMP decoder run for $T^*$ steps with the asymptotic values $\{ \bar{\tau}^2_t \}_{t=0, \ldots, T^*}$
%%%%%%

\subsection{The Test Statistics $\beta^t + A^*z^t$} \label{subsec:test_stat}

To understand the decoder let us first focus on \eqref{eq:amp2}, in which $\beta^{t+1}$ is generated from the test statistic
\be
s^t:= \beta^t + A^*z^t.
\ee
 The AMP update step \eqref{eq:amp2}  is underpinned by the following key property of  the {test statistic}:
 \emph{$s^t$ is asymptotically (as $n \to \infty$) distributed as  $\beta + \bar{\tau_t} Z$, where $\bar{\tau}_t$ is the limit of $\tau_t$, and $Z$ is an i.i.d.\ $\mc{N}(0,1)$ random vector independent of the  message vector
 $\beta$.} This property, which is proved in Section \ref{sec:amp_proof},  is due to the presence of the ``Onsager" term
\[ \frac{z^{t-1}}{\tau^2_{t-1}}\left( P - \frac{\norm{\beta^t}^2}{n} \right) \]
in the residual update step  \eqref{eq:amp1}. The reader is referred to \cite[Section I-C]{BayMont11} for intuition about role of the Onsager term in the standard AMP algorithm.

In light of the above property, a natural way to generate $\beta^{t+1}$ from $s^t=s$ is 
\be \beta^{t+1}(s) =  \expec[ \beta \, | \, \beta + \tau_t Z =s],   \label{eq:bayes_opt_est} \ee
i.e., $\beta^{t+1}$ is  the Bayes optimal estimate of $\beta$ given the observation $s^t = \beta + \tau_t Z$. For $i \in \text{sec}(\ell)$, $\ell \in [L]$, we have
\be
\label{eq:cond_exp_beta}
\begin{split}
& \beta^{t+1}_i(s)  = \expec[\beta_i \mid  \beta + \tau_t Z = s ]  \\
& = \expec[\beta_i \mid  \{ \beta_j + \tau_t Z_j = s_j \}_{j \in \text{sec}(\ell)} ] \\
& = \sqrt{n P_\ell} \ P(\beta_i = \sqrt{n P_\ell} \mid   \{ \beta_j + \tau_t Z_j = s_j \}_{j \in \text{sec}(\ell)} )\\
&=   \frac{ \sqrt{n P_\ell} \, f( \{s_j \}_{j \in \text{sec}(\ell)} \mid \beta_i = \sqrt{n P_\ell}) \, P(\beta_i = \sqrt{n P_\ell})}
{\sum_{k \in \text{sec}(\ell)}  f( \{s_j \}_{j \in \text{sec}(\ell)} \mid \beta_k = \sqrt{n P_\ell}) \, P(\beta_k = \sqrt{n P_\ell})}
\end{split}
\ee
where we have used Bayes' theorem with  $f(\cdot | \beta_k=\sqrt{nP_\ell})$ denoting the joint density of $\{ \beta_j + \tau_t Z_j \}_{j \in \text{sec}(\ell)}$ conditioned on $\beta_k$ being the non-zero entry in section $\ell$.  Since  $\beta$ and $Z$ are independent with $Z$ having i.i.d.\ $\mc{N}(0,1)$ entries,  for each $k \in \text{sec}(\ell)$ we have
\be
\begin{split}
& f( \{ \beta_j + \tau_t Z_j =s_j \}_{j \in \text{sec}(\ell)} \mid \beta_k = \sqrt{n P_\ell}) \\
  & \propto e^{-(s_k - \sqrt{nP_\ell})^2/2 \tau_t^2} \prod_{j \in \text{sec}(\ell), j \neq k } e^{-s_j^2/2 \tau_t^2} \\
& = e^{s_k \sqrt{n P_\ell}/ \tau^2_t } \, e^{-nP_\ell/2\tau^2_t} \prod_{j \in \text{sec}(\ell)} e^{-s_j^2/2 \tau_t^2}.
\end{split}
\label{eq:cond_denf}
\ee
Using  \eqref{eq:cond_denf} in \eqref{eq:cond_exp_beta},  together with the fact that $P(\beta_k = \sqrt{n P_\ell}) = \frac{1}{M}$ for each $k \in \text{sec}(\ell)$, we obtain
\be
\beta^{t+1}_i(s) = \expec[\beta_i \, | \,  \beta + \tau_t Z = s] = \sqrt{nP_\ell}  \frac{e^{s_i \sqrt{n P_\ell}/\tau^2_t}}
{\sum_{j \in \text{sec}(\ell)} \, e^{s_j \sqrt{n P_\ell} / \tau^2_t}}
\label{eq:bit1_def}
\ee
which is the expression in \eqref{eq:eta_def}.

Thus, under the distributional assumption that $s^t$ equals $\beta + \tau_t Z$, $\beta^{t+1}$  is the estimate of the message vector  $\beta$ (based on $s^t$) that minimizes the expected squared estimation error.    Further, for $i \in \text{sec}(\ell)$, $\beta^{t+1}_i/\sqrt{n P_\ell}$ is  the posterior probability  of $\beta_i$ being the non-zero entry in section $\ell$, conditioned on the observation $s^t = \beta + \tau_t Z$.  Fig. \ref{fig:eta_prog} shows the progression of  $\beta^{t}_{i_\ell}/\sqrt{n P_\ell}$ with $t$ for various sections $\ell$, where $i_\ell$ denotes the index of the true non-zero entry in section $\ell$. We see that the later sections (which are allocated less power) require a larger number of iterations for the posterior probability of the correct term in the section to transition to a value close to one. The iteration at which this transition occurs is determined by the state evolution equations \eqref{eq:taut_def} and \eqref{eq:xt_def}, as discussed below.

 \begin{figure}[t]
\centering
\includegraphics[width=3.5in]{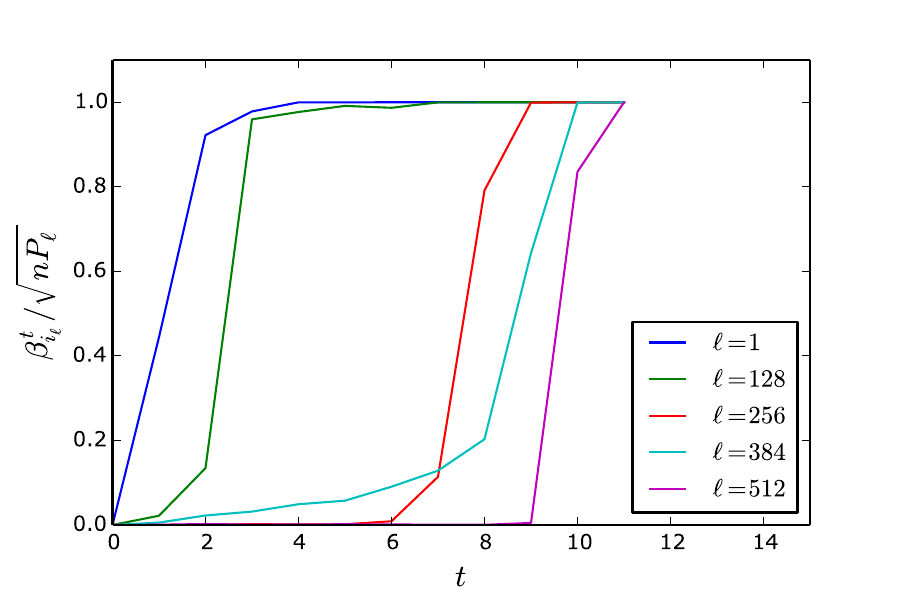}
\caption{ \small{Progression of  $\beta^{t}_{i_\ell}/\sqrt{n P_\ell}$ with $t$ for various sections $\ell$, where $i_\ell$ is the correct term in section $\ell$. The SPARC parameters are $L=512, M=1024, \snr =15, R=0.7 \mc{C}, \, P_\ell \propto 2^{-2R{\ell}/{L}}$. The figure shows the progression for a `typical' simulation run of the AMP decoder, where there were no section errors after decoding. In 100 runs with the above SPARC parameters, a majority of runs resulted in no section errors, and over 95\% of the runs had fewer than five section errors.}}
\vspace{-7pt}
\label{fig:eta_prog}
\end{figure}

%%%%%%%%

\subsection{State Evolution and its Consequences} \label{subsec:sevol}

We now discuss the role of the quantity $x_{t+1}$ in the state evolution equations \eqref{eq:taut_def} and \eqref{eq:xt_def}.

\begin{prop}
Under the assumption that $s^t = \beta + \tau_t Z$, where $Z$ is i.i.d.\ $\sim \mc{N}$(0,1)  and independent of $\beta$,  the quantity $x^{t+1}$ defined in \eqref{eq:xt_def} satisfies
\be
x_{t+1} = \frac{1}{nP} \expec[\beta^* \beta^{t+1}], \quad  1-x_{t+1} = \frac{1}{nP} \expec[ \norm{\beta - \beta^{t+1}}^2],
\label{eq:xt_betatbeta}
\ee
\label{prop:se_cons}
and consequently, $\tau^2_{t+1} = \sigma^2 + \frac{\expec[ \norm{\beta - \beta^{t+1}}^2]}{n}$.
\end{prop}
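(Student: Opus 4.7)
The proof is essentially a direct calculation that exploits (i) the explicit posterior-mean form of $\eta_i^t$ derived in \eqref{eq:bit1_def}, and (ii) the fact that the posterior mean is an orthogonal projection. Here is how I would organize it.

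First, I would compute $\expec[\beta^*\beta^{t+1}]$ section by section. Since $\beta$ has a single nonzero entry $\sqrt{nP_\ell}$ in each section $\ell$, writing $i_\ell$ for its location gives $\beta^*\beta^{t+1}=\sum_{\ell=1}^L \sqrt{nP_\ell}\,\beta^{t+1}_{i_\ell}$. Under the distributional assumption $s^t=\beta+\tau_t Z$ with $Z$ i.i.d.\ $\mathcal{N}(0,1)$, the entries of $s^t$ restricted to section $\ell$ are $s_{i_\ell}=\sqrt{nP_\ell}+\tau_t Z_{i_\ell}$ and $s_j=\tau_t Z_j$ for $j\in\secl,\, j\ne i_\ell$. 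Substituting these into \eqref{eq:bit1_def} and setting $a_\ell := \sqrt{nP_\ell}/\tau_t$, I would get
\be
\beta^{t+1}_{i_\ell} \;=\; \sqrt{nP_\ell}\,\frac{e^{a_\ell(Z_{i_\ell}+a_\ell)}}{e^{a_\ell(Z_{i_\ell}+a_\ell)}+\sum_{j\in\secl,\,j\ne i_\ell}e^{a_\ell Z_j}}.
\ee
By the exchangeability of the standard Gaussians within each section, the expectation does not depend on which index $i_\ell$ plays the role of the nonzero, so I can relabel it as ``index $1$'' and rename the $Z$'s to the $U^\ell_j$'s used in \eqref{eq:xt_def}. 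Multiplying by the other factor $\sqrt{nP_\ell}$, summing over $\ell$, and dividing by $nP$ yields exactly the right-hand side of \eqref{eq:xt_def}, establishing the first identity $x_{t+1}=\frac{1}{nP}\expec[\beta^*\beta^{t+1}]$.

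For the second identity, I would expand $\norm{\beta-\beta^{t+1}}^2=\norm{\beta}^2-2\,\beta^*\beta^{t+1}+\norm{\beta^{t+1}}^2$, note that $\norm{\beta}^2=n\sum_\ell P_\ell = nP$ deterministically, and then invoke the fact that $\beta^{t+1}=\expec[\beta\,|\,s^t]$ is the $L^2$-projection of $\beta$ onto the $\sigma$-algebra generated by $s^t$; this gives $\expec[\norm{\beta^{t+1}}^2]=\expec[\beta^*\beta^{t+1}]$ by the tower property. Substituting:
\be
\expec[\norm{\beta-\beta^{t+1}}^2] \;=\; nP - \expec[\beta^*\beta^{t+1}] \;=\; nP(1-x_{t+1}),
\ee
which gives $1-x_{t+1}=\tfrac{1}{nP}\expec[\norm{\beta-\beta^{t+1}}^2]$. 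The final claim $\tau^2_{t+1}=\sigma^2+\expec[\norm{\beta-\beta^{t+1}}^2]/n$ is then immediate from the state-evolution definition \eqref{eq:taut_def}, since $P(1-x_{t+1})=\expec[\norm{\beta-\beta^{t+1}}^2]/n$.

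There is no real obstacle here; the proof is mostly bookkeeping. The only subtle point worth stating cleanly is the orthogonality identity $\expec[\norm{\beta^{t+1}}^2]=\expec[\beta^*\beta^{t+1}]$, which follows because $\beta^{t+1}$ was defined in \eqref{eq:bayes_opt_est} as the conditional expectation. With that in hand, the two equalities in \eqref{eq:xt_betatbeta} are two reformulations of the same computation, and the statement about $\tau^2_{t+1}$ is just unpacking the definition.
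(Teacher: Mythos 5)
Your proof is correct and follows essentially the same route as the paper: the first identity is established by writing $\beta^*\beta^{t+1}$ section by section, substituting the Gaussian form of $s^t$ into the posterior-mean formula \eqref{eq:bit1_def}, and relabelling the nonzero index by exchangeability; the second identity follows by expanding $\norm{\beta-\beta^{t+1}}^2$ and invoking the orthogonality of the conditional expectation to get $\expec[\norm{\beta^{t+1}}^2]=\expec[\beta^*\beta^{t+1}]$, after which the $\tau^2_{t+1}$ claim is just \eqref{eq:taut_def}. This is exactly the paper's argument (the paper phrases the orthogonality step as ``the orthogonality principle'' rather than the tower property, but these are the same computation).
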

\begin{proof}
For convenience of notation, we relabel  the $N$ i.i.d.\ random variables $\{Z_k\}_{k \in [N]}$ as \\
$\{U^\ell_j\}_{j \in [M], \ell \in [L]}$.  For any $\ell$, $U^\ell$ denotes the length $M$ vector $\{U^\ell_j\}_{j \in [M]}$, and $U$ is the length $N$ vector  $\{U^\ell\}_{\ell \in [L]}$. We  have
\be
\begin{split}
& \frac{1}{nP} \expec[\beta^* \beta^{t+1}] = \frac{1}{nP} \expec[\beta^* \,  \eta^t(\beta + \tau_t U)] \\
& \stackrel{(a)}{=}  \  \frac{1}{nP} \sum_{\ell=1}^L \expec[ \sqrt{n P_\ell} \ \eta^t_{\textsf{sent}(\ell)}(\beta_\ell + \tau_t U^\ell)  ] \\
& \stackrel{(b)}{=} \frac{1}{nP} \sum_{\ell =1}^L \expec \left[ \sqrt{nP_\ell} \, 
\frac{\sqrt{nP_\ell} \cdot e^{\sqrt{n P_\ell} (\sqrt{nP_\ell} + \tau_t U^\ell_1)/ \tau^2_t}}
{e^{\frac{\sqrt{n P_\ell} (\sqrt{nP_\ell} + \tau_t U^\ell_1)}{\tau^2_t}}  + 
\sum_{j =2}^M e^{\frac{\sqrt{n P_\ell}  \tau_t U^\ell_j}{\tau^2_t}}  } \right] \\
& = \sum_{\ell=1}^{L} \frac{P_\ell}{P} \, \expec \left[
\frac{e^{\frac{\sqrt{n P_\ell}}{\tau_t} \, (U^{\ell}_1  + \frac{\sqrt{n P_\ell}}{\tau_t})} }
{e^{\frac{\sqrt{n P_\ell}}{\tau_t} \, (U^{\ell}_1  + \frac{\sqrt{n P_\ell}}{\tau_t})} + \sum_{j=2}^M e^{\frac{\sqrt{n P_\ell}}{\tau_t}U^{\ell}_j} } \right] = x_{t+1}.
\end{split}
\label{eq:Ebbt}
\ee
In $(a)$ above, the index of the non-zero term in section $\ell$ is denoted by $\textsf{sent}(\ell)$. $(b)$ is obtained by assuming that $\textsf{sent}(\ell)$ is the first entry in section $\ell$ --- this assumption is valid because the prior on $\beta$ is uniform over $\mcb(P_1,\ldots,P_L)$.

Next,  consider
\be
\frac{1}{nP} \expec[ \norm{\beta - \beta^{t+1}}^2 ] = 1 + \frac{ \expec[\norm{\beta^{t+1}}^2]- 2 \expec[\beta^* \beta^{t+1}]}{nP}.
\label{eq:betat_beta_sq}
\ee
Under the assumption that $s^t = \beta + \tau_t Z$, recall from Section \ref{subsec:test_stat} that $\beta^{t+1}$ can be expressed as $\beta^{t+1} = \expec[\beta \mid s^t]$. 
We therefore have
\be
\begin{split}
& \expec[\norm{\beta^{t+1}}^2] = \expec[ \,  \norm{\expec[ \beta | s^t ]}^2 \, ]=   \expec[ \, (\expec[ \beta | s^t ] - \beta + \beta)^* \expec[ \beta | s^t]]  \\
& \stackrel{(a)}{=} \expec[ \, \beta^* \expec[ \beta | s^t ]  \,] = \expec[ \, \beta^* \beta^{t+1}],
\end{split}
\label{eq:betat_sq}
\ee
where step $(a)$ follows because $ \expec[ \, (\expec[ \beta | s^t ] - \beta)^* \expec[ \beta | s^t ] \, ] =0$ due to the orthogonality principle. Substituting \eqref{eq:betat_sq} in \eqref{eq:betat_beta_sq} and using \eqref{eq:Ebbt} yields \[ \frac{1}{nP} \expec[ \norm{\beta - \beta^{t+1}}^2 ] = 1 - \frac{ \expec[ \, \beta^* \beta^{t+1} \, ]}{nP}  = 1 - x_{t+1}. \]
The last claim then follows from \eqref{eq:taut_def}.
\end{proof}

Hence $x_{t+1}$ can be interpreted as the expectation of the (power-weighted) fraction of correctly decoded sections in step $t+1$. We emphasize that this interpretation is accurate only in the limit as $n, M, L \to \infty$, when $s^t$ is distributed as  $\beta + \bar{\tau}_t Z$,  with  $\bar{\tau}_t := \lim \tau_t$.
In Section \ref{sec:amp_proof}  (Lemmas \ref{lem:conv_expec} and \ref{lem:lim_xt_taut}), we derive a closed-form expression for $ \bar{x}_{t+1} := \lim x_{t+1}$ under an exponentially decaying power allocation of the form $P_\ell \propto 2^{-2\mc{C} \ell/L}$.  We show that for rates $R< \mc{C}$,  
\be
\bar{x}_{t} = \frac{ (1+ \snr) - (1+ \snr)^{1- \xi_{t-1}}}{\snr},  \quad  \bar{\tau}_t^2 = \sigma^2 + P(1-\bar{x}_t),  
\label{eq:xt1_formula}
\ee
for $t \geq 0$ where $\xi_{-1}=0$ and 
\be
\begin{split}
\xi_{t} & = \min \left\{ \left(\frac{1}{2\mc{C}}\log\left(\frac{\mc{C} }{R}\right) +  \xi_{t-1}\right), \ 1  \right\}.
\end{split}
\label{eq:2c_alph}
\ee
A direct consequence of \eqref{eq:xt1_formula} and \eqref{eq:2c_alph} is that $\bar{x}_t$ strictly increases with $t$ until it reaches one, and the number of steps $T^*$ until  $\bar{x}_{T^*}=1$  is
$T^* = \left\lceil \frac{2 \mc{C}}{\log(\mc{C}/R)} \right\rceil$.

The constants $\{ \xi_t \}_{t\geq 0}$ have a nice interpretation in the large system limit: at the end of step $t+1$, the first $\xi_t$ fraction of sections in $\beta^{t+1}$ will be correctly decodable with high probability, i.e., the true non-zero entry in these sections will have almost all the posterior probability mass. The other $(1- \xi_t)$ fraction of sections \emph{will not} be correctly decodable from $\beta^{t+1}$ as the power allocated to these sections is not large enough. An additional $\tfrac{1}{2\mc{C}}\log\left(\tfrac{\mc{C} }{R}\right)$ fraction of sections become correctly decodable in each step until $T^*$, when all the sections are correctly decodable with high probability. Fig. \ref{fig:eta_prog} illustrates when  various sections of $\beta$ become decodable for a finite-sized SPARC with $L=512, M=1024$, and $R=0.7 \mc{C}$.

As $\bar{x}_{t}$ increases to $1$,  \eqref{eq:xt1_formula} implies that $\bar{\tau}^2_t$, the variance of the ``noise"  in the  AMP test statistic, decreases monotonically  from 
$\bar{\tau}^2_0= \sigma^2+P$ down to $\bar{\tau}^2_{T^*} = \sigma^2$.  In other words, the initial observation $y=A\beta + w$ is effectively transformed by the AMP decoder into a cleaner statistic $s^{T^*} = \beta + w'$, where $w'$ is Gaussian with the \emph{same} variance as the measurement noise $w$.

\begin{figure}[t]
\centering
\includegraphics[width=3.4in]{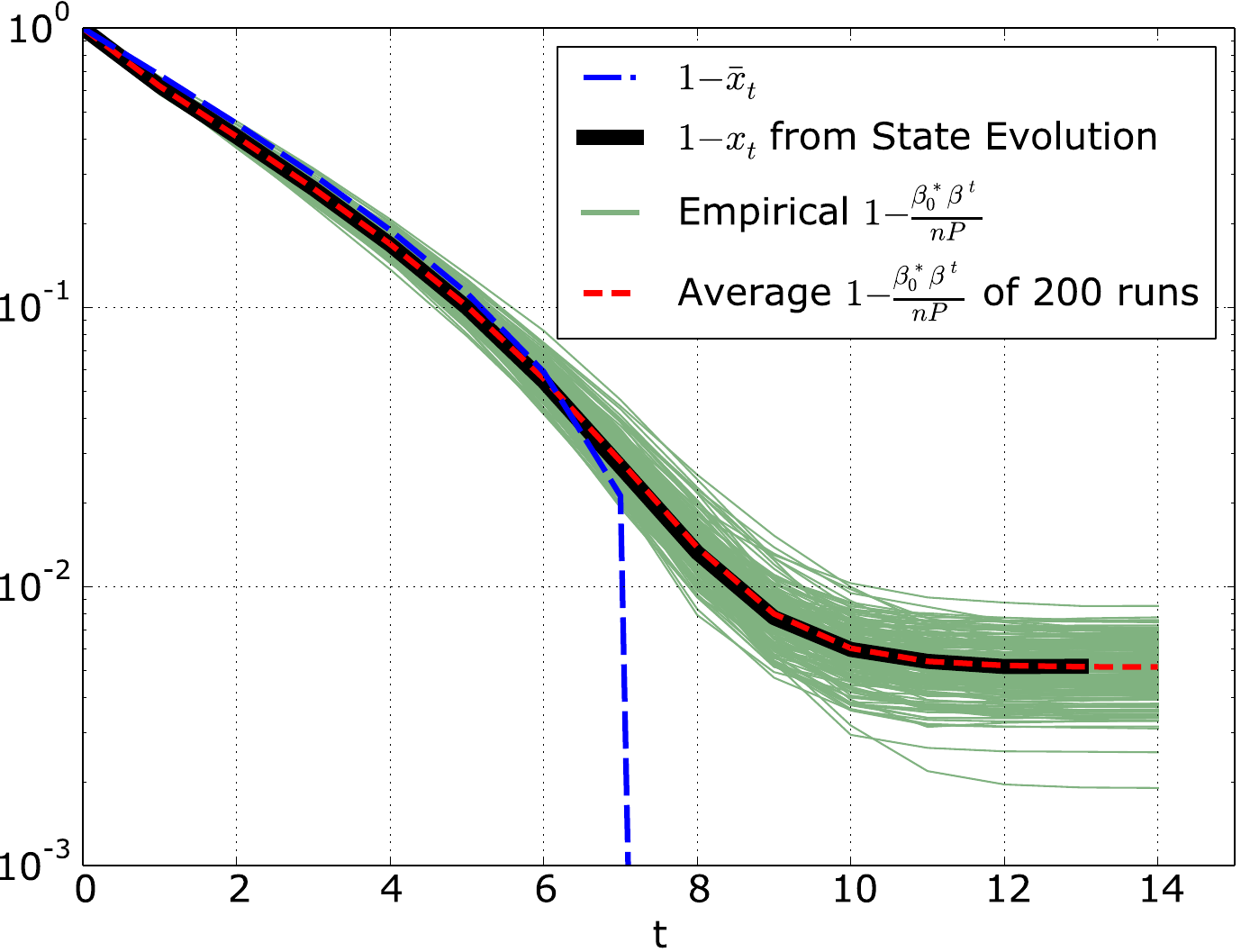}
\caption{\small{Comparison of state evolution predictions with AMP performance. The SPARC parameters are $M=512, L=1024, \snr =15, R=0.7 \mc{C}, \, P_\ell \propto 2^{-2\mc{C}{\ell}/{L}}$. The average of the $200$ trials (green curves) is the dashed red curve, which is almost indistinguishable from the state evolution prediction (black curve).}}
\vspace{-7pt}
\label{fig:SE_example}
\end{figure}
To summarize, for any fixed $R < \mc{C}$, when the AMP decoder is run for a finite number of steps $T^* = \left\lceil \frac{2 \mc{C}}{\log(\mc{C}/R)} \right\rceil$, then in the large system limit $\lim \frac{1}{n} \expec \norm{\beta - \beta^{T^*}}^2$ equals zero.

For finite-sized dictionaries, the test statistic $s^t$ will not be precisely distributed as $\beta + \tau_t Z$. Nevertheless,  computing $x_{t+1}$ numerically  via the state evolution equations \eqref{eq:taut_def} and \eqref{eq:xt_def}  yields an estimate for the expected weighted fraction of correctly decoded sections after each step. 
 Figure \ref{fig:SE_example} shows the trajectory of $(1-x_t)$ vs $t$ for  a SPARC with the parameters specified in the figure. The empirical average of $1- (\beta_0^* \beta^t)/nP$ matches almost exactly with $1-x_{t}$. The theoretical limit $1-\bar{x}_{t}$ given in \eqref{eq:xt1_formula}  is also shown in the figure.

\subsection{Derivation of the AMP}

We describe a min-sum-like message passing algorithm for SPARC decoding from which the AMP decoder is obtained as a first-order approximation.  The aim is to highlight the similarities and differences from the derivation of the AMP in \cite{BayMont11}.   The derivation here is not required for the analysis in the remainder of the paper.

Consider the factor graph for the  model
$y = A\beta + w$,
where $\beta \in \mcb(P_1, \ldots, P_L)$. Each row of $A$ corresponds to a constraint (factor) node, while each column corresponds to a variable node. We use the indices $a,b$ to denote factor nodes, and indices $i,j$ to denote variable nodes. The AMP updates  in \eqref{eq:amp1}--\eqref{eq:amp2} are obtained via  a first-order approximation to the following message passing algorithm that iteratively computes estimates of $\beta$ from $y$.

 For $i \in [N]$, $a \in [n]$, set $\beta^0_{ j \to a}=0$,  and compute the following for $t \geq 0$:
\begin{align}
z^t_{a \to i}  & = y_a - \sum_{j \in [N] \bks i} A_{aj}  \beta^t_{j \to a},  \label{eq:z_update} \\
\beta^{t+1}_{i \to a} & = \eta_i^t \left( {s}_{i \to a}  \right),
\label{eq:beta_update}
\end{align}
where $\eta_i^t (\cdot)$ is the estimation function defined in \eqref{eq:eta_def}, and for $i \in \text{sec}(\ell)$, the entries of the test statistic ${s}_{i \to a} \in \mathbb{R}^M$ are defined as
\be
\begin{split}
({s}_{i \to a})_i &  = \sum_{b \in [n] \bks a}  A_{bi} z^t_{b \to i},  \\
({s}_{i \to a})_j & =  \sum_{b \in [n] }  A_{bj} z^t_{b \to j}, \quad j \in \text{sec}(\ell) \bks i.
\end{split}
\ee

 It is useful to compare  the $\beta$-update in \eqref{eq:beta_update} to the message passing algorithm from which the traditional AMP is derived (cf.\ equation $(1.2)$ in \cite{BayMont11}). In  \cite{BayMont11}, the vector $x$ to be recovered is assumed to be i.i.d.\ across entries; hence we have a single estimating function $\eta^t$ in this case, which for $i \in [N]$, $a \in [n]$,  generates the message
 \be
 x^{t+1}_{i \to a}  = \eta^t \Bigg(\sum_{b \in [n] \bks a}  A_{bi} z^t_{b \to i}\Bigg).
 \label{eq:xia_def}
 \ee
In \eqref{eq:xia_def}, each outgoing message  from the $i$th variable node depends only on its own incoming messages. In contrast, in \eqref{eq:beta_update},  each outgoing message from a variable node depends on the incoming  messages of all the other nodes in the \emph{same section}.  This is due to the constraint that $\beta$ has exactly one non-zero entry in each section, which ensures that entries of $\beta^t$ within each section are dependent, while entries in different sections are mutually independent.

The derivation of the AMP updates  in \eqref{eq:amp1}--\eqref{eq:amp2} starting from the messaging passing algorithm  \eqref{eq:z_update}--\eqref{eq:beta_update} is given in Appendix \ref{app:amp_derive}.
%%%%%%%%%%%%%%%%%%%%%%

\section{Performance of the AMP Decoder} \label{sec:AMP_perf}

Before giving the main result, we state two lemmas that specify the limiting behaviour of the state evolution parameters defined in \eqref{eq:taut_def}, \eqref{eq:xt_def}.  Treating $x_{t+1}$ in \eqref{eq:xt_def} as a function of $\tau$,
we can define
\be
x(\tau) := \sum_{\ell=1}^{L} \frac{P_\ell}{P} \, \expec \left[
\frac{e^{ \frac{\sqrt{n P_\ell}}{\tau} \, (U^{\ell}_1  + \frac{\sqrt{n P_\ell}}{\tau})} }{ e^{ \frac{\sqrt{n P_\ell}}{\tau} \, (U^{\ell}_1  + \frac{\sqrt{n P_\ell}}{\tau})} + \sum_{j=2}^M e^{ \frac{\sqrt{n P_\ell}}{\tau}U^{\ell}_j} } \right],
\label{eq:xt_tau_def}
\ee
where $\{ U^\ell_j\}$ are i.i.d.\ $\sim \mc{N}(0,1)$ for $j \in[M], \ \ell \in [L]$.
\begin{lem}
For any power allocation $\{ P_\ell \}_{\ell=1, \ldots, L}$ that is non-increasing with $\ell$, we have
\be
\bar{x}(\tau) := \lim x(\tau) = \lim \,  \sum_{\ell=1}^{ \lfloor \xi^*(\tau) L \rfloor} \frac{P_\ell}{P}, 
\label{eq:xt_tau_def1}
\ee
where $\xi^*(\tau)$ is the supremum of all $\xi \in (0,1]$ that satisfy
\[ \lim L P_{ \lfloor \xi L \rfloor} >  2(\ln 2) R \, \tau^2. \]
 If $ \lim L P_{ \lfloor \xi L \rfloor} \leq  2(\ln 2) R \, \tau^2$ for all $\xi > 0$, then $\bar{x}(\tau)=0$. (The rate $R$ is measured in bits.)
\label{lem:conv_expec}
\end{lem}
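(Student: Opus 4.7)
The plan is to rewrite each summand as $1/(1+R_\ell)$ for a random sum $R_\ell$ of log-normals, identify a single threshold at $\nu_\ell^2 = 2\ln M$, and then analyze the full sum by separating sections according to whether they sit above or below this threshold in the large-system limit. Setting $\nu_\ell := \sqrt{nP_\ell}/\tau$ and dividing top and bottom of the ratio inside the expectation by $e^{\nu_\ell U_1^\ell + \nu_\ell^2}$, the $\ell$th summand becomes $\tfrac{P_\ell}{P}\expec[Q_\ell]$ with $Q_\ell := 1/(1+R_\ell)$ and $R_\ell := \sum_{j=2}^M \exp(\nu_\ell(U_j^\ell - U_1^\ell) - \nu_\ell^2)$. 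The identity $L\log M = nR$ gives $\nu_\ell^2/(2\ln M) = LP_\ell/(2(\ln 2)R\tau^2) =: \theta_\ell$, so the lemma's condition $\lim LP_{\lfloor \xi L\rfloor} > 2(\ln 2)R\tau^2$ at $\ell=\lfloor \xi L\rfloor$ is exactly $\lim \theta_\ell > 1$.

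For sections with $\liminf \theta_\ell \ge 1+\delta$ (supercritical), I argue $\expec[Q_\ell]\to 1$. Conditioning on $U_1^\ell$, $R_\ell$ is a sum of $M-1$ i.i.d.\ positive random variables with conditional mean $(M-1)\exp(-\nu_\ell U_1^\ell - \nu_\ell^2/2)$, whose logarithm equals $(1-\theta_\ell)\ln M - \nu_\ell U_1^\ell + O(1)$ and tends to $-\infty$ for any fixed $U_1^\ell$ because $\nu_\ell U_1^\ell = O(\sqrt{\ln M})$ while $(1-\theta_\ell)\ln M \le -\delta\ln M$. Markov then gives $R_\ell\to 0$ in probability conditionally, so $Q_\ell\to 1$. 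Two applications of bounded convergence---first over $\{U_j^\ell\}_{j\ge 2}$ at fixed $U_1^\ell$, then over $U_1^\ell \in[-K,K]$ with the tail $|U_1^\ell|>K$ contributing at most $P(|U_1^\ell|>K)$---yield $\expec[Q_\ell]\to 1$.

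For sections with $\limsup \theta_\ell \le 1-\delta$ (subcritical), I argue $\expec[Q_\ell]\to 0$ by singling out a dominant extreme term. On the event $\mc{E}_{M,K} := \{\max_{j\ge 2} U_j^\ell \ge \sqrt{2\ln M}-1\}\cap\{|U_1^\ell|\le K\}$, which has probability $1-o(1)$ as $M\to\infty$ followed by $K\to\infty$ by a standard Gaussian-max tail calculation, one log-normal summand of $R_\ell$ is at least $\exp(\nu_\ell(\sqrt{2\ln M}-1-K)-\nu_\ell^2) = \exp\bigl(\nu_\ell\sqrt{2\ln M}(1-\sqrt{\theta_\ell}) - O(\nu_\ell)\bigr)$. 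Since $1-\sqrt{\theta_\ell}\ge 1-\sqrt{1-\delta}>0$ and $\nu_\ell\sqrt{\ln M}\to\infty$ whenever $\nu_\ell$ stays bounded away from $0$ (and if $\nu_\ell\to 0$ then already $R_\ell$ is of order $M$ unconditionally), this exponent tends to $+\infty$ uniformly across subcritical $\ell$. Hence $Q_\ell\to 0$ in probability and bounded convergence delivers $\expec[Q_\ell]\to 0$, uniformly in $\ell$ because $P(\mc{E}_{M,K}^c)$ depends only on $M$ and $K$.

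Finally I assemble the sum. Since $\{P_\ell\}$ is non-increasing, so is $\theta_\ell$, and for each $\epsilon>0$ I partition $x(\tau)$ into three blocks: $\ell\le(\xi^*-\epsilon)L$ (strictly supercritical, total weight $\to \lim \sum_{\ell\le(\xi^*-\epsilon)L}P_\ell/P$), $\ell\ge(\xi^*+\epsilon)L$ (strictly subcritical, whose total contribution is at most $(\max\expec[Q_\ell])\cdot 1 \to 0$ by uniformity of the subcritical bound), and the transition sliver $|\ell/L - \xi^*|<\epsilon$, whose weighted mass is at most $\int_{|\xi-\xi^*|<\epsilon} h(\xi)\,d\xi/P$ in the limit, using the profile $h(\xi):=\lim LP_{\lfloor \xi L\rfloor}$, and vanishes as $\epsilon\downarrow 0$ since $\int_0^1 h = P$. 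Sending $\epsilon\downarrow 0$ gives $\bar x(\tau) = \lim\sum_{\ell\le \xi^*(\tau) L}P_\ell/P$; the degenerate case $\xi^*(\tau)=0$ corresponds to the same split with the first block absent, producing $\bar x(\tau)=0$. The main obstacle is the subcritical estimate: $\expec[R_\ell]$ is itself of order $M$, so a naive Markov bound on $R_\ell$ is useless. The resolution is to observe that $\expec[1/(1+R_\ell)]$ is driven to zero by a single very large log-normal summand, which exists with probability approaching $1$ exactly because the Gaussian extreme value $\sqrt{2\ln M}$ beats $\nu_\ell$ in the subcritical regime.
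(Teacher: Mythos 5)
Your proof is correct and follows essentially the same route as the paper's. You use the identical decomposition into per-section summands $\mc{E}_\ell(\tau)$, the same rescaled threshold ($\theta_\ell>1$, matching the paper's $\nu_{\lfloor\xi L\rfloor}>2$), the same extreme-value observation that $\max_{j\geq 2}U_j\approx\sqrt{2\ln M}$ drives the subcritical bound, and the same Riemann-sum assembly; the only cosmetic differences are that you use Markov plus dominated convergence where the paper applies Jensen's inequality in the supercritical case, and you spell out the three-block assembly (supercritical, subcritical, transition sliver) that the paper compresses into a one-line remark plus a footnote.
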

\begin{proof} In Appendix \ref{app:conv_exp}. \end{proof}

Since the entries of $A$ are i.i.d., the assumption  that $\{ P_\ell \}$ is non-decreasing with $\ell$  can be made without loss of generality. 
Recalling that ${x}_{t+1}$ is the expected power-weighted fraction of correctly decoded sections after step $(t+1)$,   for any power allocation $\{P_\ell \}$, Lemma \ref{lem:conv_expec}  may be interpreted as follows: in the large system limit,  sections $\ell$ such that $\ell \leq  \lfloor \xi^*(\bar{\tau}_t) L \rfloor$ will be correctly decoded in step
$(t+1)$. All sections satisfying this condition will be decodable in step $(t+1)$ (i.e., will have most of the posterior probability mass on the correct  term); conversely all sections whose power falls below the threshold will not be decodable in this step.

The performance of the AMP decoder will be analyzed with the following exponentially decaying power allocation:
 \be  P_\ell = P \cdot  \frac{2^{2\mc{C}/L} -1}{1- 2^{-2\mc{C}}} \cdot 2^{-2\mc{C}\ell/L}, \quad  \ell \in [L].  \label{eq:exp_power_alloc} \ee
For the  power allocation in \eqref{eq:exp_power_alloc}, we have for $\xi \in (0,1]$
\be
\lim L P_{ \lfloor \xi L \rfloor}  =  \sigma^2 (1+\snr)^{1-\xi} \ln(1+\snr).
\label{eq:cell}
\ee

\begin{lem}
For the power allocation $\{ P_\ell \}$ given in \eqref{eq:exp_power_alloc}, we have for $t=0,1,\ldots$:
\begin{align}
\bar{x}_{t} & := \lim x_{t} = \frac{ (1+ \snr) - (1+ \snr)^{1- \xi_{t-1}}}{\snr} \label{eq:limxt1}, \\
\bar{\tau}^2_{t}&  := \lim \tau^2_{t}  = \sigma^2 + P(1 - \bar{x}_t) = \sigma^2\left( 1 + \snr \right)^{1-\xi_{t-1}} \label{eq:limtaut1}
\end{align}
where $\xi_{-1}=0$, and for $t \geq 0$,
\be
\begin{split}
 \xi_{t} & = \min \left\{ \left(\frac{1}{2\mc{C}}\log\left(\frac{\mc{C} }{R}\right) +  \xi_{t-1}\right), \ 1  \right\}.
\end{split}
\label{eq:lim_alph}
\ee
\label{lem:lim_xt_taut}
\end{lem}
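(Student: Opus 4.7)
The plan is to prove the lemma by induction on $t$, invoking Lemma \ref{lem:conv_expec} at each step to connect $\bar{\tau}_t^2$ to the threshold $\xi_t$ and then evaluating the resulting geometric sum in closed form. The key identity driving everything is $2\mc{C} = \log(1 + \snr)$, which makes the exponential allocation line up neatly with the state-evolution recursion. Throughout, it will be useful to rewrite $P_\ell$ in \eqref{eq:exp_power_alloc} as $P_\ell = \sigma^2 (2^{2\mc{C}/L}-1)\,(1+\snr)^{-\ell/L}$, since $P/(1-2^{-2\mc{C}}) = P(1+\snr)/\snr = \sigma^2 (1+\snr)$.

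For the base case $t=0$, plugging $\xi_{-1}=0$ into \eqref{eq:limxt1}–\eqref{eq:limtaut1} gives $\bar{x}_0 = 0$ and $\bar{\tau}_0^2 = \sigma^2(1+\snr) = \sigma^2 + P$, which matches the initial conditions set in \eqref{eq:taut_def}. For the inductive step, assume that $\bar{\tau}_t^2 = \sigma^2 (1+\snr)^{1-\xi_{t-1}}$. Apply Lemma \ref{lem:conv_expec} at $\tau = \bar{\tau}_t$: using the formula $\lim L P_{\lfloor \xi L \rfloor} = \sigma^2 (1+\snr)^{1-\xi} \ln(1+\snr)$ in \eqref{eq:cell} and the fact that $\ln(1+\snr) = 2\mc{C} \ln 2$, the defining inequality $\lim L P_{\lfloor \xi L \rfloor} > 2(\ln 2) R \bar{\tau}_t^2$ becomes $\sigma^2 (1+\snr)^{1-\xi} \mc{C} > R \bar{\tau}_t^2$. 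Substituting $\bar{\tau}_t^2 = \sigma^2 (1+\snr)^{1-\xi_{t-1}}$, taking $\log$, and solving for $\xi$ yields $\xi < \xi_{t-1} + \tfrac{1}{2\mc{C}}\log(\mc{C}/R)$. Hence the supremum $\xi^*(\bar{\tau}_t)$, capped at $1$, equals $\xi_t$ as defined in \eqref{eq:lim_alph}.

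Next, evaluate $\bar{x}_{t+1} = \lim \sum_{\ell=1}^{\lfloor \xi_t L\rfloor} P_\ell/P$ for the exponential allocation. This is a finite geometric sum with ratio $2^{-2\mc{C}/L}$, and a direct computation gives
\ben
\sum_{\ell=1}^{\lfloor \xi_t L \rfloor} \frac{P_\ell}{P} = \frac{1 - 2^{-2\mc{C} \lfloor \xi_t L \rfloor/L}}{1 - 2^{-2\mc{C}}}.
\een
Sending $L \to \infty$, the numerator converges to $1 - (1+\snr)^{-\xi_t}$, and multiplying top and bottom by $(1+\snr)$ yields $\bar{x}_{t+1} = [(1+\snr) - (1+\snr)^{1-\xi_t}]/\snr$, which is exactly \eqref{eq:limxt1}. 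Finally, using \eqref{eq:taut_def} and $P = \sigma^2 \snr$,
\ben
\bar{\tau}_{t+1}^2 = \sigma^2 + P(1-\bar{x}_{t+1}) = \sigma^2 + \sigma^2 \big( (1+\snr)^{1-\xi_t} - 1 \big) = \sigma^2 (1+\snr)^{1-\xi_t},
\een
completing the induction.

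The only genuinely non-routine step is identifying the threshold $\xi^*(\bar{\tau}_t)$ from Lemma \ref{lem:conv_expec}; everything else is algebraic manipulation. A secondary subtlety is the saturation case $\xi_{t-1} + \tfrac{1}{2\mc{C}}\log(\mc{C}/R) \geq 1$, in which $\xi_t = 1$; here one should verify that summing $P_\ell/P$ over all $\ell \in [L]$ gives $\bar{x}_{t+1} = 1$ in the limit, and correspondingly $\bar{\tau}_{t+1}^2 = \sigma^2$, consistent with \eqref{eq:limxt1}–\eqref{eq:limtaut1} at $\xi_t = 1$. The boundary rounding from $\lfloor \xi_t L \rfloor$ to $\xi_t L$ is harmless because the $\ell/L$ exponent is continuous in $\xi$.
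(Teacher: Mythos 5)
Your proof is correct and follows essentially the same route as the paper's Appendix C: induction on $t$, invoking Lemma \ref{lem:conv_expec} with \eqref{eq:cell} to locate the threshold $\xi_t = \xi_{t-1} + \tfrac{1}{2\mc{C}}\log(\mc{C}/R)$ (capped at $1$), then summing the geometric series to get $\bar{x}_{t+1}$ and hence $\bar{\tau}^2_{t+1}$. One minor slip: your preliminary rewriting of $P_\ell$ should read $P_\ell = \sigma^2(2^{2\mc{C}/L}-1)(1+\snr)^{1-\ell/L}$ rather than $(1+\snr)^{-\ell/L}$, but you never actually use that form in the subsequent computation, so the argument is unaffected.
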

\begin{proof} In Appendix \ref{app:lim_xt_taut}. \end{proof}

We observe from Lemma \ref{lem:lim_xt_taut} that $\xi_t$ increases  in each step  by
$ \tfrac{1}{2\mc{C}}\log\left(\tfrac{\mc{C} }{R}\right)$ until it  equals $1$. Also note that $\bar{\tau}^2_t$ strictly decreases with 
$t$ until it reaches
$\sigma^2$ (when 
$\xi_t$ reaches $1$), after which it remains constant. Thus  the number of steps until $\xi_t$ reaches one (i.e., $\bar{\tau}^2_t$ stops decreasing) equals
\be
 T^*  = \left\lceil \frac{2 \mc{C}}{\log(\mc{C}/R)} \right\rceil.
 \label{eq:Tstar_def}
 \ee

Our main result is proved for the following AMP decoder, which uses the asymptotic values $\{\bar{\tau}^2_t\}$ defined in Lemma \ref{lem:lim_xt_taut}, and runs for exactly $T^*$ steps.  Set $\beta^0 =0$ and compute
\begin{align}
z^t & =y - A\beta^t + \frac{z^{t-1}}{\bar{\tau}^2_{t-1}}\left( P - \frac{\norm{\beta^t}^2}{n} \right), \label{eq:asymp_amp1}\\
\beta^{t+1}_i & = \eta^t_{i}( \beta^t + A^*z^t), \quad  \text{ for } i \in [N] \label{eq:asymp_amp2}
\end{align}
where for $i \in \text{sec}(\ell), \   \ell  \in [L]$, 
\be
\eta^t_i(s) = \sqrt{nP_\ell}\,  \frac{ e^{s_i \sqrt{n P_\ell}/ \bar{\tau}^2_t} }
{\sum_{j \in \text{sec}(\ell)} \, e^{ s_j \sqrt{n P_\ell}/\bar{\tau}^2_t }}.
\label{eq:asymp_eta_def}
\ee
The only difference from the earlier decoder  described in \eqref{eq:taut_def}--\eqref{eq:eta_def} is that we now use the limiting values $\{ \bar{\tau}^2_t \}$ from  Lemma \ref{lem:lim_xt_taut}  instead of $\{ \tau^2_t \}$.  The algorithm terminates after generating $\beta^{T^*}$, where $T^*$ is defined in \eqref{eq:Tstar_def}. 
The decoded codeword
$\hat{\beta} \in \mcb(P_1,\ldots, P_L)$ is obtained  by setting the maximum of $\beta^{T^*}$  in each section $\ell$ to $\sqrt{nP_{\ell}}$ and the remaining entries to $0$.  

The \emph{section error rate}  of a decoder for a SPARC $\mc{S}$ is defined as
\be
\mc{E}_{sec}(\mc{S}) := \frac{1}{L} \sum_{\ell =1}^{L}  \mathbf{1}{\{ \hat{\beta}_\ell \neq \beta_{0_\ell} \}}.
\label{eq:sec_err_def}
\ee

\begin{thm}
Fix any rate $R < \mc{C}$, and $\textsf{a} >0$. Consider a sequence of rate $R$ SPARCs $\{ \mc{S}_n \}$ indexed by block length $n$, with design matrix parameters $L$ and $M=L^{\textsf{a}}$ determined according to  \eqref{eq:llogl_nR}, and an exponentially decaying power allocation given by \eqref{eq:exp_power_alloc}. Then  the section error rate of the AMP decoder (described in  \eqref{eq:asymp_amp1}--\eqref{eq:asymp_eta_def}, and run for $T^*$ steps) converges to zero almost surely, i.e., for any $\e >0$,
\be \lim_{n_0 \to \infty} P\left(  \mc{E}_{sec}(\mc{S}_n)   <  \e, \ \forall n \geq n_0 \right) = 1. \label{eq:pezero} \ee
\label{thm:main_amp_perf}
\end{thm}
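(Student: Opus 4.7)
The plan is to follow the conditioning technique pioneered by Bolthausen \cite{Bolt12} and developed by Bayati and Montanari \cite{BayMont11}, adapted to the SPARC setting where the undersampling ratio $n/N$ tends to zero and the denoiser $\eta^t$ is non-separable within sections. The strategy has three stages.

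First, I would establish a master concentration lemma that characterizes the AMP iterates by induction on $t \in \{0,1,\ldots,T^*\}$. The goal is to show that the test statistic $s^t = \beta^t + A^*z^t$ behaves like $\beta_0 + \bar{\tau}_t Z$ with $Z \sim \mathcal{N}(0, I_N)$ independent of $\beta_0$, in the precise sense that for a suitable class of test functions $\phi$, the empirical average $\frac{1}{N}\sum_{i=1}^N \phi(s^t_i, \beta_{0,i})$ concentrates on $\expec[\phi(\beta_1 + \bar{\tau}_t Z_1, \beta_1)]$ with quantitative bounds on the failure probability. The argument conditions on the $\sigma$-algebra generated by the past iterates $\beta^0,\ldots,\beta^t,z^0,\ldots,z^{t-1}$, and uses the rotational invariance of the Gaussian design $A$ to decompose it into a deterministic projection consistent with the observed past plus an independent Gaussian component that produces the fresh randomness at step $t$. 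The Onsager correction in \eqref{eq:asymp_amp1} is precisely what kills the cross-iteration bias, so that the fresh part drives the limiting Gaussian behavior.

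Second, I would combine the master lemma with Lemma \ref{lem:lim_xt_taut} to bound the section error rate. Since $\bar{\tau}_{T^*}^2 = \sigma^2$, the restriction of $s^{T^*}$ to section $\ell$ is well-approximated by $\sqrt{nP_\ell}\,\mathbf{e}_{i_\ell} + \sigma Z_\ell$, where $i_\ell$ is the true non-zero index and $Z_\ell \sim \mathcal{N}(0, I_M)$. Hard-decision decoding fails in section $\ell$ only when the maximum of $M-1$ standard Gaussians exceeds $\sqrt{nP_\ell}/\sigma + Z_{\ell,i_\ell}$; under the exponentially decaying allocation \eqref{eq:exp_power_alloc}, a standard extreme-value estimate shows this probability vanishes in the large system limit whenever $R < \mc{C}$. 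Applying the master lemma to a pseudo-Lipschitz proxy for the section-error indicator yields $\expec\,\mc{E}_{sec}(\mc{S}_n) \to 0$. To upgrade this to the almost-sure statement \eqref{eq:pezero}, I would use the quantitative tail bound from the master lemma together with a Borel--Cantelli argument: if the deviation probabilities are summable along the sequence $n$, then $\{\mc{E}_{sec}(\mc{S}_n) \geq \e\}$ occurs for only finitely many $n$ with probability one.

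The main obstacle is the master concentration lemma. In \cite{BayMont11} the undersampling ratio $n/N$ is a fixed constant, whereas here $N = ML = L^{\mathsf{a}+1}$ grows super-linearly in $n$, so $n/N \to 0$. This causes the standard concentration bounds for quadratic forms and inner products involving $A$ to degrade, and one must use sharper inequalities that exploit the precise scaling of $\eta^t$ and the per-section structure. A related difficulty is that $\eta^t_i$ is non-separable (it depends on all $M$ entries of its section) and is not Lipschitz with a dimension-free constant, so the induction must track how the section-wise softmax behaves under the perturbations coming from the conditioning step. Showing that the concentration persists uniformly across the finite horizon $T^*$, despite this non-separability and the shrinking sampling ratio, is the technical crux of the proof.
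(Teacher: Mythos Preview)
Your high-level plan is right: the paper does follow the Bolthausen--Bayati--Montanari conditioning scheme, and your identification of the two main obstacles (the vanishing ratio $n/N$ and the non-separable section-wise softmax) matches exactly what the paper flags in Section~\ref{subsec:lem_comments}. The paper's master lemma (Lemma~\ref{lem:main_lem}) is proved by induction, tracks the deviation terms $\Delta_{t+1,t}$ and $\Delta_{t,t}$ from the conditional-distribution representation, and crucially establishes that all the inner-product limits hold with an explicit rate $n^{-\delta}$; this rate is what forces $\max_j |[\Delta_{t+1,t}]_j| = o(n^{-\delta}\sqrt{\log M})$, which is the precise statement needed to push the Gaussian approximation through the section-wise functions. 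You gesture at ``sharper inequalities'' but do not name this mechanism; it is the linchpin.

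Where your route diverges is the passage from the master lemma to the section error rate, and here the paper's argument is both different and simpler. You propose to analyze the hard-decision step directly via an extreme-value bound on $s^{T^*}$ (note: the decoder actually thresholds $\beta^{T^*}=\eta^{T^*-1}(s^{T^*-1})$, so the relevant variance is $\bar\tau_{T^*-1}^2$, not $\sigma^2$), then approximate the section-error indicator by a pseudo-Lipschitz surrogate, and finally upgrade to almost-sure convergence via summable tail bounds and Borel--Cantelli. The paper avoids all three of these steps. Instead, Lemma~\ref{lem:main_lem}(e) gives directly that $\tfrac{1}{n}\|\beta^{T^*}-\beta_0\|^2 \to P(1-\bar x_{T^*})=0$ almost surely (the a.s.\ convergence comes from the strong law for triangular arrays, not from concentration-plus-Borel--Cantelli). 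Then a one-line deterministic inequality, $\{\hat\beta_\ell\neq\beta_{0,\ell}\}\Rightarrow \|\beta^{T^*}_\ell-\beta_{0,\ell}\|^2\ge nP_\ell/4$, together with $LP_L\ge \sigma^2\ln(1+\snr)$, yields $\{\mc{E}_{sec}>\e\}\subset\{\tfrac{1}{n}\|\beta^{T^*}-\beta_0\|^2\ge \tfrac{\e\sigma^2\ln(1+\snr)}{4}\}$, and the theorem follows immediately. Your extreme-value route is not wrong, but it requires controlling the Gaussian approximation uniformly over sections at the level of argmax events, which is more delicate than controlling a single squared-error functional; and your Borel--Cantelli step would need quantitative concentration bounds that the paper never proves (and does not need).
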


\emph{Remarks}:
\begin{enumerate}
\item The probability measure in \eqref{eq:pezero} is over the Gaussian design matrix ${A}$, the Gaussian channel noise $w$, and the  the message $\beta$ distributed uniformly in $\mcb(P_1, \ldots, P_L)$.

\item As in \cite{AntonyFast}, we can construct a concatenated code with  an inner SPARC of  rate $R$ and an outer Reed-Solomon (RS)  code of rate 
$(1-2\e)$. If $M$ is a prime power, a RS code defined over a finite field of order $M$  defines a one-to-one mapping  between a symbol of the RS codeword and a section of the SPARC.  
The concatenated code has rate $R(1-2\e)$, and decoding complexity that is polynomial in $n$. The decoded message $\hat{\beta}$ equals $\beta$ whenever the section error rate of the SPARC is less than $\e$.   Thus for any $\e >0$,  the theorem guarantees that the probability of \emph{message} decoding error for a sequence of rate $R(1-2\e)$ SPARC-RS concatenated codes  will tend to zero, i.e.,
$ \lim P(\hat{\beta} \neq \beta) =0$.
\end{enumerate}

The proof  of Theorem \ref{thm:main_amp_perf}  is given in Section \ref{sec:amp_proof}.
%%%%%%

\subsection{Empirical Performance at Finite Blocklengths} \label{sec:ex_results}

In this section, we make two modifications to the SPARC construction used  in
Theorem~\ref{thm:main_amp_perf} to improve the empirical performance at finite
block lengths. First, we introduce a power allocation that yields several
orders of magnitude improvement in section error rate for rates $R$ that are
not very close to the capacity $\mc{C}$. Second, we use a Hadamard design
matrix (instead of Gaussian), which facilitates a decoder with  $O(N \log N)$
running time and a memory requirement of $O(N)$. In comparsion, with a Gaussian
design matrix  the running time and memory of the AMP decoder are both
$O(nN)$.  We mention that  the recent work \cite{BarbSchKrz15} considers  an
AMP decoder with a spatially coupled Hadamard-based design matrix. In our case, the Hadamard
design matrix  is not spatially coupled, rather it is the modified  power
allocation that yields low section error rates.

\subsubsection*{Modified Power Allocation}

We define a power allocation characterized by two parameters $a,f$.
For $f \in [0,1] $, let
\be
P_\ell = \begin{cases}
        \kappa \cdot 2^{-2a\mc{C} \ell/L}, & 1 \leq \ell \leq fL\\
        \kappa \cdot 2^{-2a\mc{C} f},   & fL+1 \leq \ell \leq L
\end{cases}
\label{eq:mixed_power_alloc}\ee
where 
\[
\kappa = \frac{P\left(2^{2a\mc{C} /L}-1\right)}{1-2^{-2a\mc{C} f}\left(1-L(1-f)(2^{2a\mc{C} /L}-1)\right)}.
 \]
The normalizing constant $\kappa$ ensures  that the total power across sections
is $P$.  For intuition, first assume that $f=1$. Then
\eqref{eq:mixed_power_alloc} implies that $P_\ell \propto 2^{-a2\mc{C} \ell/L}$
for $\ell \in [L]$. Setting $a=1$ recovers the original power allocation of
\eqref{eq:exp_power_alloc}, while $a=0$ allocates $\frac{P}{L}$ to each section.
Increasing $a$ increases the power allocated to the initial sections which
makes them more likely to decode correctly, which in turn helps by decreasing
the effective noise variance $\bar{\tau}^2_t$ in subsequent AMP iterations.
However, if $a$ is too large, the final sections may have too little power to
decode correctly.  
 
Hence we want the parameter $a$ to be large enough to ensure that the AMP gets
started on the right track, but not much larger.  This intuition can be made
precise in the large system limit using Lemma~\ref{lem:conv_expec}:  recall that for a section $\ell$ to be correctly decoded in step
$(t+1)$,  the limit of  $L P_\ell$ must exceed a threshold proportional to $R\bar{\tau}^2_t$. For rates close to $\mc{C}$, we need $a$ to be close to $1$ for the initial sections to cross this threshold and get decoding started
correctly. On the other hand, for rates such as $R=0.6 \mc{C}$,  $a=1$ allocates more power than necessary to the initial sections, leading to poor decoding performance in the final sections.  

In addition, we found that the section error rate can be further improved by \emph{flattening} the power allocation in the final sections.  For a given $a$, \eqref{eq:mixed_power_alloc} has an exponential power allocation until section $fL$, and constant power for the remaining $(1-f)L$ sections.  The allocation in \eqref{eq:mixed_power_alloc}  is continuous, i.e.\ each section in the flat
part is allocated the same power as the final section in the exponential part. Flattening boosts the power given to the final sections compared to an exponentially decaying allocation. The two parameters $(a,f)$ let us trade-off between the conflicting objectives of assigning enough power to the initial sections  and ensuring that the final sections have enough power to be decoded
correctly. 

\textbf{The constants $\bar{\tau}^2_t$ and $\bar{x}_t$}:
Analogous to Lemma~\ref{lem:lim_xt_taut}, the large system limit values of the state evolution parameters for the power allocation in
\eqref{eq:mixed_power_alloc} can be obtained from Lemma~\ref{lem:conv_expec}. Set $\bar{\tau}_0^2= \sigma^2 + P$, and for $t \geq 0$ compute
 \begin{align}
    \bar{\xi}_t &= \min \Big\{ \frac{1}{2a\mc{C}} \log \Big(  \frac { a\mc{C} P\,  2^{2a\mc{C} f}} {R\bar{\tau}_t^2[ 2^{2a\mc{C} f} + (1-f)2a\mc{C}\ln 2 -1]} \Big),  \nonumber \\
 & \qquad  \qquad  1 \Big\}, \label{eq:contpa_xit} \\ \bar{x}_{t+1}&=\frac
        {1-2^{-2a\mc{C}\bar{\xi}_t}} {1+2^{-2a\mc{C} f}((1-f) 2 a\mc{C} \ln 2-1)}, \\
    \bar{\tau}^2_{t+1} & =\sigma^2+P(1-\bar{x}_{t+1}).
\label{eq:contpa_limit_sys} \end{align} We note that setting $a=f=1$ in
\eqref{eq:contpa_xit}--\eqref{eq:contpa_limit_sys} recovers the limiting state
evolution parameters for the exponential power allocation, which were obtained
in Lemma~\ref{lem:lim_xt_taut}.
 
\begin{figure}[t] \centering
    \includegraphics[width=3.5in, height=2.5in]{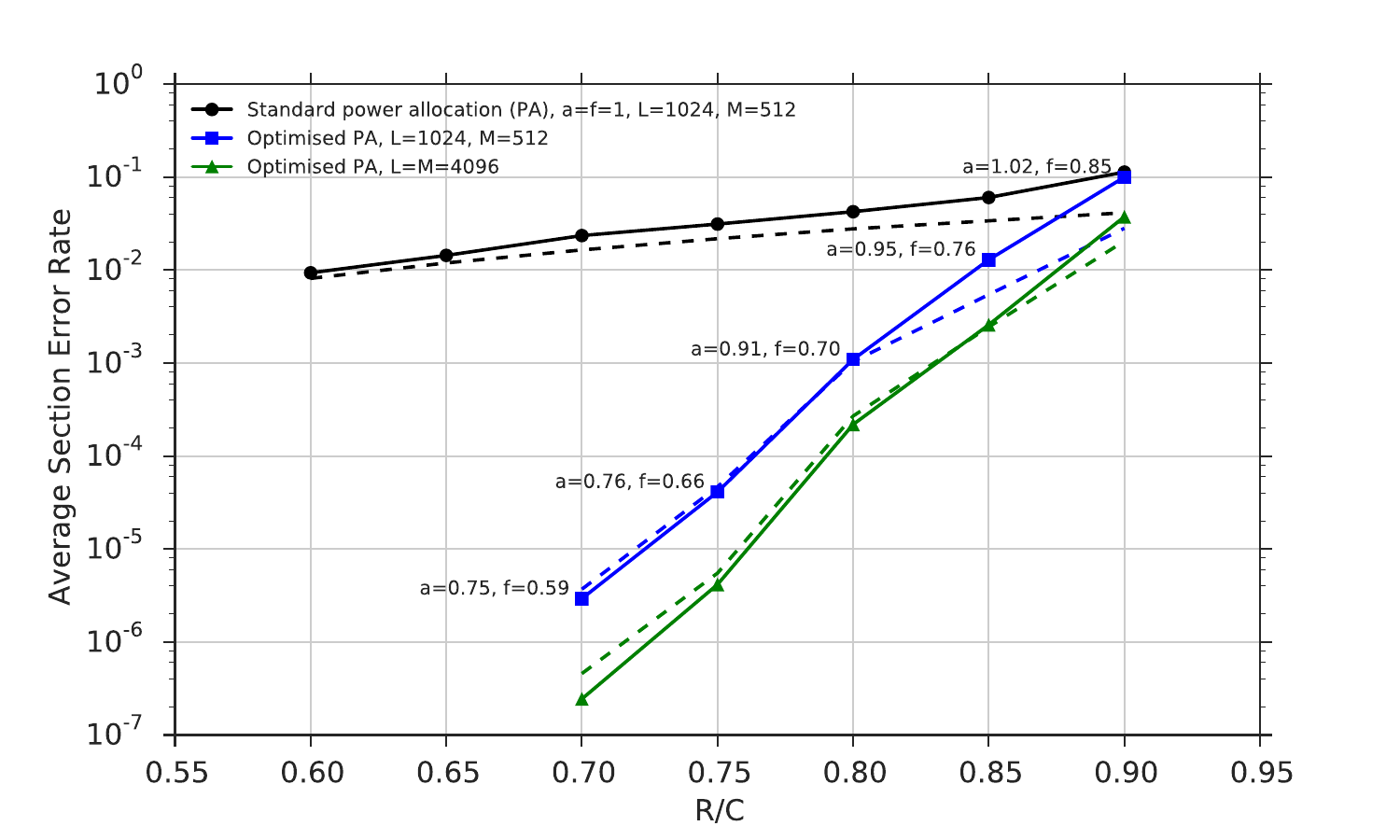}
    \caption{\small{Section error rate vs $R/\mc{C}$ at $\snr=15, \mc{C}=2$ bits.
        The top solid curve shows the average section error rate of the AMP
        over $1000$ trials with $P_\ell \propto 2^{-2\mc{C}\ell/L}$.
        The  solid curve in the middle shows the section error rate using the
        power allocation in \eqref{eq:mixed_power_alloc} with the $(a,f)$  values shown.
        The SPARC parameters for both these curves are $M=512, L=1024$.
        The bottom solid  curve shows the section error rate with the same
        $(a,f)$ values, but $L=M=4096$.  In all cases, the dashed lines show the state evolution prediction \eqref{eq:vt_def} of the section error rate.    Missing points at $R=0.6\mc{C}$ and $0.65\mc{C}$ indicate  no errors observed over $1000$ trials.}}
    \vspace{-7pt}
\label{fig:amp_sec_err_rate} 
\end{figure}

\textbf{Experimental Results}: Fig.~\ref{fig:amp_sec_err_rate} shows the
performance of the AMP at different rates.  Given the values of $M, L$, the
block length $n$ is determined by the rate $R$ according to \eqref{eq:ml_nR}.
For example, with $M=512, L=1024$, we have 
$n=7680$ for $R=0.6 \mc{C}$, and  $n=5120$ for $R=0.9\mc{C}$. The solid curve
at the top  shows the average section error rate of the AMP (over $1000$ runs)
with an exponentially decaying power allocation where $P_\ell \propto
2^{-2\mc{C} \ell/L}$. The solid curve in the middle shows the average section
error rate  with the power allocation in \eqref{eq:mixed_power_alloc}, with
values of $(a,f)$ obtained via a rough optimization around an initial guess of
$a=f=R/\mc{C}$.  The solid curve at the bottom shows the average section error
rate with $L=M=4096$, and the power allocation in \eqref{eq:mixed_power_alloc}
with same $(a,f)$ values as before.

In all cases, the decoder described in
\eqref{eq:asymp_amp1}--\eqref{eq:asymp_eta_def} was used.  The constants
$\{\bar{\tau}^2_t\}$ required by the decoder are  specified by
Lemma~\ref{lem:lim_xt_taut} for the exponential allocation, and by
\eqref{eq:contpa_xit}--\eqref{eq:contpa_limit_sys}  for the modified
allocation.  The simulations for Fig.~\ref{fig:amp_sec_err_rate}  were run
using Hadamard design matrices, which are described below.

Across trials, we observed good concentration around the average section error
rates. For example, with $M=512, L=1024$ and $R=0.75 \mc{C}$, $958$ of the
$1000$ trials had zero errors, and the remaining $42$ had only one section in
error, for an average section error rate of $4.10\times10^{-5}$. Further, all
the section errors were in the flat part of the power allocation, as expected.
Increasing $L$ tends to improve this concentration, while increasing $M$
reduces the average section error rate. This improvement in the section error
rate is illustrated by the bottom curve in Fig.~\ref{fig:amp_sec_err_rate}.

The dashed curves in Fig.~\ref{fig:amp_sec_err_rate} show the
section error rate predictions for the two power allocations obtained from
state evolution. Recall from Section~\ref{subsec:sevol} that $\bar{x}_{t+1}$ in
\eqref{eq:xt_def} can be interpreted as the expectation of the (power-weighted)
fraction of correctly decoded sections after step $t+1$. Using arguments
similar to Proposition~\ref{prop:se_cons}, we can show that under the
assumption that the test statistic $s^t \sim \beta + \bar{\tau}_t Z$,  the
\emph{non-weighted} expectation of the correctly decoded sections after step
$(t+1)$ is given by

\be
\begin{split} 
& \frac{1}{nP} \sum_{\ell=1}^L \, \frac{P/L}{P_\ell} \, \expec[\beta^*_\ell \beta^{t+1}_\ell]   \\
& = \sum_{\ell=1}^{L} \,  \frac{1}{L} \,  \expec \left[ \frac{e^{\frac{\sqrt{n P_\ell}}{\tau_t} \, (U^{\ell}_1  + \frac{\sqrt{n P_\ell}}{\tau_t})} }
{e^{\frac{\sqrt{n P_\ell}}{\tau_t} \, (U^{\ell}_1  + \frac{\sqrt{n P_\ell}}{\tau_t})} + \sum_{j=2}^M e^{\frac{\sqrt{n P_\ell}}{\tau_t}U^{\ell}_j} } \right] := {v}_{t+1}. 
\end{split}
 \label{eq:vt_def}
\ee

Thus  $v_{T^*}$ is an estimate of the section error rate. We observe that  the
empirical  section error rate in Fig.~\ref{fig:amp_sec_err_rate} is close to
the $v_{T^*}$, especially for the larger dictionary.

It is  evident  that judicious power allocation can yield significant
improvements in section error rates. An interesting open question is to find
good rules of thumb for the power allocation as a function of rate and $\snr$.
For any given allocation,  one can determine  whether the section error rate
goes to zero in the large system limit. Indeed, using
Lemma~\ref{lem:conv_expec} with $\bar{\tau}^2_0 = \sigma^2+P$,  we see that
those sections $\ell$ for which the indicator in \eqref{eq:xt_tau_def1} is
positive are decoded in the first step; this also gives the value of
$\bar{x}_1$. Then with $\bar{\tau}^2_1 = \sigma^2 + P(1 -\bar{x}_1)$ we can
determine which sections are decoded in step $2$, and so on.  The section error
rate goes to zero if and only if $\bar{x}_{T^*} =1$. The proof of this is
essentially identical to that of Theorem~\ref{thm:main_amp_perf}. 

Thus Lemma~\ref{lem:conv_expec} gives a straightforward way to check whether a
power allocation is good in the large system limit. This can provide some
guidance for the finite length case, but the challenge is to choose between
several power allocations for which $\bar{x}_{T^*} =1$. One way to compare
these allocations may be via the state evolution prediction $v_{T^*}$
from \eqref{eq:vt_def},  but this needs additional investigation.

\subsubsection*{Reducing the decoding complexity using Hadamard Dictionaries} \label{subsec:hadamard}

The computational complexity of the decoder in
\eqref{eq:asymp_amp1}--\eqref{eq:asymp_eta_def} is determined by the
matrix-vector multiplications $A\beta^t$ and $A^*z^t$, whose running time is
$O(nN)$ if performed in the straightforward way. The remaining operations are
$O(N)$. As the number of iterations is finite, the decoding complexity scales
linearly with the size of the design matrix.  With a Gaussian design matrix,
the memory requirement  is also proportional to  $nN$ as the entire matrix has
to be stored.  This is the major bottleneck in scaling the AMP decoder to work
with large design matrices.  

To reduce the decoding complexity and the  required memory, we generate $A$
from a Hadamard matrix as follows. Let $N=ML$ be a power of $2$, and let $m=
\log_2 N$. With $H_0=1$, recursively define the $2^m \times 2^m$  matrix $H_{m}$ as

\ben
H_{m} = \begin{pmatrix} H_{m-1} & H_{m-1} \\ H_{m-1} & -H_{m-1}   \end{pmatrix}.
\label{eq:had_def}
\een

The design matrix $A$ is generated by picking $n$ rows uniformly at random from
$H_m$ and scaling the resulting matrix by $\frac{1}{\sqrt{n}}$ so that each
column has norm one.\footnote{Strictly speaking, we generate $A$ by uniformly
sampling from all rows of $H_m$ \emph{except} the first. This is because
the first row is all ones, while the others have an equal number of $1$s and
$-1$s. } Thus the $k$th element of the codeword is  $(A\beta)_k =  \sum_{j \in
[N]} A_{kj} \beta_j$, where $A_{kj} \in \{\tfrac{1}{\sqrt{n}}, \tfrac{-1}{\sqrt{n}} \}$ for $k\in[n], j \in [N]$.

For $A$ generated as above, the matrix-vector multiplications $A\beta^t$ and
$A^*z^t$ can be performed efficiently using the \emph{fast Walsh-Hadamard
Transform} (WHT) \cite{Shanks69}. Let $\mc{S}_n$ denote the set of $n$ indices
of the rows of $H_m$ that constitute $A$.  To compute $A\beta^t$, compute the
length-$N$ WHT of $\beta^t$ and keep only the elements indexed by $\mc{S}_n$.
To compute $A^* z^t$, first extend $z^t \in \mathbb{R}^n$ to a vector
$\tilde{z}^t \in \mathbb{R}^N$ by embedding $z^t$ in the indices corresponding to
$\mc{S}_n$, and setting the remaining entries to zero.  Since $H_m$ is
symmetric, the length-$N$ WHT of $\tilde{z}^t$ equals $A^*z^t$.  

The fast WHT has $O(N \log N)$ running time. Further, we do not need to store
$A$; only the vectors $\beta^t$ and $z^t$ need to be kept in
memory. Hence the running time and memory requirement of the decoder are now
$O(N\log N)$ and $O(N)$, respectively.  These substantial improvements allow
the use of much larger dictionaries (e.g.,  $M=L=4096$) for which AMP decoding
with Gaussian matrices is infeasible with standard computing resources. For
given values of $n,M,L$ and power allocation $\{P_\ell\}$, we found the
empirical performance with  a Hadamard dictionary to be very similar to the
Gaussian case. 

\section{Proof of Theorem \ref{thm:main_amp_perf}} \label{sec:amp_proof}

The main ingredients in the proof of Theorem \ref{thm:main_amp_perf} are two technical lemmas (Lemma \ref{lem:hb_cond} and Lemma \ref{lem:main_lem}).  We first lay down the notation that will be used in the proof.  We then state the two lemmas and use them to prove Theorem \ref{thm:main_amp_perf}.

\subsection{Definitions and Notation for the Proof}
For consistency and ease of comparison, we use notation similar to  \cite{BayMont11}.  Define the following column vectors recursively for $t\geq 0$, starting with $\beta^0=0$ and $z^0=y$.
\begin{equation}
\begin{split}
h^{t+1}  := \beta_0 - (A^*z^t + \beta^t), \qquad &  q^t  :=\beta^t - \beta_0, \\
b^t := w-z^t,\qquad & m^t :=-z^t.
\end{split}
\label{eq:hqbm_def}
\end{equation}
Recall that $\beta_0$ is the message vector chosen by the transmitter. Due to the symmetry of the code construction, we can assume that the non-zeros of $\beta_0$ are in the first entry of each section.

Define  $\mathscr{S}_{t_1, t_2}$ to be the sigma-algebra generated by
\[ b^0, ..., b^{t_1 -1}, m^0, ..., m^{t_1 - 1}, h^1, ..., h^{t_2}, q^0, ..., q^{t_2}, \text{ and }  \beta_0, w. \]
Lemma \ref{lem:hb_cond} iteratively computes the conditional distributions $b^t |_{ \mscrs_{t, t}}$ and $h^{t+1} |_{ \mscrs_{t+1, t}}$. Lemma \ref{lem:main_lem} then uses this conditional distributions to show the convergence of various inner products involving $h^{t+1}, q^t, b^t$, and $m^t$ to deterministic constants.  %A key ingredient in proving the lemma is the conditional distribution of the design matrix $A$ given $\mscrs_{t_1,t_2}$. 

For $t \geq 1$, let
\be
\lambda_t := \frac{-1}{\bar{\tau}^2_{t-1}}\left( P - \frac{\norm{\beta^t}^2}{n} \right).
\label{eq:lambda_t_def}
\ee
We then have
\begin{equation}
b^{t} + \lambda_t m^{t-1} = A q^t,
\label{eq:bmq}
\end{equation}
which follows from \eqref{eq:amp1} and \eqref{eq:hqbm_def}. We also have
\be
h^{t+1} + q^t = A^* m^t.
\label{eq:hqm}
\ee
From \eqref{eq:bmq} and \eqref{eq:hqm}, we have the matrix equations
\be
X_t = A^* M_t, \quad Y_t =AQ_t,
\ee
where
\be
\begin{split}
X_t  &= [h^1 + q^0 \mid h^2+q^1 \mid \ldots \mid h^t + q^{t-1}], \\
Y_t  &= [ b^0 \mid b^1 + \lambda_1 m^0 \mid \ldots \mid b^{t-1} + \lambda_{t-1} m^{t-2}], \\
M_t  &= [m^0 \mid \ldots \mid m^{t-1} ], \\
Q_t  &=  [q^0 \mid \ldots \mid q^{t-1} ].
\end{split}
\label{eq:XYMQt}
\ee
The notation $[c_1 \mid c_2 \mid \ldots \mid c_k]$ is used to denote a matrix with columns $c_1, \ldots, c_k$.  Additionally define the matrices 
\be
\begin{split} B_t &:= [b^0 | \ldots | b^{t-1}], \qquad  H_t = [h^1 | \ldots | h^{t}],\\
\Lambda_t &:= \text{diag}(\lambda_0, \ldots, \lambda_{t-1}) 
\end{split}
\ee 
Note that $M_0, Q_0, B_0$, $H_0$, and $\Lambda_0$ are all-zero vectors.  Using the above we see that
 \be Y_t = B_t + \Lambda_t [0 | M_{t-1}] \quad \text{ and } \quad X_t = H_t + Q_{t}. \label{eq:XtYt_rel} \ee

 We use  $m^t_{\|}$ and $q^t_{\|}$ to denote the projection of $m^t$ and $q^t$ onto the column space of $M_t$ and $Q_t$, respectively. Let
 $\vec{\alpha}_t := (\alpha^t_0, \ldots, \alpha^t_{t-1})$ and $\vec{\gamma}_t :=  (\gamma^t_0, \ldots, \gamma^t_{t-1})$ be the coefficient vectors of these projections, i.e.,
 \be
 m^t_{\| } = \sum_{i=0}^{t-1} \alpha^t_i m^i, \quad  q^t_{\|} = \sum_{i=0}^{t-1} \gamma^t_i q^i.
 \label{eq:mtqt_par}
 \ee
 The projections of $m^t$ and $q^t$ onto the orthogonal complements of $M^t$ and $Q^t$, respectively,  are denoted by
 \be
 m^t_{\perp} := m^t - m^t_{\|}, \quad  q^t_{\perp} := q^t - q^t_{\|}
  \label{eq:mtqt_perp}
 \ee
With $\bar{\tau}^2_t$  and $\bar{x}_t$ as defined in Lemma \ref{lem:lim_xt_taut}, for $t \geq 0$ define
\be
\bar{\sigma}^2_t : =\bar{\tau}_t^2 - \sigma^2 = P(1-\bar{x}_t),
\label{eq:sigt_def}
 \ee
 Let $(\bar{\sigma}^{\perp}_0)^2 := \bar{\sigma}_0^2$ and $(\bar{\tau}^{\perp}_0)^2 := \bar{\tau}_0^2$, and for $t > 0$ define 
\be
\begin{split}
& (\bar{\sigma}_{t}^{\perp})^2 := \bar{\sigma}_{t}^2 \left(1 - \frac{ \bar{\sigma}_{t}^2 }{ \bar{\sigma}_{t-1}^2 }\right),  \text{ and } (\bar{\tau}^{\perp}_{t})^2 := \bar{\tau}_{t}^2 \left(1 - \frac{\bar{\tau}_{t}^2}{\bar{\tau}_{t-1}^2}\right).
\label{eq:sigperp_defs}
\end{split}
\ee

 Given two random vectors $X, Y$ and a sigma-algebra $\mscrs$, $X |_\mscrs \stackrel{d}{=} Y$ implies that the conditional distribution of $X$  given $\mscrs$ equals the distribution of $Y$. For random variables $X,Y$, the notation $X \stackrel{a.s.}{=} Y$ means that $X$ and $Y$ are equal almost surely. We use the notation $\vec{o}_t(n^{-\delta})$ to denote a vector in $\mathbb{R}^t$  such that each of its coordinates is  $o(n^{-\delta})$ (here $t$  is fixed).  The identity matrix is denoted by $\mathsf{I}$.

The notation `$\lim$'  is used to denote the large system limit as $n,M, L \to \infty$; recall that the three quantities are related as
$L \log M =nR$, with $M=L^b$.
We keep in mind that (given $R$ and $b$) the block length $n$ uniquely determines the dimensions of all the quantities in the system including
$A, \beta_0, w,  h^{t+1}, q^{t}, b^t, m^t$. Thus we have a sequence  indexed by $n$  of each of these random quantities,  associated with the  sequence of SPARCs
$\{ \mc{S}_n \}$.

We next characterize (in Lemma \ref{lem:hb_cond})  the conditional distribution of the vectors $h^{t+1}$ and $b^t$ given the matrices in \eqref{eq:XYMQt} as well as $\beta_0$ and $w$.  This shows that $h^{t+1}$ and $b^t$ can each be expressed as the sum of an i.i.d.\ Gaussian random vector and a deviation term. Lemma \ref{lem:main_lem} then shows that these deviation terms are small, in the sense that their section-wise maximum absolute value and norm converge to $0$ almost surely. Lemma \ref{lem:main_lem} also provides convergence results for various inner products and functions involving $\{h^{t+1}, q^t, b^t, m^t\}$.  
These will be used to show that the performance of the AMP decoder in the large system limit is accurately predicted by the state evolution equations \eqref{eq:limxt1} and \eqref{eq:limtaut1}. In particular, it is shown that the squared error $\frac{1}{n} \norm{\beta^{t} - \beta}^2$  converges almost surely to $P(1-\bar{x}_{t})$, for $0 \leq t \leq T^*$.

\subsection{Conditional Distribution Lemma}

A key ingredient in the proof is the  distribution of $A$ conditioned on the sigma algebra $\mscrs_{t_1,t}$ where $t_1$ is either $t+1$ or $t$. Observing that conditioning on $\mscrs_{t_1,t}$ is equivalent to conditioning on the linear constraints\footnote{While conditioning on the linear constraints,  we emphasize that only $A$ is treated as random.}
$ A Q_{t_1} = Y_{t_1}$ and $A^*M_t=X_t$,
we have the following lemma.

\begin{lem}\cite[Lemma $10$, Lemma $12$]{BayMont11}
For $0 \leq t \leq T^*$, the conditional distribution of the vectors in \eqref{eq:bmq} and \eqref{eq:hqm} satisfies the following, provided $n >t$ and $M_t$ and $Q_t$ have full column rank.
\begin{align*}
A^*m^t &|_{\mscrs_{t+1,t}} \stackrel{d}{=} X_t (M_t^* M_t)^{-1} M_t^* m^t_{\parallel}  \\
& + \,  Q_{t+1}(Q^*_{t+1} Q_{t+1})^{-1} Y^*_{t+1} m^\perp_t \, +\,  \mathsf{P}^\perp_{Q_{t+1}} \tilde{A}^* m^t_{\perp}, \\
A q^t &|_{\mscrs_{t,t}} \stackrel{d}{=} Y_t (Q_t^* Q_t)^{-1} Q_t^* q^t_{\parallel} \, + \, M_{t}(M^*_{t} M_{t})^{-1} X^*_{t} q^\perp_t \\
&+\,  \mathsf{P}^\perp_{M_t} \hat{A} q^t_{\perp},
\end{align*}
where $m^t_{\parallel}, m^\perp_t, q^t_{\|}, q^\perp_t$ are defined in \eqref{eq:mtqt_par} and \eqref{eq:mtqt_perp}.  Here $\tilde{A}, \hat{A} \stackrel{d}{=} A$ are random matrices independent of $\mscrs_{t+1,t}, \mscrs_{t,t}$, and $\mathsf{P}^\perp_{M_t}= \mathsf{I}-\mathsf{P}_{M_t}$ where $\mathsf{P}_{M_t}= M_t(M^*_t M_t)^{-1}M_t^*$ is the orthogonal projection matrix onto the column space of $M_t$; similarly,
$\mathsf{P}^\perp_{Q_{t+1}}= \mathsf{I}-\mathsf{P}_{Q_{t+1}}$, where $\mathsf{P}_{Q_{t+1}}= Q_{t+1} (Q^*_{t+1} Q_{t+1})^{-1}Q_{t+1}^*$.
\label{lem:A_conddist}
\end{lem}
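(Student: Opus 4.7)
The plan is to exploit the invariance properties of the Gaussian matrix $A$: since its entries are i.i.d.\ $\mc{N}(0, 1/n)$, conditioning on any set of linear constraints produces a Gaussian law whose mean is the minimum-norm solution to the constraints and whose covariance is supported on the orthogonal complement of the constraint directions. Because $\beta_0$ and $w$ are independent of $A$, conditioning on $\mscrs_{t+1,t}$ reduces to conditioning $A$ on the pair of linear constraints $A Q_{t+1} = Y_{t+1}$ and $A^* M_t = X_t$, as noted just before the lemma. Throughout I would invoke the full column rank of $M_t$ and $Q_{t+1}$ to guarantee that the relevant inverses exist.

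For the first identity I would split $m^t = m^t_{\parallel} + m^t_{\perp}$ and treat the two pieces separately. Since $m^t_{\parallel} \in \text{col}(M_t)$, writing $m^t_{\parallel} = M_t (M_t^* M_t)^{-1} M_t^* m^t_{\parallel}$ gives $A^* m^t_{\parallel} = X_t (M_t^* M_t)^{-1} M_t^* m^t_{\parallel}$, which is $\mscrs_{t+1,t}$-measurable and matches the first term. The orthogonal piece $m^t_{\perp}$ is invisible to the constraint $A^* M_t = X_t$, but the other constraint supplies
\[ Q_{t+1}^* (A^* m^t_{\perp}) = (A Q_{t+1})^* m^t_{\perp} = Y_{t+1}^* m^t_{\perp}, \]
so the $\text{col}(Q_{t+1})$-component of $A^* m^t_{\perp}$ is pinned down to $Q_{t+1}(Q_{t+1}^* Q_{t+1})^{-1} Y_{t+1}^* m^t_{\perp}$, which is precisely the second term. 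What remains, $\mathsf{P}^\perp_{Q_{t+1}} A^* m^t_{\perp}$, is unconstrained by the conditioning, and by Gaussian rotational invariance it is independent of $\mscrs_{t+1,t}$ with the same distribution as $\mathsf{P}^\perp_{Q_{t+1}} \tilde{A}^* m^t_{\perp}$ for an independent copy $\tilde{A} \stackrel{d}{=} A$.

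The argument for $A q^t$ given $\mscrs_{t,t}$ is completely symmetric, with the roles of $M_t, Q_t, m^t, q^t$ swapped and the transpose moved to the other side. The component of $A q^t$ in $\text{col}(Q_t)$ is determined by $A Q_t = Y_t$ and equals $Y_t(Q_t^* Q_t)^{-1} Q_t^* q^t_{\parallel}$; the constraint $A^* M_t = X_t$ forces $M_t^* A q^t_{\perp} = X_t^* q^t_{\perp}$, whose orthogonal projection onto $\text{col}(M_t)$ yields $M_t(M_t^* M_t)^{-1} X_t^* q^t_{\perp}$; and the residual $\mathsf{P}^\perp_{M_t} A q^t_{\perp}$ is equal in conditional law to $\mathsf{P}^\perp_{M_t} \hat{A} q^t_{\perp}$ for an independent $\hat{A}$.

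The main obstacle, and where I would spend the most care, is the last step in each case: showing that the residual terms are genuinely independent of the conditioning sigma-algebra with the stated distribution. By joint Gaussianity this reduces to a covariance computation: one must check that $\mathsf{P}^\perp_{Q_{t+1}} A^* m^t_{\perp}$ is uncorrelated with both $A Q_{t+1}$ and $A^* M_t$. Using $\expec[A_{ij} A_{k\ell}] = n^{-1}$ if $(i,j)=(k,\ell)$ and zero otherwise, the orthogonality $M_t^* m^t_{\perp} = 0$ kills the covariance with $A^* M_t$, and the projector $\mathsf{P}^\perp_{Q_{t+1}}$ kills the covariance with $A Q_{t+1}$. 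This is essentially the content of Lemmas 10 and 12 of Bayati--Montanari, so I would appeal to that reference rather than reproduce the bookkeeping in full; the only thing to verify is that our linear constraint set $(A Q_{t+1}, A^* M_t)$ fits their framework verbatim, which it does once one observes that \eqref{eq:bmq} and \eqref{eq:hqm} encode all the conditioning information present in $\mscrs_{t+1,t}$.
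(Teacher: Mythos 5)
The paper does not prove this lemma itself—it cites Bayati--Montanari Lemmas 10 and 12 verbatim, noting only (immediately before the statement) that conditioning on $\mscrs_{t_1,t}$ is equivalent to conditioning on the linear constraints $AQ_{t_1}=Y_{t_1}$, $A^*M_t=X_t$. Your sketch correctly reproduces the structure of that reference's argument (decompose into parallel/perpendicular components, pin down the two constrained pieces, verify the residual is uncorrelated hence independent by joint Gaussianity), and you likewise defer the bookkeeping to the same citation, so the approach is the same.
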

The distributional characterization of $A^*m^t$ and $Aq^t$ in Lemma \ref{lem:A_conddist} together with \eqref{eq:bmq} and \eqref{eq:hqm}  leads to the following lemma.

\begin{lem}[Conditional Distribution Lemma]
For the vectors $h^{t+1}$ and $b^t$ defined in \eqref{eq:hqbm_def}, the following hold for $1 \leq t \leq T^*$, provided $n >t$ and $M_t$ and $Q_t$ have full column rank.
\begin{equation}
\begin{split}
& h^{1} \lvert_{\mscrs_{1, 0}} \stackrel{d}{=} \bar{\tau}_0 Z_0 + \Delta_{1,0},  \\
& h^{t+1} \lvert_{\mscrs_{t+1, t}} \stackrel{d}{=} \frac{\bar{\tau}_t^2}{\bar{\tau}_{t-1}^2} h^{t} + \bar{\tau}_{t}^{\perp} \, Z_t + \Delta_{t+1,t}, 
\end{split}
\label{eq:Ha_dist} 
\end{equation}
\begin{equation}
b^{0} \lvert_{\mscrs_{0, 0}} \stackrel{d}{=} \bar{\sigma}_0 Z'_0, \quad b^{t} \lvert_{\mscrs_{t, t}}\stackrel{d}{=} \frac{\bar{\sigma}_t^2}{\bar{\sigma}_{t-1}^2} b^{t-1} +  \bar{\sigma}_{t}^{\perp} \, Z'_t + \Delta_{t,t}. \label{eq:Ba_dist}
\end{equation}
where $Z_0, Z_t \in \mathbb{R}^N$ and $Z'_0, Z'_t \in \mathbb{R}^n$ are i.i.d.\ standard Gaussian random vectors that are independent of the corresponding conditioning sigma algebras. The deviation terms are 
\begin{equation}
\begin{split}
\Delta_{1,0} &= \left[ \left(\frac{\norm{m^0}}{\sqrt{n}}  - \bar{\tau}_0\right)\mathsf{I} -\frac{\norm{m^0}}{n} \mathsf{P}_{q^0}\right] Z_0 \\
&\qquad + q^0 \left(\frac{\norm{q^0}^2}{n}\right)^{-1} \left(\frac{(b^0)^*m_0}{n} - \frac{\norm{q^0}^2}{n}\right), \label{eq:D10}
\end{split}
\end{equation} 
and for $t >0$,
\be
\begin{split}
\Delta_{t,t} & =  \sum_{r=0}^{t-2} \gamma^t_r b^r + \left(\gamma^t_{t-1} -  \frac{\bar{\sigma}_t^2}{\bar{\sigma}_{t-1}^2}\right) b^{t-1}  \\
&+ \left[  \left(\frac{\norm{q^t_{\perp}}}{\sqrt{n}} - \bar{\sigma}_{t}^{\perp}\right) \mathsf{I}  - \frac{\norm{q^t_{\perp}} }{\sqrt{n}} \mathsf{P}_{M_t}\right]Z'_t    \\
& + M_t\Big(\frac{M_{t}^* M_{t}}{n}\Big)^{-1} \\
& \quad \cdot \left(\frac{H_t^* q^t_{\perp}}{n} - \frac{M_t^*}{n}\left[\lambda_t m^{t-1} - \sum_{r=1}^{t-1} \lambda_{r} \gamma^t_{r} m^{r-1}\right]\right),
\end{split}
\label{eq:Dtt} 
\ee
\be
\begin{split}
\Delta_{t+1,t} & =  \sum_{r=0}^{t-2} \alpha^t_r h^{r+1} +  \left(\alpha^t_{t-1} -  \frac{\bar{\tau}_t^2}{\bar{\tau}_{t-1}^2}\right) h^{t} \\
&+ \left[\left(\frac{\norm{m^t_{\perp}}}{\sqrt{n}} - \bar{\tau}_{t}^{\perp}\right)  \mathsf{I} -\frac{\norm{m^t_{\perp}}}{\sqrt{n}} \mathsf{P}_{Q_{t+1}}\right]Z_t \\
&+ Q_{t+1} \left(\frac{Q_{t+1}^* Q_{t+1}}{n}\right)^{-1} \\
&\quad \cdot  \left(\frac{B^*_{t+1} m_t^{\perp}}{n} - \frac{Q_{t+1}^*}{n}\left[q^t - \sum_{i=0}^{t-1} \alpha^t_i q^i\right]\right). 
\end{split}
\label{eq:Dt1t} 
\ee
\label{lem:hb_cond}
\end{lem}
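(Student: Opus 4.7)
The plan is induction on $t$, using Lemma~\ref{lem:A_conddist} to supply the conditional distributions of $A^*m^t$ and $Aq^t$, and then applying the identities $h^{t+1}=A^*m^t-q^t$ and $b^t=Aq^t-\lambda_t m^{t-1}$ from \eqref{eq:bmq}--\eqref{eq:hqm} to pass to $h^{t+1}$ and $b^t$. Lemma~\ref{lem:A_conddist} already writes each of $A^*m^t$ and $Aq^t$ as a sum of three pieces: an $M_t$- (resp.\ $Q_t$-) projection, a $Q_{t+1}$- (resp.\ $M_t$-) projection, and an orthogonal Gaussian piece built from a fresh copy $\tilde A$ (resp.\ $\hat A$). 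The entire task is then to regroup these pieces into (i) an i.i.d.\ Gaussian with the prescribed variance $(\bar\tau_t^\perp)^2$ or $(\bar\sigma_t^\perp)^2$, (ii) the main recurrence term $\frac{\bar\tau_t^2}{\bar\tau_{t-1}^2}h^t$ or $\frac{\bar\sigma_t^2}{\bar\sigma_{t-1}^2}b^{t-1}$, and (iii) a residual matching the stated formulas \eqref{eq:D10}, \eqref{eq:Dtt}, \eqref{eq:Dt1t}.

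For the base cases, $q^0=-\beta_0$ and $z^0=y$ give $b^0=Aq^0\stackrel{d}{=}(\norm{q^0}/\sqrt n)Z'_0$; since $\norm{q^0}^2/n=P=\bar\sigma_0^2$, this equals $\bar\sigma_0 Z'_0$ exactly, with no $\Delta$ term. For $h^1=A^*m^0-q^0$, apply Lemma~\ref{lem:A_conddist} in the degenerate case $M_0=0$: the $M$-projection piece vanishes because $m^0_\|=0$, the $Q_1$-projection piece reduces to $q^0(b^0)^*m^0/\norm{q^0}^2$ (since $Q_1=q^0$ and $Y_1=b^0$), and the orthogonal Gaussian piece becomes $(\norm{m^0}/\sqrt n)(\mathsf I-\mathsf P_{q^0})Z_0$. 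Isolating $\bar\tau_0 Z_0$ from the last and folding the $-q^0$ subtraction into the $Q_1$-projection contribution reproduces \eqref{eq:D10}.

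For the inductive step on $h^{t+1}$, substitute the three-piece decomposition of $A^*m^t\lvert_{\mscrs_{t+1,t}}$. The $M_t$-projection piece is $X_t\vec\alpha_t$, and using $X_t=H_t+Q_t$ it expands to $\sum_{i=0}^{t-1}\alpha_i^t h^{i+1}+\sum_{i=0}^{t-1}\alpha_i^t q^i$. Combined with $-q^t$, the $q$-part is $-(q^t-\sum_i\alpha_i^t q^i)$, which lies in the column space of $Q_{t+1}$; merging it with the $Q_{t+1}$-projection piece (and using $M_t^*m^t_\perp=0$ to replace $Y_{t+1}^*m^t_\perp$ by $B_{t+1}^*m^t_\perp$) yields the last term of \eqref{eq:Dt1t}. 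The orthogonal Gaussian piece is distributed as $(\norm{m^t_\perp}/\sqrt n)(\mathsf I-\mathsf P_{Q_{t+1}})Z_t$, from which one extracts $\bar\tau_t^\perp Z_t$, leaving the bracketed $Z_t$ term of \eqref{eq:Dt1t}. Finally, $H_t\vec\alpha_t=\sum_{r=0}^{t-2}\alpha_r^t h^{r+1}+\alpha_{t-1}^t h^t$, and writing $\alpha_{t-1}^t=\frac{\bar\tau_t^2}{\bar\tau_{t-1}^2}+\bigl(\alpha_{t-1}^t-\frac{\bar\tau_t^2}{\bar\tau_{t-1}^2}\bigr)$ separates the main recurrence term from its deviation.

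The derivation for $b^t$ is symmetric: apply the second statement of Lemma~\ref{lem:A_conddist} to $Aq^t$, use $Q_t^*q^t_\perp=0$ to reduce $X_t^*q^t_\perp$ to $H_t^*q^t_\perp$, and absorb the $-\lambda_t m^{t-1}$ subtraction into the $M_t$-column-space contribution to produce the bracket $\lambda_t m^{t-1}-\sum_{r=1}^{t-1}\lambda_r\gamma_r^t m^{r-1}$ appearing in \eqref{eq:Dtt}. The main obstacle is the bookkeeping in this merging step: the $-q^t$ (resp.\ $-\lambda_t m^{t-1}$) subtraction must be aligned with the coefficients $\vec\alpha_t$ (resp.\ $\vec\gamma_t$) so that the residue lies entirely in the column space of $Q_{t+1}$ (resp.\ $M_t$) and can be packaged against the second piece of Lemma~\ref{lem:A_conddist}; once the orthogonalities $M_t^*m^t_\perp=Q_t^*q^t_\perp=0$ are invoked to kill the cross terms from $Y_{t+1}$ and $X_t$, the remaining work is routine rearrangement.
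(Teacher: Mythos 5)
Your proposal is correct and takes essentially the same route as the paper: apply Lemma~\ref{lem:A_conddist}, expand $X_t$ and $Y_{t+1}$ (resp.\ $Y_t$ and $X_t$) via \eqref{eq:XtYt_rel}, invoke the orthogonalities $M_t^* m^t_\perp = 0$ and $Q_t^* q^t_\perp = 0$ to drop cross terms, and fold the subtracted $-q^t$ (resp.\ $-\lambda_t m^{t-1}$) into the corresponding column-space projection contribution to obtain the stated $\Delta$ terms. The regrouping you describe, including the split of $\alpha^t_{t-1}$ (resp.\ $\gamma^t_{t-1}$) to isolate the main recurrence term, matches the paper's argument step for step.
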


\begin{proof}
We first demonstrate  \eqref{eq:Ba_dist}.  By \eqref{eq:hqbm_def} it follows that
\ben
b^{0}\lvert_{\mscrs_{0,0}} = -A\beta_0 = A q^0 \overset{d}{=} \frac{\norm{q^0}}{\sqrt{n}} Z'_0,
\een
where $Z'_0 \in \mathbb{R}^n$ is an i.i.d.\ standard Gaussian random vector, independent of $\mscrs_{0,0}$.  The result follows since $\norm{q^0} = \norm{\beta_0} = \sqrt{nP} = \sqrt{n \bar{\sigma}_0}.$

For the case $t \geq 1$, we use Lemma \ref{lem:A_conddist} to write
\be
\begin{split}
&b^t  \lvert_{\mscrs_{t, t}}  = (A q^t - \lambda_t m^{t-1}) \lvert_{\mscrs_{t, t}} \\
&\overset{d}{=} Y_t(Q_t^* Q_t)^{-1} Q_t^* q^t_{\parallel} + M_t(M_t^*M_t)^{-1} X_t^*  q_{\perp}^t \\
&\qquad + \mathsf{P}^{\perp}_{M_{t}} \tilde{A} q^t_{\perp} - \lambda_t m^{t-1}\\
& = B_t(Q_t^* Q_t)^{-1} Q_t^* q^t_{\parallel} + [ 0 | M_{t-1}] \Lambda_t (Q_t^* Q_t)^{-1} Q_t^* q^t_{\parallel} \\
&\qquad  + M_t(M_t^*M_t)^{-1} H_t^*  q_{\perp}^t + \mathsf{P}^{\perp}_{M_{t}} \tilde{A} q^t_{\perp}- \lambda_t m^{t-1}. \nonumber
\end{split}
\label{eq:lemma13a}
\ee
The last equality above is obtained using \eqref{eq:XtYt_rel}.  Noticing that $\mathsf{P}^{\perp}_{M_{t}} \tilde{A} q^t_{\perp} = (\mathsf{I}  - \mathsf{P}_{M_t})\tilde{A}q^t_{\perp}$ and $B_t(Q_t^* Q_t)^{-1} Q_t^* q^t_{\parallel} = \sum_{i=0}^{t-1} \gamma^t_i b^i$,  it follows that
\be
\begin{split}
& b^t |_{\mscrs_{t,t}} \\
& \stackrel{d}{=}   (\mathsf{I}  - \mathsf{P}^{\parallel}_{M_t})\tilde{A}q^t_{\perp} + \sum_{i=0}^{t-1} \gamma^t_i b^i +  [0 | M_{t-1}] \Lambda_t (Q_t^* Q_t)^{-1} Q^*_t q^t_{\parallel} \\
&\qquad  +  M_t(M_t^* M_t)^{-1} H_t^* q^t_{\perp} - \lambda_t m^{t-1} \\
&\overset{d}{=}  (\mathsf{I}  - \mathsf{P}^{\parallel}_{M_t}) \frac{\norm{q^t_{\perp}}}{\sqrt{n}} Z'_t + \sum_{i=0}^{t-1} \gamma^t_i b^i  + M_t(M_t^* M_t)^{-1} H_t^* q^t_{\perp}\\
& + [0 | M_{t-1}] \Lambda_t (Q_t^* Q_t)^{-1} Q^*_t q^t_{\parallel}  - \lambda_t m^{t-1},
\end{split}
\label{eq:Bbtdef}
\ee
where $Z'_t \in \mathbb{R}^n$ is an i.i.d.\ standard Gaussian random vector. All the quantities in the RHS of \eqref{eq:Bbtdef} except $Z'_{t}$ are in the conditioning sigma-field.  We can rewrite \eqref{eq:Bbtdef} as 
\begin{align*}
b^{t}\lvert_{\mscrs_{t, t}} \overset{d}{=}& \frac{\bar{\sigma}_t^2}{\bar{\sigma}_{t-1}^2} b^{t-1} + \bar{\sigma}_{t}^{\perp} Z'_t + \Delta_{t,t}, 
\end{align*}
where
\begin{align*}
\Delta_{t,t} &= \sum_{r=0}^{t-2} \gamma^t_r b^r +  \left(\gamma^t_{t-1} - \frac{\bar{\sigma}_t^2}{\bar{\sigma}_{t-1}^2}\right) b^{t-1}  \\
&+ \left[ \left(\frac{\norm{q^t_{\perp}}}{\sqrt{n}} - \bar{\sigma}_{t}^{\perp}\right)\mathsf{I}  - \frac{\norm{q^t_{\perp}} }{\sqrt{n}} \mathsf{P}_{M_t}\right]Z'_t   \\
&+ [0 | M_{t-1}] \Lambda_t (Q_t^* Q_t)^{-1} Q^*_t q^t_{\parallel} + M_t(M_t^* M_t)^{-1} H_t^* q^t_{\perp} \\
& - \lambda_t m^{t-1}. 
\end{align*}
The above definition of $\Delta_{t,t}$ equals that given in \eqref{eq:Dtt} since
\begin{align*}
&  M_t\left(\frac{M_{t}^* M_{t}}{n}\right)^{-1}  \frac{M_t^*}{n}\left(\lambda_t m^{t-1} - \sum_{i=0}^{t-2} \lambda_{i+1} \gamma^t_{i+1} m^i\right) \nonumber 
\\
& \ +  [0 | M_{t-1}] \Lambda_t (Q_t^* Q_t)^{-1} Q^*_t q^t_{\parallel}  - \lambda_t m^{t-1} \nonumber \\
&= \lambda_t m^{t-1} - \sum_{i = 0}^{t-2} \lambda_{i+1} \gamma^t_{i+1} m^i + \sum_{j=0}^{t-2} \lambda_{j+1} \gamma^t_{j+1} m^j  - \lambda_t m^{t-1} \\
&= 0. \nonumber
\end{align*}
This completes the proof of \eqref{eq:Ba_dist}.  Result \eqref{eq:Ha_dist} can be shown similarly.  
\end{proof}

The conditional distribution representation in Lemma \ref{lem:hb_cond} implies that for each $t \geq 0$, $h^{t+1}$ is the sum of an i.i.d.\ $\mc{N}(0, \bar{\tau}_t^2)$ random vector plus a deviation term.  Indeed, if we assume that $h^{t}$ has the representation  $\bar{\tau}_{t-1} \breve{Z}_{t-1} +  \Delta_{t}$, then Lemma \ref{lem:hb_cond} implies
\be
\begin{split}
h^{t+1} \lvert_{\mscrs_{t+1, t}} &\stackrel{d}{=} \frac{\bar{\tau}_t^2}{\bar{\tau}_{t-1}^2} h^{t} + \tau^{\perp}_t Z_t + \Delta_{t+1,t} \\
&\stackrel{d}{=}   \frac{\bar{\tau}_t^2}{\bar{\tau}_{t-1}} \breve{Z}_{t-1}   +  \bar{\tau}^{\perp}_t Z_t  + \Delta_{t}  + \Delta_{t+1,t}  \stackrel{d}{=}  \bar{\tau}_t \breve{Z}_t.
\label{eq:bt_simp}
\end{split}
\ee
 To obtain the last equality, we combine the independent Gaussians $\breve{Z}_{t-1}$ and  $Z_t$ using the expression for $\bar{\tau}_t^{\perp}$ in \eqref{eq:sigperp_defs}.   It can be similarly seen that $b^t$ is the sum of an i.i.d.\ $\mc{N}(0, \bar{\sigma}_t^2)$ random vector and a deviation term.  The next lemma  shows that these deviation terms are $o(n^{-\delta})$ for some $\delta > 0$. 

\subsection{Main Convergence Lemma}
 \begin{defi}
 A function $\phi: \mathbb{R}^m \to \mathbb{R}$ is pseudo-Lipschitz of order $k$ (denoted by $\phi \in  PL(k)$) if there exists a constant $C >0$ such that for all $x,y \in \mathbb{R}^m$,
 \be  \abs{\phi(x) - \phi(y)} \leq C ( 1 + \norm{x}^{k-1} + \norm{y}^{k-1} ) \norm{x-y} . \ee
 \end{defi}

In the lemma below, $\delta \in (0,\tfrac{1}{2})$ is a generic positive number whose exact value is not required.   The value of $\delta$ in each  statement of the lemma may be different. We will say that a  sequence $x_n$ converges  to a constant $c$ at rate $n^{-\delta}$ if  $ \lim_{n \to \infty} n^{\delta}(x_n -c) = 0$.
 
 \begin{lem}
The following  statements hold for $0 \leq t \leq T^*$, where $T^*= \left \lceil \frac{2\mc{C}}{\log\left(\mc{C}/R \right)} \right \rceil$. 
\begin{enumerate}[(a)]
\item The following statements hold almost surely:
\begin{equation}
\begin{split}
&\max_{j \in sec(\ell)} \abs{[\Delta_{t+1, t}]_{j}} = o\left( n^{-\delta} \sqrt{\log M}\right), \\
& \max_{j \in sec(\ell)} \abs{h^{t+1}_{j}} \leq c_{t+1} \sqrt{\log M} \quad \text{for } \ell \in [L], \label{eq:Ha} 
\end{split}
\end{equation}
\begin{equation}
\lim \frac{\norm{\Delta_{t,t}}^2}{n} = 0, \label{eq:Ba}
\end{equation}
where $c_{t+1} > 0$ is a constant not depending on $N,n$. The convergence rate in \eqref{eq:Ba} is $n^{-\delta}$.
%-------------------------

\item  
i) Consider the following functions defined on $\mathbb{R}^{M} \times \mathbb{R}^{M} \times \mathbb{R}^{M} \rightarrow \mathbb{R}$. For $x,y,z \in \mathbb{R}^M$,  $-1\leq r \leq s \leq t$, and $\ell \in [L]$, let
\be
\begin{split}
\phi_{1, \ell }(x,y,z) &:= x^*y/ M, \\
\phi_{2,\ell }(x,y,z)&:= \norm{\eta^{r}_\ell(z-x)}^2/ \log M, \\
\phi_{3, \ell }(x,y,z)&:=  {[} \eta^r_\ell( z-x) - z {]}^*{[} \eta^s_\ell (z-y) - z ]/ \log M , \\
\phi_{4, \ell }(x,y,z) &:=  y^*{[}  \,\eta^r_{\ell}(z-x) - z]/ \log M, \\
\end{split}
\label{eq:phih_fns}
\ee
where for $r \geq 0$,  $\eta^r_\ell(\cdot)$ is the restriction of $\eta^r$ to section $\ell$, i.e.,  for $x \in \mathbb{R}^M$, % , 
\[  \eta^{r}_{\ell, i}(x) := \sqrt{n P_\ell} \,  \frac{\exp\left(\frac{x_i \sqrt{n P_\ell}}{\tau^2_r} \right)}
{\sum_{j=1}^M \, \exp\left(\frac{x _j \sqrt{n P_\ell}}{\tau^2_r} \right)},   \ i=1, \ldots,M. \]
(Also, $\eta^{-1}_{\ell,i}(\cdot) := 0$ for $i \in [M]$.) Then, for $k \in \{1,2,3,4 \}$ and arbitrary constants $(a_0, \ldots, a_{t},$ $b_0, \ldots, b_t)$, we have 
\begin{equation}
\begin{split}
&\lim n^\delta \left | \frac{1}{L} \sum_{\ell=1}^L \phi_{k, \ell} \Big( \sum_{r=0}^{t} a_r h^{r+1}_{\ell}, \sum_{s=0}^t  b_s h^{s+1}_{\ell}, \beta_{0_{\ell}}\Big) -c_k \right |
\label{eq:Hb2}
\end{split}
\end{equation}
almost surely equals $0$, where 
\[  c_k := \lim \frac{1}{L} \sum_{\ell=1}^L \mathbb{E} \left[  \phi_{k, \ell}\left(\sum_{r=0}^t a_r \bar{\tau}_r \breve{Z}_{r_{\ell}}, \sum_{s=0}^t b_s \bar{\tau}_s \breve{Z}_{s_{\ell}}, \beta_{\ell}\right)\right]  \]
Here $\breve{Z}_0, ..., \breve{Z}_t$ are length-$N$ Gaussian random vectors independent of $\beta$, and $\breve{Z}_{r_\ell}, \beta_\ell, \beta_{0,\ell}, h^{r+1}_{\ell}$ denote the $\ell$th section of the respective vectors. For
$0\leq s \leq t$,  $\{ \breve{Z}_{s_j}\}_{j \in [N]}$ are  i.i.d.\ $\sim \mc{N}(0,1)$, and for each $ i \in [N]$, $(\breve{Z}_{0_i}, \ldots, \breve{Z}_{t_i})$ are jointly Gaussian with $\mathbb{E}[\bar{\tau}_r \breve{Z}_{r_i} \bar{\tau}_t \breve{Z}_{t_i}] = \bar{\tau}_t^2$ for $0 \leq r \leq t$. The limit defining $c_k$ exists and is finite for each $\phi_{k,\ell}$ in \eqref{eq:phih_fns}.

ii) For all pseudo-Lipschitz functions $\phi_b: \mathbb{R}^{t + 2} \rightarrow \mathbb{R}$ of order two, we have 
\begin{equation}
\begin{split}
\lim n^{\delta} & \Big[   \frac{1}{n} \sum_{i=1}^n \phi_b(b_i^0, ..., b_i^t, w_i)  \\ 
& \quad  -  \mathbb{E}[ \phi_b(\bar{\sigma_0} \hat{Z}_0, ..., \bar{\sigma_t} \hat{Z}_t, \sigma Z_w)] \Big] =0 \quad a.s.
\end{split}
 \label{eq:Bb}
\end{equation}
The  random variables $(\hat{Z}_0, ...,\hat{Z}_t)$  are jointly Gaussian with 
$\hat{Z}_s \sim \mc{N}(0,1)$ for $0\leq s \leq t$ and $\mathbb{E}[\bar{\sigma_s} \hat{Z}_s \bar{\sigma_t} \hat{Z}_t] = \bar{\sigma}_t^2$. Further,  $( \hat{Z}_0, ...,\hat{Z}_t )$  are independent of  $Z_w \sim \mc{N}(0,1)$.

%-----------------------------

\item 
\begin{align}
&\lim \frac{(h^{t+1})^* q^0}{n}\overset{a.s.}{=} 0,  \label{eq:Hc}\\
&\lim \frac{(b^{t})^* w}{n}\overset{a.s.}{=} 0.  \label{eq:Bc}
\end{align}
The convergence rate in both \eqref{eq:Hc} and \eqref{eq:Bc}  is $n^{-\delta}$.

%%-----------------------------

\item For all $0 \leq r  \leq t$, 
\begin{align}
 & \lim \frac{(h^{r+1})^* h^{t+1}}{N} \overset{a.s}{=}  \bar{\tau}_{t}^2, \label{eq:Hd} \\
& \lim \frac{(b^{r})^* b^{t}}{n}  \overset{a.s.}{=} \bar{\sigma}^2_{t}, \label{eq:Bd}
\end{align}
where $\bar{\sigma}_s$ is defined in \eqref{eq:sigt_def}. The convergence rate in both \eqref{eq:Hd} and \eqref{eq:Bd} is $n^{-\delta}$.

%%-----------------------------

\item For all $0 \leq r \leq t$, 
\begin{align}
&\lim \frac{(q^{0})^* q^{t+1}}{n}  \overset{a.s.}{=} \bar{\sigma}^2_{t+1}, \quad \lim \frac{(q^{r+1})^* q^{t+1}}{n}  \overset{a.s.}{=} \bar{\sigma}^2_{t+1}, \label{eq:He} \\
 & \lim \frac{(m^r)^* m^t}{n} \overset{a.s.}{=} \bar{\tau}_t^2. \label{eq:Be}
\end{align}
The convergence rate in both \eqref{eq:He} and \eqref{eq:Be} is $n^{-\delta}$.
%---------------------------

\item For all $0 \leq r, s \leq t$, 
\begin{align}
 \lim \frac{(h^{s+1})^* q^{r+1}}{n} &\overset{a.s}{=} \lim \lambda_{r+1}\lim \frac{(m^r)^* m^s}{n} \nonumber \\
 &\overset{a.s.}{=} - \frac{\bar{\sigma}^2_{r+1} \bar{\tau}_{\max(r,s)}^2}{\bar{\tau}_r^2},
\label{eq:Hf} \\
 \lim \frac{(b^{r})^* m^{s}}{n} &\overset{a.s}{=} \bar{\sigma}^2_{\max(r,s)}.
\label{eq:Bf}
\end{align}
The convergence rate in both \eqref{eq:Hf} and \eqref{eq:Bf}  is $n^{-\delta}$.

%------------------------
\item The vectors $(\gamma^{t+1}_0, \ldots, \gamma^{t+1}_t)$ and $(\alpha^{t}_0, \ldots,  \alpha^{t}_{t-1})$ converge entry-wise to the following limits at rate $n^{-\delta}$.
\begin{align}
& \lim (\gamma^{t+1}_0, \ldots, \gamma^{t+1}_{t-1}, \gamma^{t+1}_t) \overset{a.s.}{=} \left(0, \ldots, 0, \frac{\bar{\sigma}_{t+1}^2}{\bar{\sigma}_{t}^2}\right) ,  \label{eq:Hg}\\
&\lim (\alpha^{t}_0, \ldots, \alpha^{t}_{t-2}, \alpha^{t}_{t-1}) \overset{a.s.}{=} \left(0, \ldots, 0, \frac{\bar{\tau}_{t}^2}{\bar{\tau}_{t-1}^2}\right), \ t \geq 1.  \label{eq:Bg}
\end{align}
%------------------------

\item 
\begin{align}
& \lim \frac{\norm{q_{\perp}^{t+1}}^2}{n} \overset{a.s.}{=} (\bar{\sigma}^{\perp}_{t+1})^2, \label{eq:Hh} \\
&  \lim \frac{\norm{m_{\perp}^t}^2}{n} \overset{a.s.}{=} (\bar{\tau}^{\perp}_{t})^2, \label{eq:Bh} 
\end{align}
where $\bar{\sigma}^{\perp}_{t}, \bar{\tau}^{\perp}_{t}$, defined in \eqref{eq:sigperp_defs}, are strictly positive for $t \leq T^*$. The convergence rate in both \eqref{eq:Hh} and \eqref{eq:Bh}  is $n^{-\delta}$.
\end{enumerate}
\label{lem:main_lem}
\end{lem}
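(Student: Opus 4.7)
The plan is to prove the lemma by strong induction on $t \in \{0, 1, \ldots, T^*\}$. Write $\mc{H}_{t+1}$ for the conjunction of the assertions in (a)--(h) that concern $h^{t+1}$ and its associated coefficients/projections, and $\mc{B}_t$ for the companion assertions for $b^t$. The induction proceeds in the interleaved chain $\mc{B}_0 \Rightarrow \mc{H}_1 \Rightarrow \mc{B}_1 \Rightarrow \mc{H}_2 \Rightarrow \cdots$, terminating at $t = T^*$. The base case $\mc{B}_0$ is essentially immediate: since $\beta^0 = 0$ and $z^0 = y$ we have $b^0 = -A\beta_0$, whose conditional distribution given $\mscrs_{0,0}$ is exactly $\bar{\sigma}_0 Z'_0$ with $\Delta_{0,0} = 0$; statement (b)(ii) at $t=0$ then reduces to a standard Bernstein or Hanson--Wright concentration of a pseudo-Lipschitz function of i.i.d.\ Gaussian pairs $(b^0_i, w_i)$, from which the quadratic specializations in (d)--(h) follow.

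For the inductive step $\mc{B}_t \Rightarrow \mc{H}_{t+1}$, the engine is the conditional representation in Lemma \ref{lem:hb_cond}: unrolling the recursion and combining independent Gaussians via \eqref{eq:sigperp_defs} shows that $h^{t+1}$ is distributionally equal to $\bar{\tau}_t \breve{Z}_t$ plus the accumulated deviation term. To establish the sectionwise bound in (a), I would control each of the four summands of $\Delta_{t+1,t}$ in \eqref{eq:Dt1t} separately. The $\sum_r \alpha^t_r h^{r+1}$ pieces vanish at the desired rate by combining the inductive (g), which drives the coefficients to the stated limit, with the inductive (a), which supplies the $\sqrt{\log M}$ bound on $\norm{h^{r+1}}_\infty$. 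The Gaussian rescaling term $[(\norm{m^t_\perp}/\sqrt{n} - \bar{\tau}_t^\perp)\mathsf{I} - (\norm{m^t_\perp}/\sqrt{n})\mathsf{P}_{Q_{t+1}}]Z_t$ is handled using the inductive (h) to shrink the scalar prefactor and the fact that $\mathsf{P}_{Q_{t+1}}$ has rank $t+1$, together with the classical $\sqrt{\log M}$ bound on the maximum of $M$ standard Gaussians per section. The final inner-product summand is managed by invoking (b)(ii) at level $t$ to derive the limiting values of $B_{t+1}^* m_t^\perp /n$ and $Q_{t+1}^*(q^t - \sum_i \alpha^t_i q^i)/n$. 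Summing these contributions yields $\max_{j \in \text{sec}(\ell)} |[\Delta_{t+1,t}]_j| = o(n^{-\delta}\sqrt{\log M})$ and hence the section-max bound on $h^{t+1}$ itself.

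With (a) in hand, part (b)(i) is proved by conditioning on $\mscrs_{t+1,t}$ and substituting the clean Gaussian surrogate $\sum_r a_r \bar{\tau}_r \breve{Z}_{r,\ell}$ for $\sum_r a_r h^{r+1}_\ell$ inside each sectionwise test function. The substitution error is controlled by a Lipschitz-type estimate on $\phi_{k,\ell}$: although $\eta^r_\ell$ is a softmax over $M$ entries and is not globally Lipschitz in the usual sense, its magnitude is uniformly bounded by $\sqrt{nP_\ell}$ and its sensitivity to perturbations of size $o(n^{-\delta}\sqrt{\log M})$ is manageable because of the exponential denominator. Under the Gaussian surrogate, the $L$ sections are mutually independent, so a section-level Bernstein argument yields concentration of the empirical average $\frac{1}{L}\sum_\ell \phi_{k,\ell}$ around the stated limit $c_k$ at rate $n^{-\delta}$. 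Statements (c)--(h) of $\mc{H}_{t+1}$ follow as corollaries of (b): (d) and (e) come from quadratic choices of $\phi$; (f) and (g) combine (b)(i)/(b)(ii) with the definitions $q^r = \eta^{r-1}(\cdot) - \beta_0$ and $m^r = -z^r$; and (g) additionally inverts the linear system $M_t^* M_t \vec{\alpha}^t = M_t^* m^t$, whose normalized Gram matrix converges by (e)/(f) to an invertible limit on $t \leq T^*$.

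The main obstacle, and the reason the proof cannot be imported directly from \cite{BayMont11}, is that the undersampling ratio $n/N$ tends to zero, so the section dimension $M$ diverges and $\phi_{k,\ell}$ operates on vectors of growing dimension containing a softmax $\eta^r_\ell$. This rules out the standard pseudo-Lipschitz machinery of \cite{BayMont11} at the sectionwise level; instead the concentration must be obtained by combining the $\sqrt{\log M}$ Gaussian-maximum bound from (a) with section-level moment estimates, and then aggregating over $L = \Theta(n/\log n)$ approximately independent sections. Propagating a sharp enough $n^{-\delta}$ rate uniformly through the four-part decomposition of $\Delta_{t+1,t}$ over all $t \leq T^*$ iterations, while preserving the sectionwise $\sqrt{\log M}$ scaling, is the technically delicate heart of the argument.
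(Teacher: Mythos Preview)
Your proposal is correct and follows essentially the same inductive architecture as the paper's proof: the same $\mc{B}_0 \Rightarrow \mc{H}_1 \Rightarrow \mc{B}_1 \Rightarrow \cdots$ chain, the same termwise control of $\Delta_{t+1,t}$ via the inductive (g), (h), and Gaussian-maximum bounds, and the same strategy of substituting the Gaussian surrogate in (b)(i) before concentrating across sections. The paper differs only in minor technical choices---it uses a strong law for triangular arrays rather than Bernstein/Hanson--Wright for the concentration step, and it invokes Stein's lemma explicitly to evaluate the limits in (f) together with a separate computation (Appendix~\ref{app:qrqs_lim}) to identify the cross-moments $\bar{\sigma}^2_{s+1}$ needed for (e)---but these are tools inside the steps you outlined, not structural departures.
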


The lemma is proved in Section \ref{sec:lem1_proof}. 

\subsection{Comments on Lemmas \ref{lem:hb_cond} and \ref{lem:main_lem}} \label{subsec:lem_comments}

To prove Theorem \ref{thm:main_amp_perf}, the main result we need from Lemma \ref{lem:main_lem} is that for each $t >0$, $\frac{\norm{q^t}^2}{n} = 
\frac{\norm{\beta^t - \beta_0}^2}{n}$ converges to $\bar{\sigma}_t^2$ with probability $1$. This result is used   in Section \ref{subsec:proof_thm1} below to prove Theorem \ref{thm:main_amp_perf}.  The convergence of $\frac{\norm{q^t}^2}{n}$ is shown in part (e) of Lemma \ref{lem:main_lem} by appealing to part (b).i, which shows that within the functions  listed in \eqref{eq:phih_fns}, 
 $h^{t+1} = \beta_0-(A^* z^t + \beta^t)$ (the difference between the true signal and the test statistic) can be replaced by $\bar{\tau_t}\breve{Z}_t$ in the large system limit.
 
 While the results in Lemmas \ref{lem:hb_cond} and \ref{lem:main_lem} are similar to those found in \cite[Lemma $1$]{BayMont11}, there are a few key differences.
\begin{itemize}

\item The functions  listed in \eqref{eq:phih_fns}  all act \emph{section-wise}  on the vectors $\{ h^t \}_{t >0}$. Recall that the structure of $\beta_0$ implies that  $h^t \in \mathbb{R}^{ML}$ are section-wise independent, where the section size  $M=L^\mathsf{a}= \Theta((n/\log n)^{\mathsf{a}})$.  This is in contrast to the  functions considered in \cite{BayMont11,bayMontLASSO} (and in part (b).ii), which act \emph{component-wise} on vectors whose components are i.i.d. 

\item To prove part (b).i of Lemma \ref{lem:main_lem} for the section-wise functions in \eqref{eq:phih_fns}, we first need to show that the  deviation terms $\Delta_{t+1, t}$ (defined in Lemma \ref{lem:hb_cond}) can be  neglected in the large system limit. This is done by showing  in part (a) of Lemma \ref{lem:main_lem} (see \eqref{eq:Ha}) that
\ben
\max_{j \in sec(\ell)} \abs{[\Delta_{t+1, t}]_{j}} = o\left( n^{-\delta} \sqrt{\log M}\right).
\een
To prove  this, we require the inner product convergence results given the other parts of the lemma to hold with a convergence rate of $n^{-\delta}$ for some $\delta >0$. This is another difference from \cite[Lemma $1$]{BayMont11}, where a minimum rate of convergence was not needed. In our case, without an $n^{-\delta}$ convergence rate, we would only have that the deviation terms satisfied $\max_{j \in sec(\ell)} \abs{[\Delta_{t+1, t}]_{j}} = o(\sqrt{\log M})$, and we would not be able to neglect them.

\item Other  differences between Lemmas \ref{lem:hb_cond},\ref{lem:main_lem} and \cite[Lemma 1]{BayMont11} include:
\begin{itemize} 
\item[--] Lemma \ref{lem:hb_cond} characterizes the the conditional distribution of the vectors $h^{t+1}$ and $b^t$, given the matrices in \eqref{eq:XYMQt} as well as $\beta_0$ and $w$, as the sum of an ideal distribution and a deviation term.  Lemma \ref{lem:hb_cond} should be compared to \cite[Lemma $1$(a)]{BayMont11}, which is  a similar  distributional characterization of $h^{t+1}$ and $b^t$, however it does not use the ideal distribution.  We found that working with the ideal distribution throughout Lemma \ref{lem:main_lem} simplified our proof.

\item[--] Lemma \ref{lem:main_lem} gives explicit values for the deterministic limits in parts (c)--(h), which are required in other parts of our proof.  
\end{itemize}
\end{itemize}

\subsection{Proof of  Theorem \ref{thm:main_amp_perf} } \label{subsec:proof_thm1}
From the definition in \eqref{eq:sec_err_def}, the event that the section error rate is larger than $\e$ can be written as
\be
\{ \mc{E}_{sec}(\mc{S}_n)  > \e \} =  \left\{ \sum_{\ell =1}^{L}  \mathbf{1} \{ \hat{\beta}_\ell \neq \beta_{0_\ell} \} > L \e \right\}.
\label{eq:sec_err_event}
\ee
When a section $\ell$ is decoded in error, the correct non-zero entry has no more than half the total mass of section
$\ell$ at the termination step $T^*$. That is, $\beta^{T^*}_{\textsf{sent}(\ell)} \leq  \frac{1}{2} \sqrt{n P_ \ell}$
where $\textsf{sent}(\ell)$ is the index of the non-zero entry in section $\ell$ of the true message $\beta_0$.  Since $\beta_{0_\textsf{sent}(\ell)} = \sqrt{n P_\ell}$, we therefore have
\be
\mathbf{1} \{ \hat{\beta}_\ell \neq \beta_{0_\ell} \} \ \  \Rightarrow \ \ \norm{ \beta^{T^*}_\ell - \beta_{0_\ell}}^2 \geq  \frac{n P_\ell}{4}, \quad \ell \in [L].
\label{eq:sec_error_implies}
\ee
Hence when \eqref{eq:sec_err_event} holds, we  have
\be
\begin{split}
&\norm{\beta^{T^*} - \beta_0}^2 = \sum_{\ell=1}^L \, \norm{\beta^{T^*}_\ell - \beta_{0_\ell}}^2 \stackrel{(a)}{\geq}
\sum_{\ell=1}^L   \mathbf{1} \{ \hat{\beta}_\ell \neq \beta_{0_\ell} \}  \frac{nP_\ell}{4} \, \\
&\quad \stackrel{(b)}{\geq} \, L\e \frac{nP_L}{4}  \stackrel{(c)}{\geq} \,
\frac{n \, \e \, \sigma^2 \ln(1 + \snr)}{4},
\label{eq:sec_error_chain}
\end{split}
\ee
where $(a)$ follows from   \eqref{eq:sec_error_implies}; $(b)$ is obtained using \eqref{eq:sec_err_event}, and the fact that
$P_\ell > P_L$ for $\ell \in [L-1]$ for the exponentially decaying power allocation in \eqref{eq:exp_power_alloc}; $(c)$ is obtained using the first-order Taylor series lower bound
$L P_L \geq \sigma^2 \ln(1+\tfrac{P}{\sigma^2})$. We therefore conclude that
\be
\{ \mc{E}_{sec}(\mc{S}_n)  > \e \} \ \Rightarrow \  \left\{ \frac{\norm{\beta^{T^*} - \beta_0}^2 }{n}  \geq \frac{\e \, \sigma^2 \ln(1 + \snr)}{4} \right\}.
\label{eq:sec_error_exp}
\ee
Now, from \eqref{eq:He} of Lemma \ref{lem:main_lem}(e), we know that
\be
\lim  \frac{\norm{\beta^{T^*} - \beta_0}^2 }{n}  = \lim \frac{\norm{q^{T^*}}^2}{n}  \stackrel{a.s.}{=}   P(1 - \bar{x}_{T^*} )\stackrel{(a)}{=} 0,
\label{eq:betaTst_conv}
\ee
where $(a)$ follows from Lemma \ref{lem:lim_xt_taut}, which implies that  $\xi_{T^*- 1}=1$ for  $T^*=\left\lceil \frac{2 \mc{C}}{\log(\mc{C}/R)} \right\rceil$,  and hence $\bar{x}_{T^*} =1$. Thus we have shown in  \eqref{eq:betaTst_conv} that
$\frac{\norm{\beta^{T^*} - \beta_0}^2 }{n}$ converges almost surely to zero, i.e.,
\be
\lim_{n_0 \to \infty} P\left( \frac{\norm{\beta^{T^*} - \beta_0}^2 }{n}  < \e, \ \forall n \geq n_0 \right) = 1
\ee
for any $\e >0$. From \eqref{eq:sec_error_exp}, this implies that for $\e'= \frac{4\e}{\sigma^2 \ln(1 + \snr)}$, 
\be
\lim_{n_0 \to \infty} P\left(  \mc{E}_{sec}(\mc{S}_n)   \leq  \e', \ \forall n \geq n_0 \right) = 1.
\ee

\section{Proof of Lemma  \ref{lem:main_lem}} \label{sec:lem1_proof}

\subsection{Useful Probability and Linear Algebra Results}

We  list some results that will be used  in the proof of Lemma \ref{lem:main_lem}. Most of these can be found in \cite[Section III.G]{BayMont11}, but we summarize them here for completeness.

\begin{fact}
Let  $u \in \mathbb{R}^N$ and $v \in \mathbb{R}^n$ be deterministic vectors such that $\lim_{n \to \infty} \norm{u}^2/n$ and $\lim_{n \to \infty} \norm{v}^2/n$ both exist and are finite. Let $\tilde{A} \in \mathbb{R}^{n \times N}$ be a matrix with independent  $\mc{N}(0, 1/n)$ entries. Then:

(a)
\begin{equation}
\tilde{A} u \overset{d}{=} \frac{\norm{u}}{\sqrt{n}} Z_u  \quad \text{ and }  \quad \tilde{A}^*v \overset{d}{=} \frac{\norm{v}}{\sqrt{n}} Z_v,
\label{eq:Au_dist}
\end{equation}
where $Z_u \in \mathbb{R}^n$ and $Z_v \in \mathbb{R}^N$  are each i.i.d. standard Gaussian random vectors. Consequently,
\begin{equation}
\lim_{n \rightarrow \infty} \frac{\norm{\tilde{A}u}^2}{n} \overset{a.s.}{=}  \lim_{n \rightarrow \infty}  \frac{\norm{u}^2}{n} \sum_{i=1}^n \frac{Z^2_{u,i}}{n} \overset{a.s.}{=}  \lim_{n \rightarrow \infty}  \frac{\norm{u}^2}{n},
\end{equation}
\begin{equation}
\label{fact1b}
\lim_{n \rightarrow \infty} \frac{\norm{\tilde{A}^*v}^2}{N} \overset{a.s.}{=} \lim_{n \rightarrow \infty}  \frac{\norm{v}^2}{n} \sum_{j=1}^N \frac{Z^2_{v,j}}{N} \overset{a.s.}{=}  \lim_{n \rightarrow \infty}  \frac{\norm{v}^2}{n}.
\end{equation}

(b) Let $\mc{W}$  be a $d$-dimensional subspace of $\mathbb{R}^n$ for $d \leq n$. Let $(w_1, ..., w_d)$ be an orthogonal basis of $\mc{W}$ with $\norm{w_i}^2 = n$ for
$i \in [d]$, and let  $\mathsf{P}_\mc{W}$ denote  the orthogonal projection operator onto $\mc{W}$.  Then for $D = [w_1\mid \ldots \mid w_d]$, we have $\mathsf{P}_{\mc{W}} \tilde{A} u \overset{d}{=} \frac{\norm{u}}{\sqrt{n}}  \mathsf{P}_\mc{W} Z_u  \overset{d}{=} \frac{\norm{u}}{\sqrt{n}} Dx$ where $x \in \mathbb{R}^d$ is a random vector with i.i.d. $\mc{N}(0, 1/n)$ entries. Therefore
$\lim_{n \to \infty} n^{\delta}\norm{x} \stackrel{a.s.}{=} 0$ for any constant $\delta \in [0,0.5)$. (The limit is taken with $d$ fixed.)
\label{fact:gauss_p0}
\end{fact}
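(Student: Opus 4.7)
Both parts follow from standard properties of i.i.d.\ Gaussian matrices; my plan is to reduce each claim to a direct distributional computation, plus a short Borel--Cantelli argument to obtain the almost-sure rate in part (b).

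For part (a), I would first verify the distributional identity $\tilde{A}u \stackrel{d}{=} (\norm{u}/\sqrt{n}) Z_u$. Each coordinate $(\tilde{A}u)_i = \sum_{j=1}^N \tilde{A}_{ij} u_j$ is a linear combination of independent $\mc{N}(0, 1/n)$ entries, hence Gaussian with mean zero and variance $\norm{u}^2/n$; independence across $i$ comes from the independence of the rows of $\tilde{A}$. The analogous identity for $\tilde{A}^* v$ follows by the same argument applied to the (i.i.d.\ Gaussian) columns. The two almost-sure limits then reduce to the strong law of large numbers applied to $(1/n)\sum_{i=1}^n Z_{u,i}^2 \to 1$ and $(1/N)\sum_{j=1}^N Z_{v,j}^2 \to 1$.

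For part (b), I would exploit the rotational invariance of the Gaussian measure together with an explicit computation of the projection coefficients. Since $\{w_i/\sqrt{n}\}_{i=1}^d$ is an orthonormal basis of $\mc{W}$, the projection admits the representation $\mathsf{P}_\mc{W} = (1/n)\, D D^*$. Combining with part (a), $\mathsf{P}_\mc{W} \tilde{A}u \stackrel{d}{=} (\norm{u}/\sqrt{n})\, \mathsf{P}_\mc{W} Z_u = (\norm{u}/\sqrt{n})\, D x$ where $x := D^* Z_u / n \in \mathbb{R}^d$ has entries $x_i = w_i^* Z_u / n$. Orthogonality of the $w_i$'s together with $\norm{w_i}^2 = n$ give $\mathbb{E}[x_i x_j] = (w_i^* w_j)/n^2 = (1/n)\,\mathbf{1}\{i=j\}$, so $x$ has i.i.d.\ $\mc{N}(0,1/n)$ entries as claimed.

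The only nontrivial piece, and the one I would expect to require the most care, is to upgrade the obvious $O_P(n^{-1/2})$ bound on $\norm{x}$ to the almost-sure rate $n^\delta \norm{x} \to 0$ for every $\delta \in [0,1/2)$. The plan is a Borel--Cantelli argument using the sub-exponential concentration of $\chi^2_d$: for each $n$, $n \norm{x}^2 \sim \chi^2_d$ (with $d$ fixed), so
\ben
P\bigl(n^\delta \norm{x} > \epsilon\bigr) \;=\; P\bigl(\chi^2_d > \epsilon^2 n^{1-2\delta}\bigr) \;\leq\; C\, e^{-c\, \epsilon^2 n^{1-2\delta}}
\een
for all sufficiently large $n$. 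Since $1-2\delta > 0$, these probabilities are summable in $n$, and a standard Borel--Cantelli argument followed by taking $\epsilon \downarrow 0$ along a countable sequence delivers the stated almost-sure convergence. Aside from this tail-bound step, the proof is routine Gaussian bookkeeping.
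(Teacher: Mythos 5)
Your proof is correct and follows the standard route that the paper implicitly relies on by deferring to \cite[Section III.G]{BayMont11}: the distributional identities in (a) are the usual row/column Gaussian computation, the SLLN gives the almost-sure limits, and in (b) the identification $\mathsf{P}_\mc{W} = \tfrac{1}{n}DD^*$ together with a $\chi^2_d$ tail bound and Borel--Cantelli yields the $n^{\delta}\norm{x} \to 0$ rate. Nothing essential is missing and the approach matches what the paper intends.
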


\begin{fact}[Strong Law for Triangular Arrays]
Let $\{ X_{n,i}: \, i \in [n], n \geq 1\}$ be a triangular array of random variables  such that for each $n$ $(X_{n,1}, \ldots, X_{n,n})$ are mutually independent, have zero mean, and satisfy
\be
\frac{1}{n} \sum_{i=1}^n \expec \abs{X_{n,i}}^{2+\kappa} \leq c n^{\kappa/2} \quad \text{ for some } \kappa \in (0,1) \text { and } c < \infty.
\ee
Then $\frac{1}{n} \sum_{i=1}^n X_{n,i} \to 0$ almost surely as $n \to \infty$.
\label{fact:slln}
\end{fact}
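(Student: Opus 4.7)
The plan is to invoke the first Borel--Cantelli lemma on the events $\{|S_n|/n > \epsilon\}$ (where $S_n := \sum_{i=1}^n X_{n,i}$) for each $\epsilon > 0$, and then take a countable intersection over $\epsilon \in \{1/k : k \in \mathbb{N}\}$ to conclude $S_n/n \to 0$ almost surely. Because the rows of a triangular array are unrelated across $n$, there is no monotonicity to exploit and no subsequence coupling available, so direct summability of $\sum_{n} P(|S_n|/n > \epsilon)$ is really the only route.

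For the tail bound I would use a Fuk--Nagaev-type inequality for sums of independent mean-zero summands: for $p \geq 2$,
\begin{equation*}
P(|S_n| > \lambda) \leq C_p\, \frac{B_p^{(n)}}{\lambda^p} + 2\exp\!\left(-c_p\, \frac{\lambda^2}{V_n}\right),
\end{equation*}
with $B_p^{(n)} := \sum_i \expec|X_{n,i}|^p$ and $V_n := \sum_i \expec X_{n,i}^2$. Applied with $p = 2+\kappa$ and $\lambda = n\epsilon$, the hypothesis gives $B_{2+\kappa}^{(n)} \leq c\,n^{1+\kappa/2}$ directly, so the polynomial term is $O(\epsilon^{-(2+\kappa)} n^{-(1+\kappa/2)})$, summable in $n$ since $1+\kappa/2 > 1$. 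For the exponential term I would bound $V_n$ via Lyapunov, $\expec X_{n,i}^2 \leq (\expec|X_{n,i}|^{2+\kappa})^{2/(2+\kappa)}$, and then Jensen applied to the concave map $x \mapsto x^{2/(2+\kappa)}$, which gives $V_n \leq c'\, n^{(2+2\kappa)/(2+\kappa)}$. Consequently $\lambda^2/V_n \geq c''\,\epsilon^2\, n^{2/(2+\kappa)}$, so the exponential piece is super-summable. Adding the two contributions and summing in $n$ produces the Borel--Cantelli input.

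The subtle point, and the real obstacle, is that a naive Markov approach does \emph{not} suffice. Plugging $p = 2+\kappa$ into Rosenthal's inequality yields $\expec|S_n|^{2+\kappa} = O(n^{1+\kappa})$ (the $V_n^{(2+\kappa)/2}$ branch dominates the $B_{2+\kappa}^{(n)}$ branch), so Markov alone produces $P(|S_n|/n > \epsilon) = O(1/n)$, which is \emph{just} non-summable and closes no Borel--Cantelli argument. The virtue of Fuk--Nagaev is precisely that it preserves the sharp heavy-tail rate $B_p/\lambda^p$ without incurring the Rosenthal cost, while absorbing the variance into a Gaussian-type exponential where the gain $n^{2/(2+\kappa)}$ is more than enough. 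An equivalent, more elementary route is to truncate each $X_{n,i}$ at a level $t_n$ chosen so that $\sum_i P(|X_{n,i}| > t_n)$ is summable (forcing $t_n \gtrsim n^{(4+\kappa)/(2(2+\kappa))}$) and then apply Bennett's inequality to the centered truncated sum; tuning $t_n$ just above this threshold balances the two contributions and reproduces the same conclusion.
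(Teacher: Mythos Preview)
The paper does not supply a proof of this statement: it is listed among the ``useful probability and linear algebra results'' in Section~VI.A and attributed to \cite[Section~III.G]{BayMont11}, so there is no in-paper argument to compare against. Your proposal is therefore filling a gap the authors deliberately left to a reference.

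That said, your argument is sound. The Fuk--Nagaev route is exactly the right tool here, and your diagnosis of why a bare Rosenthal/Markov bound stalls at $O(1/n)$ is accurate: the variance branch $V_n^{(2+\kappa)/2}$ dominates and yields only a borderline non-summable tail. Your variance estimate $V_n \leq c' n^{(2+2\kappa)/(2+\kappa)}$ via Lyapunov plus Jensen is correct, and the resulting exponent $\lambda^2/V_n \gtrsim n^{2/(2+\kappa)}$ makes the sub-Gaussian piece decay stretched-exponentially, so summability is comfortable. The polynomial piece $B_{2+\kappa}^{(n)}/(n\epsilon)^{2+\kappa} = O(n^{-1-\kappa/2})$ is summable for every $\kappa>0$, as you note. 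The truncation-plus-Bennett alternative you sketch is also standard and would work with the same bookkeeping. One minor point: Fuk--Nagaev in the form you quote is usually stated for $p>2$, which is satisfied here since $\kappa>0$; just make sure the constants $C_p,c_p$ you invoke are the ones valid in that regime.
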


\begin{fact}
Let $v \in \mathbb{R}^n$ be a random vector with i.i.d. entries $\sim p_V$ where the measure $p_V$ has bounded second moment. Then for any function $\psi$ that is pseudo-Lipschitz  of order two: %and any $\delta \in (0, \tfrac{1}{4})$:
\be
\lim_{n \to \infty} \frac{1}{n} \sum_{i=1}^n \psi(v_i) \stackrel{a.s.}{=}  \expec_{p_V}[\psi(V)]
\ee
with convergence rate $n^{-\delta}$, for some $\delta \in (0,1/4)$.
\label{fact:lip_slln}
\end{fact}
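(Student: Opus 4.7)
My plan is to reduce the claim to an application of Fact \ref{fact:slln} (the triangular-array SLLN) combined with a quantitative moment estimate to extract the $n^{-\delta}$ convergence rate. First I would set $Y_i := \psi(v_i) - \expec_{p_V}[\psi(V)]$, which are i.i.d., centered random variables (so in particular a trivial triangular array). The pseudo-Lipschitz property of order two applied with one argument set to $0$ yields the pointwise growth bound $\abs{\psi(x)} \le \abs{\psi(0)} + C(1+\abs{x})\abs{x} \le C'(1+\abs{x}^2)$. Combined with the hypothesis that $p_V$ has sufficiently many moments (the stated bounded second moment must be read as ``enough moments to make the following estimates finite''), this gives $\expec\abs{Y_i}^{p} < \infty$ for some $p>2$, which is exactly the ingredient required by Fact \ref{fact:slln}.

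For the almost sure convergence, I would simply invoke Fact \ref{fact:slln} applied to $X_{n,i}=Y_i$: the moment condition $\tfrac{1}{n}\sum_i \expec\abs{Y_i}^{2+\kappa} = \expec\abs{Y_1}^{2+\kappa} \le c n^{\kappa/2}$ is satisfied for any $\kappa \in (0,1)$ since the left-hand side is bounded by a constant. This yields $\tfrac{1}{n}\sum_{i=1}^n \psi(v_i) \to \expec_{p_V}[\psi(V)]$ a.s.

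To upgrade this to a rate of $n^{-\delta}$ with $\delta \in (0,1/4)$, the plan is to use a higher-moment Chebyshev estimate followed by Borel--Cantelli. Writing $S_n := \sum_{i=1}^n Y_i$, independence and centering give the standard fourth-moment bound $\expec[S_n^4] = n\expec[Y_1^4] + 3n(n-1)(\expec[Y_1^2])^2 \le K n^2$ for some constant $K$, using that $\expec[Y_1^4] \le C'' \expec[(1+V^2)^4] < \infty$ under sufficient moments on $p_V$. Then for any fixed $\delta \in (0,1/4)$,
\begin{equation*}
\sum_{n=1}^\infty P\Bigl(\bigl|\tfrac{1}{n}S_n\bigr| > n^{-\delta}\Bigr) \le \sum_{n=1}^\infty \frac{\expec[S_n^4]}{n^{4-4\delta}} \le \sum_{n=1}^\infty \frac{K}{n^{2-4\delta}} < \infty,
\end{equation*}
since $2-4\delta > 1$. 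The Borel--Cantelli lemma then gives $n^{\delta}\,\tfrac{1}{n}\sum_i \psi(v_i) \to \expec_{p_V}[\psi(V)]$ almost surely, as claimed.

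The only delicate point is the moment hypothesis: the growth bound $|\psi(v)| \le C'(1+v^2)$ inherited from the pseudo-Lipschitz condition of order two means that controlling $\expec[Y_1^4]$ requires \emph{eighth} moments of $p_V$, not merely second moments. I would therefore interpret (or amend) the hypothesis to require that $p_V$ has finite moments of all orders needed by the subsequent applications in Section \ref{sec:lem1_proof} (which is certainly the case in our setting, where the relevant measures are Gaussian or discrete with bounded support). With that caveat the argument is entirely standard; no step requires more than elementary i.i.d.\ concentration.
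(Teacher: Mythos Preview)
The paper does not supply its own proof of this statement: Fact~\ref{fact:lip_slln} is listed among the auxiliary results at the start of Section~\ref{sec:lem1_proof} with the blanket attribution ``Most of these can be found in \cite[Section III.G]{BayMont11},'' and no argument is given in the text. So there is no paper proof to compare against; your task is simply to supply a correct justification.

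Your argument is correct and is the standard one. Two minor remarks. First, the detour through Fact~\ref{fact:slln} for the bare a.s.\ convergence is unnecessary: since $|\psi(v)|\le C'(1+|v|^2)$ and $p_V$ has finite second moment, $\psi(v_1)$ has finite mean and the ordinary Kolmogorov SLLN already gives $\tfrac1n\sum_i\psi(v_i)\to\expec[\psi(V)]$ a.s. The content of the fact is entirely in the rate, and your fourth-moment/Borel--Cantelli step is exactly the right way to get $\delta\in(0,1/4)$. (Note that pushing the rate through Fact~\ref{fact:slln} by rescaling $X_{n,i}=n^\delta Y_i$ would only yield $\delta<\tfrac{\kappa/2}{2+\kappa}\le 1/6$ under the constraint $\kappa\in(0,1)$, so your direct approach is actually sharper.)

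Second, your diagnosis of the moment hypothesis is correct and worth stating plainly: the fourth-moment bound $\expec[Y_1^4]<\infty$ requires $\expec[V^8]<\infty$, so ``bounded second moment'' as written is not sufficient for the claimed rate. In the paper's single use of this fact (equation~\eqref{eq:lip_lln0}) the underlying variables $w_i$ are Gaussian, so all moments are finite and the issue is moot; your proposed reading of the hypothesis is the right one.
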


\begin{fact}[Stein's lemma]
For zero-mean jointly Gaussian random variables $Z_1, Z_2$, and any function $f:\mathbb{R} \to \mathbb{R}$ for which $\expec[Z_1 f(Z_2)]$ and $\expec[f'(Z_2)]$  both exist, we have $\expec[Z_1 f(Z_2)] = \expec[Z_1Z_2] \expec[f'(Z_2)]$.
\label{fact:stein}
\end{fact}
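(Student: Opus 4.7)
The plan is to reduce the bivariate statement to a one-dimensional integration-by-parts identity via an orthogonal Gaussian decomposition. First I would dispose of the degenerate case: if $\mathrm{Var}(Z_2) = 0$ then both $\expec[Z_1 f(Z_2)] - \expec[Z_1 Z_2]\expec[f'(Z_2)]$ vanish trivially, so I may assume $\sigma^2 := \expec[Z_2^2] > 0$. Setting $a := \expec[Z_1 Z_2]/\sigma^2$ and $W := Z_1 - a Z_2$, the pair $(W, Z_2)$ is jointly Gaussian with $\expec[W] = 0$ and $\expec[W Z_2] = 0$; since uncorrelated jointly Gaussian variables are independent, $W \perp Z_2$. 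Therefore
\begin{equation*}
\expec[Z_1 f(Z_2)] = a\, \expec[Z_2 f(Z_2)] + \expec[W]\,\expec[f(Z_2)] = a\, \expec[Z_2 f(Z_2)],
\end{equation*}
where the hypothesis that $\expec[Z_1 f(Z_2)]$ exists guarantees that $\expec[Z_2 f(Z_2)]$ exists as well (split $Z_1 = aZ_2 + W$ and use $\expec|W|\,\expec|f(Z_2)| < \infty$ via Cauchy--Schwarz on $\expec[Z_1 f(Z_2)]$, or invoke the stated existence directly).

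Next I would prove the scalar identity $\expec[Z_2 f(Z_2)] = \sigma^2 \expec[f'(Z_2)]$ for $Z_2 \sim \mc{N}(0,\sigma^2)$. Writing $\phi_\sigma(z) = (2\pi\sigma^2)^{-1/2} e^{-z^2/(2\sigma^2)}$, the key observation is the differential identity $z\, \phi_\sigma(z) = -\sigma^2 \phi_\sigma'(z)$. Hence
\begin{equation*}
\expec[Z_2 f(Z_2)] = \int_{-\infty}^{\infty} z f(z)\, \phi_\sigma(z)\, dz = -\sigma^2 \int_{-\infty}^{\infty} f(z)\, \phi_\sigma'(z)\, dz.
\end{equation*}
Integrating by parts (on each half-line and taking limits) yields
\begin{equation*}
-\sigma^2 \int f(z)\, \phi_\sigma'(z)\, dz = -\sigma^2 \bigl[f(z)\,\phi_\sigma(z)\bigr]_{-\infty}^{\infty} + \sigma^2 \int f'(z)\, \phi_\sigma(z)\, dz,
\end{equation*}
and the combination $a \sigma^2 = \expec[Z_1 Z_2]$ then closes the argument once the boundary contribution is shown to vanish.

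The main obstacle is precisely that boundary-vanishing step: one must show $f(z)\,\phi_\sigma(z) \to 0$ as $|z| \to \infty$ and that $f$ is absolutely continuous (so ordinary integration by parts is valid). I would argue this from the standing hypotheses as follows. The existence of $\expec[f'(Z_2)]$ forces an absolutely continuous version of $f$ with $f(z) = f(0) + \int_0^z f'(u)\, du$, and the bound $\int |f'(u)|\, \phi_\sigma(u)\, du < \infty$ together with the Gaussian tail gives $|f(z)| = o(|z|\, e^{z^2/(4\sigma^2)})$, which multiplied by $\phi_\sigma(z)$ tends to $0$. (If $f'$ is only defined almost everywhere, the standard convention in the Stein-lemma literature, e.g.\ the weak-derivative formulation, is to interpret $f$ as an element of the Sobolev-type class on which these manipulations are legitimate; this is the reading implicit in the paper's usage.) Once the boundary term is controlled, the chain of equalities above gives $\expec[Z_1 f(Z_2)] = \expec[Z_1 Z_2]\, \expec[f'(Z_2)]$, completing the proof.
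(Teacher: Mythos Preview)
The paper does not actually prove this statement: Fact~\ref{fact:stein} is listed among the ``Useful Probability and Linear Algebra Results'' in Section~\ref{sec:lem1_proof} and is simply quoted as the classical Stein identity, with a blanket reference to \cite[Section~III.G]{BayMont11}. So there is no ``paper's own proof'' to compare against; your write-up supplies what the paper deliberately omits.

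Your overall strategy---orthogonal decomposition $Z_1 = aZ_2 + W$ to reduce to the one-variable identity, then the observation $z\phi_\sigma(z) = -\sigma^2\phi_\sigma'(z)$ followed by integration by parts---is exactly the standard proof and is correct in structure. One small wobble: the specific growth estimate you quote, $|f(z)| = o(|z|\,e^{z^2/(4\sigma^2)})$, does not follow from $\int |f'|\,\phi_\sigma < \infty$ alone (take $f'(u) = e^{(1/(2\sigma^2)-\varepsilon)u^2}$ for small $\varepsilon$ as a counterexample). A cleaner way to kill the boundary term is to note that on any finite interval the parts identity gives
\[
\bigl[f(z)\phi_\sigma(z)\bigr]_a^b \;=\; -\tfrac{1}{\sigma^2}\!\int_a^b z f(z)\phi_\sigma(z)\,dz \;+\; \int_a^b f'(z)\phi_\sigma(z)\,dz,
\]
and since both integrals on the right converge absolutely by hypothesis (the first because $\expec[Z_2 f(Z_2)]$ exists, the second because $\expec[f'(Z_2)]$ exists), the limit $\lim_{b\to\infty} f(b)\phi_\sigma(b)$ exists; if it were nonzero then $|z f(z)\phi_\sigma(z)|$ would be bounded below by a constant times $|z|$ for large $z$, contradicting integrability. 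This closes the argument without any explicit growth bound on $f$. For the paper's actual applications (the $\eta^t$ functions, which are smooth and bounded) none of these technicalities bite, which is presumably why the authors were content to cite the fact without proof.
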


\begin{fact}
Let $v_1, \ldots, v_t$ be a sequence of vectors in $\mathbb{R}^n$ such that for $i \in [t]$
\[ \frac{1}{n} \norm{v_i - \mathsf{P}_{i-1}(v_i)}^2 \geq c, \]
where $c$ is a positive constant and $\mathsf{P}_{i-1}$ is the orthogonal projection onto the span of $v_1, \ldots, v_{i-1}$.Then the matrix $C \in \mathbb{R}^{t \times t}$
with $C_{ij} = v^*_i v_j / n$ has minimum eigenvalue $\lambda_{\min} \geq c'$, where $c'$ is a strictly positive constant (depending  only on $c$ and $t$).
\label{fact:eig_proj}
\end{fact}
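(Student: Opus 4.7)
The plan is to reduce the claim to an elementary statement about the QR factorization of $V := [v_1 \mid \cdots \mid v_t]$. I would define $u_i$ as the unit vector pointing in the direction of $v_i - \mathsf{P}_{i-1}(v_i)$; by construction $\{u_1,\ldots,u_t\}$ is orthonormal and $v_i = \sum_{j\leq i} R_{ji}\, u_j$, with $R_{ii} = \norm{v_i - \mathsf{P}_{i-1}(v_i)}$ and $R_{ji} = \langle u_j,v_i\rangle$ for $j<i$. Collecting these identities yields $V = UR$ with $U \in \mathbb{R}^{n\times t}$ having orthonormal columns and $R \in \mathbb{R}^{t\times t}$ upper triangular, so that $C = V^* V/n = R^* R/n$ and the hypothesis translates directly into the diagonal lower bound $R_{ii}^2/n \geq c$.

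From this reduction, the determinant lower bound is immediate:
\[ \det C \;=\; \prod_{i=1}^t R_{ii}^2/n \;\geq\; c^t. \]
Since $C$ is symmetric positive semidefinite, $\det C$ is the product of its eigenvalues, whence
\[ \lambda_{\min}(C) \;\geq\; \frac{\det C}{\lambda_{\max}(C)^{t-1}} \;\geq\; \frac{c^t}{\lambda_{\max}(C)^{t-1}}, \]
and the trace bound $\lambda_{\max}(C) \leq \mathrm{tr}(C) = \sum_{i=1}^t \norm{v_i}^2/n$ converts any uniform upper bound on $\norm{v_i}^2/n$ into a quantitative lower bound on $\lambda_{\min}(C)$. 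In particular, if $K$ is such an upper bound, one obtains the explicit value $c' = c^t/(tK)^{t-1}$.

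I expect the only substantive obstacle to be verifying that the norms $\norm{v_i}^2/n$ are bounded above in the settings where this fact is applied: without such a bound the conclusion can fail (for instance, taking $v_1 = e_1$ and $v_2 = Ke_1 + e_2$ in $\mathbb{R}^2$ satisfies the hypothesis with $c = \tfrac{1}{2}$ but gives $\lambda_{\min}(C) = \Theta(K^{-2})$). In the two places Fact~\ref{fact:eig_proj} is invoked in this paper---namely, ensuring invertibility of $M_t^* M_t$ and $Q_t^* Q_t$ for the conditioning arguments underlying Lemma~\ref{lem:A_conddist}---such upper bounds are supplied by the inner-product limits \eqref{eq:Be} and \eqref{eq:He} established inductively within Lemma~\ref{lem:main_lem}, so $c'$ can be treated as a constant depending only on $c$, $t$, and those ambient state-evolution quantities.
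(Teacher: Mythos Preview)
The paper does not actually supply a proof of Fact~\ref{fact:eig_proj}; it is listed among results ``summarized here for completeness'' from \cite{BayMont11}. So there is no in-paper argument to compare against, and your QR-based reduction is a standard and correct way to approach this type of statement.

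More importantly, you have identified a genuine issue with the statement as written: the conclusion $\lambda_{\min}(C)\geq c'$ with $c'$ depending \emph{only} on $c$ and $t$ is false without an accompanying upper bound on the $\norm{v_i}^2/n$. Your counterexample ($v_1=e_1$, $v_2=Ke_1+e_2$ in $\mathbb{R}^2$) is valid and shows $\lambda_{\min}(C)=\Theta(K^{-2})$ while the hypothesis holds with $c=\tfrac12$. Your diagnosis that the fact is being used in a context where such bounds are already available---via the limits in \eqref{eq:Be} and \eqref{eq:He}---is exactly right, and the explicit constant $c'=c^t/(tK)^{t-1}$ you derive is correct once $K$ bounds the diagonal of $C$. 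So your proposal is sound as a proof of the intended (slightly stronger-hypothesis) statement, and your commentary on the missing hypothesis is a worthwhile observation about the paper's exposition.
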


\begin{fact}
Let $\{ S_n \}_{n \geq 1}$ be a sequence of $t \times t$ matrices such that $\lim_{n \to \infty} S_n = S_{\infty}$ where the limit is element-wise. Then if $\liminf_{n \to \infty} \lambda_{\min}(S_n) \geq c$ for a positive constant $c$, then 
$\lambda_{\min}(S_\infty) \geq c$.
\label{fact:eig_conv}
\end{fact}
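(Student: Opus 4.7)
The plan is to prove Fact~\ref{fact:eig_conv} by combining the Rayleigh--Ritz variational characterization of $\lambda_{\min}$ with the trivial observation that element-wise convergence in the finite-dimensional space of $t \times t$ matrices is the same as convergence in any matrix norm. In the intended application (Fact~\ref{fact:eig_proj} and the Gram matrices arising in Lemma~\ref{lem:main_lem}), the $S_n$ are real symmetric, so I will work under that assumption; this gives us $\lambda_{\min}(S) = \min_{x \in \mathbb{R}^t,\, \|x\|=1} x^{*} S x$ and guarantees that $\lambda_{\min}$ is real-valued.

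The argument has three short steps. First, note that since the matrices are of fixed size $t \times t$, element-wise convergence $S_n \to S_\infty$ implies convergence in operator norm: for every unit $x \in \mathbb{R}^t$,
\[
 \bigl| x^{*} S_n x - x^{*} S_\infty x \bigr| \;\le\; \|S_n - S_\infty\|_{\mathrm{op}} \;\longrightarrow\; 0,
\]
so the scalar sequence $x^{*} S_n x$ actually converges (not merely has a liminf) to $x^{*} S_\infty x$. Second, for each fixed unit $x$, the Rayleigh--Ritz inequality gives $x^{*} S_n x \ge \lambda_{\min}(S_n)$ for every $n$. Taking $\liminf_{n\to\infty}$ on both sides and using the hypothesis $\liminf_n \lambda_{\min}(S_n) \ge c$ yields
\[
 x^{*} S_\infty x \;=\; \lim_{n\to\infty} x^{*} S_n x \;\ge\; \liminf_{n\to\infty} \lambda_{\min}(S_n) \;\ge\; c.
\]
Third, since this lower bound holds uniformly over all unit vectors $x$, minimizing the left-hand side over the unit sphere of $\mathbb{R}^t$ (which is compact, so the minimum is attained by some $x^{\star}$) and invoking the variational formula a second time gives $\lambda_{\min}(S_\infty) = \min_{\|x\|=1} x^{*} S_\infty x \ge c$.

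There is no real obstacle here; the only small point requiring care is the order of operations in step three, namely that we derive a pointwise bound $x^{*} S_\infty x \ge c$ valid for \emph{every} unit $x$ before passing to the minimum, which is legitimate because each individual inequality is obtained from the genuine (not merely lim-inferior) limit of the scalars $x^{*} S_n x$. An alternative one-line proof would appeal to the general continuity of the spectrum of a matrix under entrywise perturbation, but the Rayleigh-quotient route above is more elementary and self-contained, and it is what I would include in the paper.
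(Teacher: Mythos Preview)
Your proof is correct. The paper itself does not supply a proof of Fact~\ref{fact:eig_conv}; it is listed among the standard facts in Section~\ref{sec:lem1_proof} with a reference to \cite[Section~III.G]{BayMont11}, so there is no paper proof to compare against. Your Rayleigh--Ritz argument is exactly the elementary route one would expect, and your explicit restriction to the real symmetric case is appropriate since the only matrices to which the fact is applied in the paper are the Gram matrices $\tfrac{1}{n}M_t^*M_t$ and $\tfrac{1}{n}Q_{t+1}^*Q_{t+1}$.
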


\begin{fact}
\label{lem:maxZbound}
Let $Z_1, Z_2, \ldots$ be i.i.d. standard Gaussian random variables. For any constant $K >1$, with probability $1$ we have 
\[ \max_{j \in [M]} \, \abs{Z_j}  \leq \sqrt{ 2 K \log M} \text{  for all  sufficiently large } M. \]
\end{fact}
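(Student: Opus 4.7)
The statement is a classical extremal result for i.i.d.\ Gaussians; I would prove it by combining a Chernoff tail bound with a union bound and a Borel--Cantelli argument on a geometric subsequence, then extend to all $M$ by monotonicity of $M \mapsto \max_{j \in [M]} \abs{Z_j}$. The mild subtlety is that $K$ can be arbitrarily close to $1$, so naively summing over all $M$ does not work and the subsequence step is essential.

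The first step is the standard Gaussian tail estimate $P(Z_1 > t) \leq e^{-t^2/2}$ for $t \geq 0$, obtained by applying Markov's inequality to $e^{\lambda Z_1}$ at $\lambda = t$ using $\expec[e^{\lambda Z_1}] = e^{\lambda^2/2}$. Combined with symmetry of $Z_1$ and the union bound this yields
\[ P\Bigl( \max_{j \in [M]} \abs{Z_j} > t \Bigr) \leq 2 M e^{-t^2/2}. \]
Fix any $K' \in (1, K)$ and set $t_M := \sqrt{2 K' \log M}$, so that the bound becomes $2 M^{1 - K'}$. Summed over all $M$ this diverges whenever $K' \leq 2$, so instead I would restrict to the subsequence $M_k := 2^k$, along which $\sum_k 2 M_k^{1-K'} = 2 \sum_k 2^{k(1-K')}$ is finite because $K' > 1$. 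By the first Borel--Cantelli lemma, almost surely there is a (random) index $k_0$ such that $\max_{j \in [M_k]} \abs{Z_j} \leq \sqrt{2 K' \log M_k}$ for every $k \geq k_0$.

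To pass from the subsequence to all $M$, given an arbitrary $M \geq M_{k_0}$, pick the unique $k \geq k_0$ with $M_k \leq M < M_{k+1}$. Monotonicity of the maximum yields
\[ \max_{j \in [M]} \abs{Z_j} \leq \max_{j \in [M_{k+1}]} \abs{Z_j} \leq \sqrt{2 K' (k+1) \log 2}. \]
Since $\log M \geq k \log 2$, the target bound $\sqrt{2 K \log M}$ will dominate the right-hand side as soon as $K'(k+1) \leq K k$, i.e.\ $k \geq K'/(K - K')$. This holds for all sufficiently large $k$ (a deterministic threshold depending only on $K, K'$), and hence for all sufficiently large $M$, giving $\max_{j \in [M]} \abs{Z_j} \leq \sqrt{2 K \log M}$ almost surely.

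The only real obstacle is the narrow regime $K \in (1, 2]$: there the union-bound tails $M^{1-K'}$ fail to be summable over all $M$, which forces both the geometric-subsequence device and the introduction of the slack parameter $K' \in (1, K)$. Everything else is a routine Chernoff/Borel--Cantelli calculation.
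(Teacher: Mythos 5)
Your proof is correct, and you have in fact put your finger on a genuine gap in the paper's own argument. The paper's proof sets $x=\sqrt{2K\ln M}$, bounds $P(\max_{j\in[M]} Z_j > x)$ by $M^{-(K-1)}$, and then asserts that $\sum_{M\geq 1} M^{-(K-1)} < \infty$ ``for $K>1$'' — but this $p$-series converges only when $K-1>1$, i.e.\ $K>2$. So the paper's direct Borel--Cantelli argument over all integers $M$ does not cover the stated range $K\in(1,2]$. (The downstream use of the fact, which invokes $\sqrt{3\log M}$ i.e.\ $K=3/2$, is unaffected since any constant multiple of $\sqrt{\log M}$ would serve there, but the fact as stated is not fully proved.) Your version patches this in exactly the right way: introduce a slack $K'\in(1,K)$, restrict Borel--Cantelli to the geometric subsequence $M_k=2^k$ (along which the bound is summable for every $K'>1$, since the ratio $2^{1-K'}<1$), and then extend to all $M$ by monotonicity of $M\mapsto\max_{j\in[M]}\abs{Z_j}$ together with $\log M \geq k\log 2$ for $M\geq 2^k$. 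The two proofs are otherwise the same in spirit — Gaussian Chernoff tail, union bound, Borel--Cantelli — but your subsequence-plus-monotonicity device is precisely what is needed to make the claim hold all the way down to $K>1$.

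One minor note: after setting $t_M=\sqrt{2K'\log M}$ with the paper's convention $\log=\log_2$, the union bound actually gives $2M^{1-K'/\ln 2}$ rather than $2M^{1-K'}$; since $K'/\ln 2 > K'$, the bound you wrote is a valid (weaker) upper bound, so the conclusion is unaffected, but it is worth keeping the base of the logarithm straight when plugging into $e^{-t^2/2}$.
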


\begin{proof}
 For  $x > 0$, we have
$P( \max_{j \in [M]} Z_j \, > x) = 1 - (P(Z_1  \leq x ))^M = 1 - (1 - \mc{Q}(x))^M$,
where $\mc{Q}(x) =  \int_{x}^\infty \frac{1}{\sqrt{2\pi}} e^{-u^2/2} du$. Using  $\mc{Q}(x) < e^{-x^2/2}$ for $x >0$ and setting $x = \sqrt{2K \ln M}$, we obtain 
\ben
P( \max_{j \in [M]} Z_j \, > \sqrt{2K \ln M}) \leq 1 - \left(1 - \frac{1}{M^K}\right)^M \leq  
\frac{1}{M^{K-1}},
\een
where we have used $(1-y)^M \geq (1-My)$ for $y \in (0,1)$. Hence for $K > 1$, we have
\ben
\sum_{M =1}^{\infty} P( \max_{j \in [M]} Z_j \, > \sqrt{2K \ln M}) \leq  \sum_{M=1}^\infty \frac{1}{M^{K-1}} < \infty.
\een
Therefore the Borel-Cantelli lemma implies that  {with probability 1},  the event $\{ \max_{j \in [M]} Z^{(M)}_j \, > \sqrt{2K \ln M}\}$ occurs only for finitely many $M$. By a symmetrical argument, we can show that with probability 1, the event  $\{ \min_{j \in [M]} Z_j \, < - \sqrt{2K \ln M}\}$ also occurs only for finitely many $M$.
\end{proof}

\subsection{Proof of Lemma \ref{lem:main_lem}} 

The proof proceeds by induction on $t$.  We label as $\mathcal{H}^{t+1}$ the results \eqref{eq:Ha}, \eqref{eq:Hb2}, \eqref{eq:Hc}, \eqref{eq:Hd}, \eqref{eq:He}, \eqref{eq:Hf}, \eqref{eq:Hg}, \eqref{eq:Hh} and similarly as $\mathcal{B}^t$ the results \eqref{eq:Ba}, \eqref{eq:Bb}, \eqref{eq:Bc}, \eqref{eq:Bd}, \eqref{eq:Be}, \eqref{eq:Bf}, \eqref{eq:Bg}, \eqref{eq:Bh}.  The proof consists of four steps:

\begin{enumerate}

\item $\mathcal{B}_0$ holds.

\item $\mathcal{H}_1$ holds.

\item If $\mathcal{B}_r, \mathcal{H}_s$ holds for all $r < t $ and $s \leq t $, then $\mathcal{B}_t$ holds.

\item if $\mathcal{B}_r, \mathcal{H}_s$ holds for all $r \leq t $ and $s \leq t$, then $\mathcal{H}_{t+1}$ holds.
\end{enumerate}

\subsubsection{Step 1: Showing $\mathcal{B}_0$ holds} \label{subsub:step1}
We wish to show that \eqref{eq:Ba}, \eqref{eq:Bb}, \eqref{eq:Bc}, \eqref{eq:Bd}, \eqref{eq:Be}, \eqref{eq:Bf}, \eqref{eq:Bg}, and \eqref{eq:Bh} hold when $t = 0$.

\textbf{(a)} $\Delta_{0,0} = 0$ so there is nothing to prove.

\textbf{(b)} From Lemma \ref{lem:hb_cond} we note $b^0 \stackrel{d}{=} \bar{\sigma}_0 Z$ where $Z \in \mathbb{R}^n$ is a standard Gaussian vector.  We will first use Fact  \ref{fact:slln} to show that 
\begin{equation}
\begin{split}
 \lim n^{\delta} & \left[\frac{1}{n}\sum_{i=1}^n \phi_b(\bar{\sigma}_0 Z_i, w_i)   \right. \\
&\quad \left. - \frac{1}{n}\sum_{i=1}^n \mathbb{E}_{Z}\left\{\phi_b(\bar{\sigma}_0 Z_i, w_i)\right\}\right] =0 \quad  a.s. ,
\end{split}
\label{eq:1b2}
\end{equation}
 Let $\tilde{Z}$ be an independent copy of $Z$.  To apply Fact \ref{fact:slln}, we need to verify that
\begin{equation*}
\frac{1}{n} \sum_{i=1}^n \mathbb{E}\lvert n^{\delta} \phi_b(\bar{\sigma}_0 \tilde{Z}_i, w_i) - n^{\delta} \mathbb{E}_{Z}\left\{ \phi_b(\bar{\sigma}_0 Z_i, w_i)\right\}\lvert^{2 + \kappa}   \, \leq \, cn^{\kappa/2}.
%\label{eq:1b1}
\end{equation*}
for some constants $\kappa \in (0,1)$ and $c>0$.  Dropping the subscript $i$ on $\tilde{Z}, Z$ for brevity, we have
\begin{small}
\begin{equation}
\begin{split}
 &\mathbb{E}_{\tilde{Z}}\abs{ \phi_b(\bar{\sigma}_0 \tilde{Z}, w_i) - \mathbb{E}_{Z}\left\{\phi_b(\bar{\sigma}_0 Z, w_i)\right\}}^{2 + \kappa}\\
 & \overset{(a)}{\leq} \mathbb{E}_{\tilde{Z}, Z}\left \lvert \phi_b(\bar{\sigma}_0 \tilde{Z}, w_i) - \phi_b(\bar{\sigma}_0 Z, w_i)\right\lvert^{2 + \kappa} \\
& \overset{(b)}{\leq} c' \abs{\bar{\sigma}_0}^{2 + \kappa}\,  \mathbb{E}_{\tilde{Z}, Z}\left\{ | \tilde{Z} - Z|^{2 + \kappa}\left( 1 +  |\bar{\sigma}_0 \tilde{Z}| + |w_i| + |\bar{\sigma}_0 Z|\right)^{2 + \kappa}\right\} \\
& \leq c_0  \abs{\bar{\sigma}_0}^{2 + \kappa} \left[  \mathbb{E}_{\tilde{Z}, Z} \left\{|\tilde{Z} - Z|^{2 + \kappa} \left(1 +  |\bar{\sigma}_0 \tilde{Z}|^{2 + \kappa} + |\bar{\sigma}_0 Z|^{2 + \kappa}\right)\right\}  \right.\\
&\qquad \left. +  |w_i|^{2 + \kappa}  \mathbb{E}_{\tilde{Z}, Z}\left\{|\tilde{Z} - Z|^{2 + \kappa}\right\} \right] \\
& \overset{(c)}{\leq} c_1 + c_2 |w_i|^{2 + \kappa},
\end{split}
\label{eq:upperbound0}
\end{equation}
\end{small}
where $c', c_0,c_1, c_2$ are positive constants. In the chain above, $(a)$ uses Jensen's inequality, $(b)$ holds because $\phi_b \in PL(2)$, and $(c)$ uses the fact that $Z,\tilde{Z}$ are i.i.d.\ $\mc{N}(0,1)$. Using \eqref{eq:upperbound0}, we obtain 
\ben
\begin{split}
\frac{1}{n} &\sum_{i=1}^n \mathbb{E}\lvert n^{\delta} \phi_b(\bar{\sigma}_0 \tilde{Z}, w_i) - n^{\delta} \mathbb{E}_{Z}\left\{ \phi_b(\bar{\sigma}_0 Z, w_i)\right\}\lvert^{2 + \kappa}  \\
& \qquad \leq \frac{n^{\delta (\kappa +2)}}{n}  \sum_{i=1}^n (c_1 + c_2 |w_i|^{2 + \kappa}) \leq c n^{\kappa/2},
\end{split}
\een
for  $\delta < \frac{\kappa/2}{\kappa +2}$ since the $w_i$'s are i.i.d.\ $\mc{N}(0,\sigma^2)$. Thus \eqref{eq:1b2} holds.

Finally considering the expectation in \eqref{eq:1b2}, Fact \ref{fact:lip_slln} implies
\be
\frac{1}{n} \sum_{i=1}^n \expec_Z \left\{\phi_b(\bar{\sigma}_0 Z, w_i)\right\} \, \stackrel{n \to \infty}{\longrightarrow} \,\expec\left\{ \phi_b(\bar{\sigma}_0 \hat{Z}_0, \sigma Z_w)\right\} \ a.s.,
\label{eq:lip_lln0}
\ee
at rate $n^{-\delta}$.  Combining   \eqref{eq:1b2} and \eqref{eq:lip_lln0} yields the result.

\textbf{(c)} The function $\phi_b(b^0_i, w_i) := b^0_i w_i \in PL(2)$.  By $\mc{B}_0(b)$, $\lim \frac{(b^0)^*w}{n} \overset{a.s.}{=} \mathbb{E}\{\bar{\sigma}_0 \hat{Z}_0 \sigma Z_w)\} = 0$ and the convergence rate is $n^{-\delta}$.

\textbf{(d)} The function $\phi_b(b^0_i, w_i) := (b^0_i)^2 \in PL(2)$.  By $\mc{B}_0(b)$, $\lim \frac{\norm{b^0}^2}{n} \overset{a.s.}{=} \mathbb{E}\{(\bar{\sigma}_0 \hat{Z}_0)^2\} = \bar{\sigma}_0^2$ and the convergence rate is $n^{-\delta}$.

\textbf{(e)} Recall $m^0 = b^0 - w$.  The function $\phi_b(b^0_i, w_i) := (b^0_i - w_i)^2 \in PL(2)$.  By $\mc{B}_0(b)$, $\lim \frac{\norm{m^0}^2}{n} \overset{a.s.}{=} \mathbb{E}\{(\bar{\sigma}_0 \hat{Z}_0 - \sigma Z_w)^2\} = \bar{\sigma}_0^2 + \sigma^2 = \bar{\tau}_0^2$ and the convergence rate is $n^{-\delta}$.

\textbf{(f)} The function $\phi_b(b^0_i, w_i) := b^0_i (b^0_i - w_i) \in PL(2)$.  By $\mc{B}_0(b)$, $\lim \frac{(b^0)^*m^0}{n} \overset{a.s.}{=} \mathbb{E}\{\bar{\sigma}_0 \hat{Z}_0(\bar{\sigma}_0 \hat{Z}_0 - \sigma Z_w)\} = \bar{\sigma}_0^2$ and the convergence rate is $n^{-\delta}$.

\textbf{(g)} For $t=0$, nothing to prove.

\textbf{(h)} Since $M_0$ is the empty matrix, $m^0_\perp = m^0$, so the result is already shown in $\mc{B}_0(e)$. 

%------------------------------------------------------------------------------------
\subsubsection{Step 2: Showing $\mathcal{H}_1$ holds}   \label{subsub:step2}
We wish to show that \eqref{eq:Ha}, \eqref{eq:Hb2}, \eqref{eq:Hc}, \eqref{eq:Hd}, \eqref{eq:He}, \eqref{eq:Hf}, \eqref{eq:Hg}, and \eqref{eq:Hh} hold when $t = 0$.

\textbf{(a)}  From the definition of $\Delta_{1,0}$ in Lemma \ref{lem:hb_cond} \eqref{eq:D10}, we have 
\be
\begin{split}
\Delta_{1,0} & = \left[ \left(\frac{\norm{m^0}}{\sqrt{n}} - \bar{\tau}_0\right)\mathsf{I} -\frac{\norm{m^0}}{n} \mathsf{P}_{q^0}\right] Z_0 \\\
& \qquad + q^0 \left(\frac{\norm{q^0}^2}{n}\right)^{-1} \left(\frac{(b^0)^*m_0}{n} - \frac{\norm{q^0}^2}{n}\right) \\ 
& = \left(\frac{\norm{m^0}}{\sqrt{n}} - \bar{\tau}_0\right)Z_0 -\frac{\norm{m^0}}{\sqrt{n}} \frac{q^0}{\sqrt{P}} \frac{Z}{\sqrt{n}} \\
& \quad   + \frac{q^0}{P} \left(\frac{(b^0)^*m^0}{n} - P \right),\label{eq:newDelta10}
\end{split}
\ee
where the second equality follows from Fact \ref{fact:gauss_p0} with $Z \in \mathbb{R} \sim \mc{N}(0,1)$.  It follows from \eqref{eq:newDelta10} that
\be
\begin{split}
&  \max_{j \in sec(\ell)} \abs{[\Delta_{1,0}]_j} \leq \left \lvert \frac{\norm{m^0}}{\sqrt{n}} - \bar{\tau}_0\right \lvert \max_{j \in sec(\ell)} \abs{Z_{0_j}}  \\
& \qquad + \frac{\norm{m^0}}{\sqrt{n}} \sqrt{\frac{nP_{\ell}}{P}} \frac{\abs{Z}}{\sqrt{n}} + \frac{\sqrt{n P_{\ell}}}{P} \left \lvert \frac{(b^0)^*m^0}{n} - P \right \lvert.\label{eq:newDelta10eq1}
\end{split}
\ee
We show all terms on the RHS of the above are $o( n^{-\delta} \sqrt{\log M})$ almost surely. Recall $\sqrt{nP_{\ell}} = \Theta(\sqrt{\log M})$.  By $\mc{B}_0$(e), $\norm{m^0}^2/n \overset{a.s.}{\to} \bar{\tau}_0^2$ at rate $n^{-\delta}$. This along with the Fact \ref{lem:maxZbound} implies that the first term is $o(n^{-\delta} \sqrt{\log M})$ almost surely.  Similarly from $\mc{B}_0$(e) and the fact that $\abs{Z}/\sqrt{n}$ is almost surely $o(n^{-\delta})$ the second term is $o( n^{-\delta} \sqrt{\log M} )$; finally by $\mc{B}_0$(f) the third term is also $o(n^{-\delta} \sqrt{\log M})$ almost surely.  We have therefore shown that $\max_{j \in sec(\ell)} \abs{[\Delta_{1, 0}]_{j}} \overset{a.s}{=} o\left( n^{-\delta} \sqrt{\log M}\right).$

Next, from Lemma \ref{lem:hb_cond} \eqref{eq:Ha_dist} it follows,
\ben
\begin{split}
\max_{j \in sec(\ell)} \abs{h^1_j} &\leq \abs{\bar{\tau}_0} \max_{j \in sec(\ell)} \abs{Z_{0_j}} + \max_{j \in sec(\ell)} \abs{\Delta_{{1,0}_j}} \\
&\overset{a.s}{\leq} \abs{\bar{\tau}_0} \left(\sqrt{3 \log M}\right) + o( n^{-\delta} \sqrt{\log M}),
\end{split}
\een
where we have used Fact \ref{lem:maxZbound} for the second inequality.  This completes the proof.
 
\textbf{(b)}   The proof of this part involves several claims which are fairly straightforward but tedious to verify, so we give only the main steps, referring the reader to \cite{steps24b} for details.  Throughout we use generic $\phi_{k, \ell}(x, y, z)$ since the steps are identical for all $k \in \{1, 2, 3, 4\}$.  From Lemma \ref{lem:hb_cond} \eqref{eq:Ha_dist},
\begin{equation*}
\begin{split}
&\phi_{k, \ell}(a_0 h_{\ell}^1, \, b_0 h_{\ell}^1, \, \beta_{0_{\ell}}) \lvert_{\mscrs_{1,0}} \\
&\overset{d}{=} \phi_{k, \ell}\left(a_0 \bar{\tau}_0 Z_{0_{\ell}} + a_0 [\Delta_{1,0}]_{\ell}, \, b_0 \bar{\tau}_0 Z_{0_{\ell}} + b_0 [\Delta_{1,0}]_{\ell}, \, \beta_{0_{\ell}}\right).
\end{split}
\end{equation*}
By $\mc{H}_{1}$(a), $\max_{j \in sec(\ell)} \abs{[\Delta_{1, 0}]_{j}} \overset{a.s.}{=} o(n^{-\delta'} \sqrt{\log M} )$ for each $\ell \in [L]$ and some $\delta'>0$.  In \cite{steps24b}, the first step of the proof uses this to show for each of the functions in \eqref{eq:phih_fns},
\begin{equation*}
\begin{split}
&\frac{1}{L}  \sum_{\ell=1}^L  \left \lvert \phi_{k, \ell}\left(a_0 \bar{\tau}_0 Z_{0_{\ell}} + a_0 [\Delta_{1,0}]_{\ell}, \, b_0 \bar{\tau}_0 Z_{0_{\ell}} + b_0 [\Delta_{1,0}]_{\ell}, \, \beta_{0_{\ell}}\right) \right. \\
& \qquad  \left. - \phi_{k, \ell}\left(a_0 \bar{\tau}_0 Z_{0_{\ell}}, \, b_0 \bar{\tau}_0 Z_{0_{\ell}}, \, \beta_{0_{\ell}}\right)\right \lvert  \overset{a.s.}{=} o({n^{- \delta'} \log M}).
\end{split}
\end{equation*}
Choosing $\delta \in (0, \delta')$ ensures that we can drop the deviation term $\Delta_{1,0}$.

The second step of the proof appeals to Fact \ref{fact:slln} to show that 
\begin{equation*}
\begin{split}
& \lim n^{\delta} \left[\frac{1}{L}\sum_{\ell=1}^L \phi_{k, \ell}\left(a_0 \bar{\tau}_0 Z_{0_{\ell}}, \, b_0 \bar{\tau}_0 Z_{0_{\ell}}, \, \beta_{0_{\ell}}\right) \right. \\
& \left. \qquad - \frac{1}{L}\sum_{\ell=1}^L \mathbb{E}_{Z_0}\left\{\phi_{k, \ell}\left(a_0 \bar{\tau}_0 Z_{0_{\ell}}, \, b_0 \bar{\tau}_0 Z_{0_{\ell}}, \, \beta_{0_{\ell}} \right)\right\}\right] \overset{a.s.}{=}  0.
\end{split}
\end{equation*}
Let $\tilde{Z}_0$ be an independent copy of $Z_0$.  In order to use Fact \ref{fact:slln} to get the above result we must prove the following for each function in \eqref{eq:phih_fns}, for some $0 \leq \kappa \leq 1$, and $c>0$ some constant.
\begin{equation}
\begin{split}
\label{eq:2b4a}
& \frac{1}{L}\sum_{\ell=1}^L\mathbb{E}_{\tilde{Z}_0, Z_0} \left\lvert n^{\delta}  \phi_{k, \ell}\left(a_0 \bar{\tau}_0 \tilde{Z}_{0_{\ell}}, \, b_0 \bar{\tau}_0 \tilde{Z}_{0_{\ell}}, \, \beta_{0_{\ell}}\right) \right. \\
& \qquad  \left. -  n^{\delta} \phi_{k, \ell}\left(a_0 \bar{\tau}_0 Z_{0_{\ell}}, \, b_0 \bar{\tau}_0 Z_{0_{\ell}}, \, \beta_{0_{\ell}}\right) \right\lvert^{2 + \kappa}\leq cL^{\kappa/2}.
\end{split}
\end{equation}
Note that the exact condition required by Fact \ref{fact:slln} follows from \eqref{eq:2b4a} by an application of Jensen's inequality.  In \cite{steps24b}, it is shown that for each function in \eqref{eq:phih_fns} and each $\ell \in [L]$,
\be
\begin{split}
&\mathbb{E}_{\tilde{Z}_0, Z_0} \left\lvert  {\phi}_{k, \ell}\left(a_0 \bar{\tau}_0 \tilde{Z}_{0_{\ell}}, \, b_0 \bar{\tau}_0 \tilde{Z}_{0_{\ell}}, \, \beta_{0_{\ell}}\right) \right. \\
& \quad \left. - {\phi}_{k, \ell}\left(a_0 \bar{\tau}_0 Z_{0_{\ell}}, \, b_0 \bar{\tau}_0 Z_{0_{\ell}}, \, \beta_{0_{\ell}}\right)\right\lvert^{2 + \kappa} 
\stackrel{a.s.}{=} O((\log M)^{2+\kappa}). %c n^{\delta(2+\kappa)}.
\label{eq:hatA_tildeA_bnd}
\end{split}
\ee
Bound \eqref{eq:hatA_tildeA_bnd} implies \eqref{eq:2b4a} holds if $\delta(2+\kappa)$ is chosen to be smaller than $\frac{1}{2}\kappa$. (Recall $L= \Theta(n/ \log n)$).

The final step of the proof is to show that 
\ben
\begin{split}
&\lim n^{\delta}\left[ \frac{1}{L} \sum_{\ell=1}^L \mathbb{E}_{Z_0}\left[ \phi_{k, \ell}\left(a_0 \bar{\tau}_0 Z_{0_{\ell}}, b_0 \bar{\tau}_0 Z_{0_{\ell}}, \beta_{0_\ell}\right)\right] \right. \\
&\quad \left. - \frac{1}{L} \sum_{\ell=1}^L \mathbb{E}_{(\breve{Z}_0, \beta)}\left[ \phi_{k, \ell}\left(a_0 \bar{\tau}_0 \breve{Z}_{0_{\ell}}, b_0 \bar{\tau}_0 \breve{Z}_{0_{\ell}}, \beta_{\ell}\right)\right] \right]  \overset{a.s.}{=} 0
\end{split}
\een
But the above holds because the uniform distribution of the non-zero entry in $\beta_\ell$ over the $M$ possible locations and the i.i.d.\ distribution of $Z_{0}$ (and of $\breve{Z}_0$) together ensure that $\forall \beta_{0} \in \mcb$, we have
\be
\begin{split} 
&\mathbb{E}_{Z_0}\left[ \phi_{k, \ell}\left(a_0 \bar{\tau}_0 Z_{0_{\ell}}, b_0 \bar{\tau}_0 Z_{0_{\ell}}, \beta_{0_\ell}\right)\right]  \\
&= \mathbb{E}_{(\breve{Z}_0, \beta)}\left[ \phi_{k, \ell}\left(a_0 \bar{\tau}_0 \breve{Z}_{0_{\ell}}, b_0 \bar{\tau}_0 \breve{Z}_{0_{\ell}}, \beta_{\ell}\right)\right],
\  \forall \,\, \ell \in [L].
\end{split}
\ee

The existence of the limit of  $\frac{1}{L} \sum_{\ell=1}^L \mathbb{E}_{(\breve{Z}_0, \beta)} [ \phi_{k, \ell} (a_0 \bar{\tau}_0 \breve{Z}_{0_{\ell}}, b_0 \bar{\tau}_0 \breve{Z}_{0_{\ell}}, \beta_{\ell} )]$  for $k=1$ follows from the law of large numbers; for $k=2,3,4$, the limit follows from  Appendix \ref{app:qrqs_lim}.
%%%%%%%

\textbf{(c)} Using the fourth function in \eqref{eq:phih_fns} with $r=-1$, $\lim \frac{(h^1)^*q^0}{n} \overset{a.s.}{=} \lim -\frac{1}{n} \mathbb{E}\{\bar{\tau}_0 \breve{Z}_{0}^* \beta\} = 0$ by $\mc{H}_1(b)$ and the convergence rate is $n^{-\delta}$.

\textbf{(d)} Using the first function in \eqref{eq:phih_fns}, $\lim \frac{\norm{h^1}^2}{N} \overset{a.s.}{=} \lim \frac{ \bar{\tau}_0^2}{N} \, \mathbb{E}\norm{\breve{Z}_{0}}^2 = \bar{\tau}_0^2$ by $\mc{H}_1(b)$ and the convergence rate is $n^{-\delta}$.

\textbf{(e)} Using the third function in \eqref{eq:phih_fns}, by $\mc{H}_1$(b) we have for  $r=0$ or $r=1$:
\ben
\begin{split}
&\lim \frac{(q^r)^* q^1}{n} \\
&\overset{a.s.}{=} \lim \frac{1}{n} \mathbb{E}[(\eta^{r-1}(\beta - \bar{\tau}_{r-1} \breve{Z}_{r-1}) - \beta)^*(\eta^{0}(\beta - \bar{\tau}_{0} \breve{Z}_{0}) - \beta)] \\
&= \bar{\sigma}_1^2,
\end{split}
\een 
and the convergence rate is $n^{-\delta}$. The last equality above is shown in Appendix \ref{app:qrqs_lim}.

\textbf{(f)} Using the fourth function in \eqref{eq:phih_fns} with $r=0$, by $\mc{H}_1(b)$ we have
\be
\begin{split}
\lim n^\delta & \left( \frac{1}{n}(h^1)^*q^1 - \right. \\
& \left. \frac{1}{n} \sum_{\ell = 1}^L \mathbb{E}\{\bar{\tau}_0 \breve{Z}_{0_{\ell}}^*[\eta^0_{\ell}(\beta - \bar{\tau}_0 \breve{Z}_{0}) - \beta_{\ell}]\} \right) =0 \quad a.s.
\end{split}
\label{eq:2d1}
\ee
 Consider a single term in the expectation in \eqref{eq:2d1}, say $\ell=1$.  We have
\begin{equation}
\begin{split}
&\mathbb{E}\{\bar{\tau}_0 \breve{Z}_{0_{(1)}}^*[\eta^0_{(1)}(\beta - \bar{\tau}_0 \breve{Z}_{0}) - \beta_{(1)}]\} \\
& \qquad = \bar{\tau}_0 \sum_{i=1}^M \mathbb{E}\{\breve{Z}_{0_i} [\eta^0_{i}(\beta - \bar{\tau}_0 \breve{Z}_{0}) - \beta_{i}]\}
\label{eq:2d2} 
\end{split}
\end{equation}
where $\beta_{(1)} = (\beta_{1}, \beta_{2}, \ldots, \beta_{M})$ and $\breve{Z}_{0_{(1)}} = (\breve{Z}_{0_1}, \breve{Z}_{0_2}, \ldots , \breve{Z}_{0_M}).$  Note that for each $i$, the function $\eta^0_i(\cdot)$ depends on all the $M$ indices in the section containing $i$.  For each $i \in [M]$, we evaluate the expectation on the RHS of \eqref{eq:2d2} using the law of iterated expectations:
\begin{equation}
\begin{split}
&\mathbb{E}\{\breve{Z}_{0_i} [\eta^0_i(\beta - \bar{\tau}_0 \breve{Z}_{0}) - \beta_{i}]\} \\
& \quad = \mathbb{E}\left[\mathbb{E}\left\{\breve{Z}_{0_i} [\eta_{0_i}(\beta - \bar{\tau}_0 \breve{Z}_{0}) - \beta_{i}] \mid \beta_{(1)}, \breve{Z}_{0_{(1) \setminus i}}\right\}\right]
\end{split}
\label{eq:2d3}
\end{equation}
where the inner expectation is over $\breve{Z}_{0_i}$ conditioned on $\{\beta_{(1)}, \breve{Z}_{0_{(1) \setminus i}}\}$.  Since $\breve{Z}_{0_i}$ is independent of $\{\beta_{(1)}, \breve{Z}_{0_{(1) \setminus i}}\}$, the latter just act as constants in the inner expectation  over $\breve{Z}_{0_i} \sim \mathcal{N}(0,1)$.
Applying Stein's lemma (Fact \ref{fact:stein}) to the inner expectation, we obtain
\begin{equation*}
\begin{split}
& \expec\left[ \mathbb{E}  \left\{ \breve{Z}_{0_i} [\eta^0_i(\beta - \bar{\tau}_0 \breve{Z}_{0}) - \beta_{i}] \mid  \beta_{(1)}, \breve{Z}_{0_{(1)} \setminus i}\right \} \right] \\
&= \mathbb{E}\left[ \mathbb{E} \left\{ \frac{\partial}{\partial \breve{Z}_{0_i}}[\eta^0_i(\beta -\bar{\tau}_0 \breve{Z}_{0}) - \beta_{i}] \mid \beta_{(1)}, \breve{Z}_{0_{(1)} \setminus i}\right\}\right] \\
& \overset{(a)}{=} -\frac{\bar{\tau}_0}{\bar{\tau}^2_0}\mathbb{E}\left[\mathbb{E}\left\{\eta^0_i(\beta - \bar{\tau}_0 \breve{Z}_{0}) \right. \right. \\
&\qquad  \qquad  \qquad \left. \left. \cdot \Big(\sqrt{nP_{1}} - \eta^0_i(\beta - \bar{\tau}_0 \breve{Z}_{0})\Big) \Big \lvert \  \beta_{(1)}, \breve{Z}_{0_{(1)} \setminus i}\right\} \right] \\
&\overset{(b)}{=} -\frac{1}{\bar{\tau}_0}\mathbb{E}\left[\eta^0_i(\beta - \bar{\tau}_0 \breve{Z}_{0})\left(\sqrt{nP_{1}} - \eta^0_i(\beta - \bar{\tau}_0 \breve{Z}_{0})\right)\right]
\end{split}
\end{equation*}
where $(a)$ follows from the definition of $\eta^t_i$ in \eqref{eq:eta_def} which implies
$\frac{\partial \eta^t_i(s)}{\delta s_i} = \frac{\eta^t_i(s)}{\bar{\tau}_t^2}\left(\sqrt{nP_{\ell}} - \eta^t_i(s)\right) \text{ for } i \in sec(\ell)$, and $(b)$ from the law of iterated expectation.  Using the above in \eqref{eq:2d3} and \eqref{eq:2d2}, we have
\begin{equation}
\begin{split}
&\mathbb{E}\left[ \bar{\tau}_0 \breve{Z}_{0_{(1)}}^*[\eta^0_{(1)}(\beta - \bar{\tau}_0 \breve{Z}_{0}) - \beta_{(1)}]\right] \\
& =  \sum_{i=1}^M  \mathbb{E}\left[ \eta^0_i(\beta - \bar{\tau}_0 \breve{Z}_{0})\left(\eta^0_i(\beta - \bar{\tau}_0 \breve{Z}_{0}) - \sqrt{nP_{1}}\right)\right].
\end{split}
\label{eq:2d5} 
\end{equation}
The  argument  above can be repeated for each section $\ell \in [L]$ to obtain a relation analogous to \eqref{eq:2d5}.  Using this for the expectation in \eqref{eq:2d1}, we  obtain
\begin{equation*}
\begin{split}
\lim \frac{1}{n}(h^1)^*q^1 &\overset{a.s.}{=} \lim \left( \frac{1}{n}\mathbb{E}\left[\norm{\eta^{0}(\beta - \bar{\tau}_0 \breve{Z}_{0})}^2\right] - P\right) \\
&= -\bar{\sigma}_1^2,
\end{split}
\end{equation*}
with convergence rate $n^{-\delta}$. The last equality above follows from Appendix \ref{app:qrqs_lim}.

Finally, recall from $\mc{B}_0 (e)$   that $\norm{m^0}^2/n \stackrel{a.s.}{\to} \bar{\tau}^2_0$ at rate $n^{-\delta}$.  Further, from \eqref{eq:lambda_t_def}, we observe that
\ben
\begin{split}
 \lambda_1 &= \frac{1}{\bar{\tau}^2_0}\left(  \frac{\norm{\beta^1}^2}{n} - P \right)  \\
 &\stackrel{a.s.}{\to} 
 \lim \frac{1}{\bar{\tau}_0^2}\left( \frac{\mathbb{E}\left[ \norm{\eta^{0}(\beta - \bar{\tau}_0 \breve{Z}_{0})}^2\right]}{n} - P\right)
 = \frac{-\bar{\sigma}^2_1}{\bar{\tau}^2_0},
\end{split}
\een
where the convergence at rate $n^{-\delta}$ follows from $\mc{H}_{1}$(b) applied to the second function in \eqref{eq:phih_fns}.

\textbf{(g)} Note that $Q_{1}^* Q_{1}$ is invertible since $Q_{1}^* Q_{1} = \norm{q^0} = nP > 0$
\ben
\gamma^1_0 = \left( \frac{Q_1^* Q_1}{n} \right)^{-1} \frac{Q_1^* q^0}{n} = \frac{(q^0)^* q^1}{nP} \overset{a.s.}{\to} \frac{\bar{\sigma}_1^2}{P} = 
\frac{\bar{\sigma}_1^2}{\bar{\sigma}_0^2},
\een
where the limit follows from $\mc{H}_1$(e).

\textbf{(h)} Let $\mathsf{P}_{Q_{1}} = Q_{1}(Q_{1}^* Q_{1})^{-1} Q_{1}^*$ be the projection matrix onto the column space of $Q_{1} = q^0$. Note that $Q_{1}^* Q_{1}$ is invertible since $Q_{1}^* Q_{1} = nP > 0$.  Then,
\begin{equation}
\begin{split}
\frac{\norm{q_{\perp}^{1}}^2}{n} &= \norm{ q^1 - \mathsf{P}_{Q_{1}} q^1}^2 \\
&=  \frac{\norm{q^{1}}^2}{n} - \frac{(q^{1})^*q^0}{n} \cdot \left(\frac{(q^0)^*q^0}{n}\right)^{-1} \cdot\frac{(q^0)^*q^{1}}{n}.
\end{split}
\label{eq:stepheq1}
\end{equation}
Using the representation in \eqref{eq:stepheq1}, it follows by $\mathcal{H}_{1}$(e) that
\begin{equation*}
 \norm{q_{\perp}^{1}}^2/n \overset{a.s.}{\to}  \bar{\sigma}^2_{1} - (\bar{\sigma}^4_{1}/\bar{\sigma}_0^2) = (\bar{\sigma}_1^{\perp})^2.
\label{eq:stepheq2}
\end{equation*}
Finally note that $\bar{\sigma}^2_r = \sigma^2\left( \left( 1 + \snr \right)^{1-\xi_{r-1}} -1\right)$  with $\xi_{r-1}$ defined in \eqref{eq:lim_alph}. The definition of   $\xi_{r-1}$ implies that $(\bar{\sigma}_r^{\perp})^2$ is strictly positive for $r \leq T^*$, where $T^*=  \left\lceil \frac{2 \mc{C}}{\log(\mc{C}/R)} \right \rceil$.

\subsubsection{Step 3: Showing $\mathcal{B}_t$ holds} \label{subsub:step3} 
We wish to show that \eqref{eq:Ba}, \eqref{eq:Bb}, \eqref{eq:Bc}, \eqref{eq:Bd}, \eqref{eq:Be}, \eqref{eq:Bf}, \eqref{eq:Bg}, and \eqref{eq:Bh} hold assuming $\mathcal{B}_r, \mathcal{H}_s$ holds for all $r < t $ and $s \leq t $.

\textbf{(a)}  Let $\textbf{M}_t := \frac{M_{t}^* M_{t}}{n}$ and $v := \frac{H_t^* q^t_{\perp}}{n} - \frac{M_t}{n}^*\left[\lambda_t m^{t-1} - \sum_{r=1}^{t-1} \lambda_{r} \gamma^t_{r} m^{r-1}\right]$.  From the definition of $\Delta_{t,t}$ in Lemma \ref{lem:hb_cond} \eqref{eq:Dtt}, we have 
\be
\begin{split}
\Delta_{t,t} = & \, \sum_{r=0}^{t-2} \gamma^t_r b^r + \left(\gamma^t_{t-1} -  \frac{\bar{\sigma}_t^2}{\bar{\sigma}^2_{t-1}}\right) b^{t-1} + \left(\frac{\norm{q^t_{\perp}}}{\sqrt{n}} - \bar{\sigma}_{t}^{\perp}\right)Z'_t \\
&\quad  - \frac{\norm{q^t_{\perp}}}{\sqrt{n}}\sum_{s=0}^{t-1} \tilde{m}^{s} \frac{\bar{Z}'_{t_s}}{\sqrt{n}}+ M_t \textbf{M}_t^{-1}v, \label{eq:newDeltatt1}
\end{split}
\ee
where we have used Fact \ref{fact:gauss_p0} to write
\ben
\frac{\norm{q^t_{\perp}} }{\sqrt{n}} \mathsf{P}_{M_t}Z'_t \overset{d}{=} \frac{\norm{q^t_{\perp}} \tilde{M}_t \bar{Z}'_t}{n} =  \frac{\norm{q^t_{\perp}}}{\sqrt{n}}\sum_{s=0}^{t-1} \tilde{m}^{s} \frac{\bar{Z}'_{t_s}}{\sqrt{n}}.
\een
The matrix $\tilde{M}_t= [\tilde{m}^0| \ldots| \tilde{m}^t ] \in \mathbb{R}^{n \times t}$ forms an orthogonal basis for the column space of $M_t$ such that  $\norm{\tilde{m}^s}= \sqrt{n}$, $\forall s$, and $\bar{Z}'_t \in \mathbb{R}^t$ is an independent i.i.d.\ $\mc{N}(0,1)$ random vector.  Using $M_t \textbf{M}_t^{-1}v = \sum_{r=0}^{t-1} m^r [\textbf{M}_t^{-1}v]_{r+1}$ and $\norm{\tilde{m}^s}^2 = n$ in \eqref{eq:newDeltatt1}, we obtain the bound
\be
\begin{split}
&\frac{\norm{\Delta_{t,t}}^2}{n} \leq (3t+1)\left[\sum_{r=0}^{t-2} (\gamma^t_r)^2 \frac{\norm{b^r}^2}{n}  \right.   \\
&\quad \left. + \left(\gamma^t_{t-1} -  \frac{\bar{\sigma}_t^2}{\bar{\sigma}_{t-1}^2}\right)^2 \frac{\norm{b^{t-1}}^2}{n}  + \left(\frac{\norm{q^t_{\perp}}}{\sqrt{n}} - \bar{\sigma}_{t}^{\perp}\right)^2 \frac{\norm{Z'_t}^2}{n}  \right. \\
&\quad \left.  +  \frac{\norm{q^t_{\perp}}^2}{n} \frac{\norm{\bar{Z}'_{t}}^2}{n}  +  \sum_{r=0}^{t-1} \frac{\norm{m^r}^2}{n} \left([\textbf{M}_t^{-1}v]_{r+1}\right)^2 \right] .\label{eq:newDeltatteq1}
\end{split}
\ee
 We show that each term on the RHS of \eqref{eq:newDeltatteq1} is almost surely $o(n^{-\delta})$.   Note that by $\mc{H}_t$(g), $\gamma^t_j \overset{a.s.}{\to} 0$ for $0 \leq j \leq t-2$ and $\gamma^t_{t-1} \overset{a.s.}{\to} \frac{\bar{\sigma}_t^2}{\bar{\sigma}_{t-1}}$.  By $\mc{B}_0$(d) -- $\mc{B}_{t-1}$(d), $\norm{b^r}^2/n \overset{a.s.}{\to} \bar{\sigma}_r^2$ for $0 \leq r \leq t-1$. These imply that the first and second terms in \eqref{eq:newDeltatteq1} are $o(n^{-\delta})$ almost surely.  By $\mc{H}_t$(h), $\norm{q^t_{\perp}}^2/n \overset{a.s.}{\to} (\bar{\sigma}_t^{\perp})^2$; noting that $\norm{Z'_{t}}^2$ and $\norm{\bar{Z}'_{t}}^2$ are $\chi^2_t$ random variables, it follows that the third and fourth terms are $o(n^{-\delta})$ almost surely.    Finally, by $\mc{B}_0$(e) -- $\mc{B}_{t-1}$(e), $\norm{m^r}^2/n \stackrel{a.s.}{\to} \bar{\tau}_r^2$ for $0 \leq r \leq t-1$.  Therefore to prove  convergence for the fifth term, we will show that $ \left([\textbf{M}_t^{-1}v]_{r+1}\right)^2 \overset{a.s.}{\to} 0$  at rate $n^{-\delta}$.   Note that
\be
\begin{split}
&[\textbf{M}_t^{-1}v]_{r+1} = \\
&\  \begin{cases}
\lambda_{r+1} \gamma^t_{r+1} +  \left[\, \left(\tfrac{M_t^*M_t}{n} \right)^{-1} \tfrac{H_t^* q_{\perp}^t}{n} \, \right]_{r+1}  \text{ for } 0 \leq  r \leq t-2, \\
- \lambda_t  + \left[\, \left(\tfrac{M_t^*M_t}{n} \right)^{-1} \tfrac{H_t^* q_{\perp}^t}{n} \, \right]_{t},  \text{ for }  r = t-1. \label{eq:coeff_compare}
\end{cases}
\end{split}
\ee
We show that each of the above coefficients is  $o(n^{-\delta})$. Indeed, for $1 \leq i \leq t$,
\ben
\begin{split}
&\left[ \frac{H_{t}^* q_{\perp}^t}{n}\right]_{i} = \frac{(h^{i})^* q_{\perp}^t}{n} 
= \frac{(h^{i})^* q^t}{n} - \sum_{r=0}^{t-1} \gamma^t_r \frac{(h^{i})^* q^r}{n} \\
& \stackrel{a.s.}{\to}   \lim\left[\lambda_t \frac{(m^{i-1})^* m^{t-1}}{n} -\sum_{r=0}^{t-1} \gamma^t_r \lambda_{r} \frac{(m^{i-1})^* m^{r-1}}{n}\right], 
\end{split}
\een
where the convergence (at rate $n^{-\delta}$) follows from $\mc{H}_t$(f) and $\mc{H}_t$(g) (convergence of $\vec{\gamma}^t$ to finite values). Therefore, 
\be 
\left[ \frac{H_{t}^* q_{\perp}^t}{n}\right] \stackrel{a.s.}{\to}  \lim\left[\lambda_t \frac{(M_t)^* m^{t-1}}{n} - \sum_{r=0}^{t-2} \gamma^t_{r+1} \lambda_{r+1} \frac{(M_t)^* m^{r}}{n}\right]
\label{eq:Htqt_conv} 
\ee
at rate $n^{-\delta}$. Using \eqref{eq:Htqt_conv} in \eqref{eq:coeff_compare} we see that each coefficient of \eqref{eq:coeff_compare} is $o(n^{-\delta})$, which completes the proof. 

\textbf{(b)} Using the characterization for $b^t$ obtained in Lemma \ref{lem:hb_cond} \eqref{eq:Ba_dist}, we have
\begin{equation*}
\begin{split}
&\phi_b(b_i^0, \ldots, b_i^t, w_i) \Big{\lvert}_{\mscrs_{t, t}} \, \\
& \overset{d}{=}  \, \phi_b\left(b_i^0, \ldots, b_i^{t-1}, \frac{\bar{\sigma}_t^2}{\bar{\sigma}_{t-1}^2} b^{t-1}_i +  \bar{\sigma}_{t}^{\perp} \, Z'_{t_i} + \left[\Delta_{t,t}\right]_i, w_i\right).
\end{split}
\end{equation*}
The deviation term $\Delta_{t,t}$ in the RHS of the above can be dropped. Indeed, defining 
\ben
\begin{split}
a_i &= \left(b_i^0, \ldots, b_i^{t-1}, \frac{\bar{\sigma}_t^2}{\bar{\sigma}_{t-1}^2} b^{t-1}_i +  \bar{\sigma}_{t}^{\perp} \, Z'_{t_i} + \left[\Delta_{t,t}\right]_i, w_i \right), \\
c_i &= \left(b_i^0, \ldots, b_i^{t-1}, \frac{\bar{\sigma}_t^2}{\bar{\sigma}_{t-1}^2} b^{t-1}_i +  \bar{\sigma}_{t}^{\perp} \, Z'_{t_i}, w_i\right),
\end{split}
\een
we can show that almost surely
\be 
\begin{split}
 & \frac{1}{n} \left\lvert \sum_{i=1}^n \phi_b\left(a_i\right) - \sum_{i=1}^n \phi_b\left(c_i\right) \right\lvert 
 \leq  \frac{1}{n} \sum_{i=1}^n \left\lvert \phi_b\left(a_i\right) - \phi_b\left(c_i\right) \right\lvert \\
&\ \stackrel{(a)}{\leq}   \frac{C}{n} \sum_{i=1}^n (1 + \norm{a_i} + \norm{c_i}) \left \lvert \left[\Delta_{t,t}\right]_i \right \rvert  \\
 & \ \overset{(b)}{\leq} C  \sqrt{\sum_{i=1}^n \frac{(1 + \norm{a_i} + \norm{c_i})^2}{n}} \cdot \sqrt{\frac{\norm{\Delta_{t,t}}^2}{n}}
\stackrel{(c)}{=} o(n^{-\delta'}).
\end{split}
\label{eq:3b1}
\ee
In \eqref{eq:3b1}, $(a)$ holds because $\phi_b \in PL(2)$; $(b)$ is obtained using H{\"o}lder's inequality, and $(c)$ follows from $\mc{B}_t$(a) if  $\sum_{i=1}^n \frac{{\norm{a_i}}^2}{n}$ and $\sum_{i=1}^n \frac{{\norm{c_i}}^2}{n}$ are bounded and finite. This holds almost surely since
\begin{align*}
&\sum_{i=1}^n\frac{{\norm{a_i}}^2}{n} \leq C\left[\sum_{i=1}^n \frac{{\norm{c_i}}^2}{n} + \frac{\norm{\Delta_{t,t}}^2}{n}\right] \\
&\leq C'\left[\sum_{r=0}^{t-1} \frac{\norm{b^r}^2}{n} + \left(\frac{\bar{\sigma}_t^4}{\bar{\sigma}_{t-1}^4}\right)\frac{\norm{b^{t-1}}^2}{n} + ( \bar{\sigma}_{t}^{\perp})^2 \frac{\norm{Z'_{t}}^2}{n} \right.\\
& \qquad \qquad \left. + \frac{\norm{w}^2}{n} + \frac{\norm{\Delta_{t,t}}^2}{n}\right].
\end{align*}
The RHS above is finite almost surely by $\mc{B}_0$(d) -- $\mc{B}_{t-1}$(d), $\mc{B}_t$ (a), and the Gaussianity of $w$ and $Z'_{t}$.  Thus by choosing $\delta < \delta'$, we can work with $c_i$ instead of $a_i$. 
Next, we use Fact \ref{fact:slln} to show that
\begin{equation}
\lim n^{\delta}\left[ \frac{1}{n}\sum_{i=1}^n \phi_b(c_i) -  \frac{1}{n}\sum_{i=1}^n \mathbb{E}_{Z'_t}\left\{\phi_b(c_i)\right\}\right] \overset{a.s.}{=} 0,
\label{eq:3b5}
\end{equation}
To appeal to Fact \ref{fact:slln}, we need to verify that
\begin{equation}
\frac{1}{n}\sum_{i=1}^n \mathbb{E} 
\left\lvert n^{\delta}\phi_b\left(c_i\right) - \mathbb{E}_{Z'_t}\left\{n^{\delta} \phi_b\left(c_i\right)\right\}\right\lvert^{2 + \kappa} \leq cn^{\kappa/2}.
\label{eq:3b4}
\end{equation}
Let $\tilde{Z}'_t$ be an independent copy of $Z'_t$.  In what follows we drop $i$ indices on $Z'_t$ and $\tilde{Z}'_t$ 
for brevity and define $\tilde{\kappa} = \kappa + 2$.  Using steps similar to \eqref{eq:upperbound0}, we can show that 
\begin{align}
& \expec \left\lvert \phi_b\left(c_i\right) - \mathbb{E}_{Z'_t}\left\{ \phi_b\left(c_i\right)\right\}\right\lvert^{\tilde{\kappa}} \nonumber \\
& \leq
\kappa' \abs{ \bar{\sigma}_{t}^{\perp}}^{\tilde{\kappa}} \mathbb{E}_{\tilde{Z}'_t, Z'_t}\left\{ \abs{Z'_t- \tilde{Z}'_t}^{\tilde{\kappa}}\left(1 +  \abs{ \bar{\sigma}_{t}^{\perp} Z'_t}^{\tilde{\kappa}} + \abs{ \bar{\sigma}_{t}^{\perp} \tilde{Z}'_t}^{\tilde{\kappa}}\right)\right\} \nonumber \\
&\quad + \kappa'  \left(\sum_{r=0}^{t-2}\abs{b_i^r}^{\tilde{\kappa}} + \left(\left \lvert 1 +\frac{\bar{\sigma}_t^2}{\bar{\sigma}^2_{t-1}} \right \lvert \abs{b_i^{t-1}}\right)^{\tilde{\kappa}}  +  \abs{w_i}^{\tilde{\kappa}}  \right) \nonumber \\
& \qquad  \cdot  \abs{ \bar{\sigma}_{t}^{\perp}}^{\tilde{\kappa}} \mathbb{E}_{\tilde{Z}'_t, Z'_t}\left\{  \abs{ \bar{\sigma}_{t}^{\perp}}^{\tilde{\kappa}} \abs{Z'_t- \tilde{Z}'_t}^{\tilde{\kappa}}\right\} \nonumber \\
& \stackrel{(a)}{\leq} \kappa_1 + \kappa_2 \left(\sum_{r=0}^{t-2}\abs{b_i^r}^{\tilde{\kappa}} + \left(\left \lvert 1 +\frac{\bar{\sigma}_t^2}{\bar{\sigma}^2_{t-1}} \right \lvert \abs{b_i^{t-1}}\right)^{\tilde{\kappa}}  +  \abs{w_i}^{\tilde{\kappa}}  \right),
\label{eq:2pluskap_bound}
\end{align}
for some constants $\kappa', \kappa_1, \kappa_2 >0$. In \eqref{eq:2pluskap_bound}, $(a)$ holds since $\tilde{Z}'_t, Z'_t$ are $\mc{N}(0,1)$. Substituting \eqref{eq:2pluskap_bound} in the LHS of \eqref{eq:3b4}, and applying induction hypotheses $\mc{B}_0$(d) -- $\mc{B}_{t-1}$(d) shows that the condition \eqref{eq:3b4} is satisfied if $\delta < \frac{\kappa/2}{\tilde{\kappa}} = \frac{\kappa/2}{\kappa +2}$. Thus \eqref{eq:3b5} holds, and we now need to show that the limit of
\begin{equation*}
\begin{split}
&\frac{n^{\delta} }{n}\sum_{i=1}^n \left[\mathbb{E}_{Z'_t}\left\{\phi_b\left(b_i^0, \ldots, b_i^{t-1}, \frac{\bar{\sigma}_t^2}{\bar{\sigma}^2_{t-1}} b^{t-1}_i +  \bar{\sigma}_{t}^{\perp} \, Z'_{t_i}, w_i\right)\right\}  \right. \\
&\qquad \left. - \mathbb{E}\{\phi_b(\bar{\sigma}_0 \hat{Z}_0,\ldots, \bar{\sigma}_t \hat{Z}_t, \sigma Z_w)\}\right]
\end{split}
\end{equation*}
is almost surely $0$ with $\mathbb{E}[\bar{\sigma}_r \hat{Z}_r \bar{\sigma}_t \hat{Z}_t] = \bar{\sigma}_t^2$ for all $0 \leq r \leq t$.  Define the function
\ben
\begin{split}
&\phi_b^{NEW}(b^0_i, \ldots, b^{t-1}_i, w_i) \\
&\quad := \mathbb{E}_{Z'_t}\left\{\phi_b\left(b_i^0, \ldots, b_i^{t-1}, \frac{\bar{\sigma}_t^2}{\bar{\sigma}_{t-1}^2} b^{t-1}_i +  \bar{\sigma}_{t}^{\perp} \, Z'_{t_i}, w_i\right)\right\}.
\end{split}
\een
It can be verified that  $\phi_b^{NEW} \in PL(2)$, and hence the induction hypothesis $\mc{B}_{t-1}$(b) implies that the limit of
\begin{equation*}
\begin{split}
&n^{\delta}  \Bigg[\frac{1}{n}\sum_{i=1}^n  \mathbb{E}_{Z'_t}\Big\{\phi_b(b_i^0, ..., b_i^{t-1}, \frac{\bar{\sigma}_t^2}{\bar{\sigma}_{t-1}^2} b^{t-1}_i +  \bar{\sigma}_{t}^{\perp} \, Z'_{t_i}, w_i)\Big\} -\\
&\mathbb{E} \, \mathbb{E}_{Z'_t} \Big\{\phi_b(\bar{\sigma}_0 \hat{Z}_0, ..., \bar{\sigma}_{t-1} \hat{Z}_{t-1}, \frac{\bar{\sigma}_t^2}{\bar{\sigma}_{t-1}} \hat{Z}_{t-1} +  \bar{\sigma}_{t}^{\perp} \, Z'_{t}, \sigma Z_w) \Big\} \Bigg]
\end{split}
\end{equation*}
is almost surely $0$.

The proof is completed by noting that  $\Big( (\bar{\sigma}_t^2/\bar{\sigma}_{t-1}) \hat{Z}_{t-1} +  \bar{\sigma}_{t}^{\perp} \, Z'_{t} \Big)$ is a Gaussian random variable with variance  $(\bar{\sigma}_t^2/\bar{\sigma}_{t-1})^2 + (\bar{\sigma}_{t}^{\perp})^2 = \bar{\sigma}_t^2$, where we have used the definition of $\bar{\sigma}_{t}^{\perp}$ from \eqref{eq:sigperp_defs} and the fact that $\hat{Z}_{t-1}$ and $Z'_{t}$ are independent.  Note also that for $0 \leq r \leq t-1$,
\ben
\mathbb{E}\left[\bar{\sigma}_r \hat{Z}_r ( \frac{\bar{\sigma}_t^2}{\bar{\sigma}_{t-1}} \hat{Z}_{t-1} +  \bar{\sigma}_{t}^{\perp} Z'_{t} )\right] \overset{(a)}{=} \frac{\bar{\sigma}_t^2}{\bar{\sigma}_{t-1}} \bar{\sigma}_r \mathbb{E}[\hat{Z}_r \hat{Z}_{t-1}]  \overset{(b)}{=} \bar{\sigma}_t^2,
\een
where (a) holds since $\hat{Z}_r$, $Z'_{t}$ are independent and (b) because $ \bar{\sigma}_r  \bar{\sigma}_{t-1} \mathbb{E}\left[\hat{Z}_r \hat{Z}_{t-1} \right] =  \bar{\sigma}_{t-1}^2$.

\textbf{(c)} The function $\phi_b(b^0_i, \ldots, b^t_i, w_i) := b^t_i w_i \in PL(2)$.  By $\mc{B}_0(b)$, $\lim \frac{(b^t)^*w}{n} \overset{a.s.}{=} \mathbb{E}\{\bar{\sigma}_t \hat{Z}_t \sigma Z_w)\} = 0$, and the convergence rate is $n^{-\delta}$.

\textbf{(d)} The function $\phi_b(b^0_i, \ldots, b^t_i, w_i) := b^r_i b^t_i \in PL(2)$ for $0 \leq r \leq t$.  By $\mc{B}_0(b)$, $\lim \frac{(b^r)^*b^t}{n} \overset{a.s.}{=} \mathbb{E}\{\bar{\sigma}_r \hat{Z}_r \bar{\sigma}_t \hat{Z}_t\} = \bar{\sigma}_t^2$  and the convergence rate is $n^{-\delta}$.

\textbf{(e)} Recall $m^r = b^r - w$ for $0 \leq r \leq t$.  The function $\phi_b(b^0_i, \ldots, b^t_i, w_i) := (b^r_i - w_i)(b^t_i - w_i) \in PL(2)$.  By $\mc{B}_0(b)$, $\lim \frac{(m^r)^*m^t}{n} \overset{a.s.}{=} \mathbb{E}\{(\bar{\sigma}_r \hat{Z}_r - \sigma Z_w)(\bar{\sigma}_t \hat{Z}_t - \sigma Z_w)\} =\mathbb{E}\{(\bar{\sigma}_r \hat{Z}_r \bar{\sigma}_t \hat{Z}_t \} + \sigma^2 = \bar{\sigma}_t^2 + \sigma^2 = \bar{\tau}_t^2$  and the convergence rate is $n^{-\delta}$.

\textbf{(f)} The function $\phi_b(b^0_i, \ldots, b^t_i, w_i) := b^r_i (b^s_i - w_i) \in PL(2)$ for $0 \leq r,s \leq t$.  By $\mc{B}_0(b)$, $\lim \frac{(b^r)^*m^s}{n} \overset{a.s.}{=} \mathbb{E}\{\bar{\sigma}_r \hat{Z}_r(\bar{\sigma}_s \hat{Z}_s - \sigma Z_w)\} = \bar{\sigma}_{\max(r,s)}^2$  and the convergence rate is $n^{-\delta}$.

\textbf{(g)} Note that $\vec{\alpha}^t = \left( \frac{M_t^* M_t}{n} \right)^{-1} \frac{M_t^* m^t}{n}$.  We first show that the matrix $\frac{M_t^* M_t}{n}$ is invertible with a finite limit. From the induction hypotheses $\mc{B}_{0}$(e)--$\mc{B}_{t-1}$(e), $\lim \frac{1}{n}(m^r)^*m^s \overset{a.s.}{=} \bar{\tau}^2_{\max(r,s)}$ at rate $n^{-\delta}$ for $0 \leq r,s \leq (t-1).$ Further, $\mc{B}_{0}$(h)--$\mc{B}_{t-1}$(h) and Fact \ref{fact:eig_proj} together imply that the smallest eigenvalue of the matrix $\frac{M_t^* M_t}{n}$ is bounded from below by a positive constant for all $n$; then Fact \ref{fact:eig_conv} implies that its inverse has a finite limit. Further, the inverse converges  to its limit at rate $n^{-\delta}$ as each entry in $\frac{M_t^* M_t}{n}$ converges at this rate.  Next, using $\mc{B}_{0}$(e)--$\mc{B}_{t-1}$(e), 
\be
\begin{split}
\lim \vec{\alpha}^t &= \lim \left( \frac{M_t^* M_t}{n} \right)^{-1} \frac{M_t^* m^t}{n} \\
& \overset{(a)}{=} C^{-1} \textsf{e}_t \bar{\tau}_t^2 \overset{(b)}{=} \left(0,\ldots,0, \frac{\bar{\tau}^2_{t}}{\bar{\tau}^2_{t-1}}\right)^*,
\label{eq:alphat_conv}
\end{split}
\ee
In step $(a)$, the matrix $C \in \mathbb{R}^{t \times t}$ has entries $C_{i,j} = \bar{\tau}_{\max(i-1,j-1)}^2$ for $1 \leq i,j \leq t$ and $\textsf{e}_t \in \mathbb{R}^t$ denotes the all-ones column vector.  The equality $(b)$ is obtained  as follows:  first, note that $C^{-1}  \mathsf{e}_{t}$ is the solution to $Cx = \mathsf{e}_{t}$. Next, since all the entries in the last column of $C$ are equal to $\bar{\tau}^2_{t-1}$, by inspection the solution to $Cx = \mathsf{e}_{t}$ is $x = [0,\ldots,0, (\bar{\tau}^2_{t-1})^{-1}]^*$, which yields $(b)$ in \eqref{eq:alphat_conv}.

\textbf{(h)}   Let $\mathsf{P}_{M_{t}} = M_{t}(M_{t}^* M_{t})^{-1} M_{t}^*$ be the projection matrix onto the column space of $M_{t}$. Note that $M_{t}^* M_{t}$ is invertible with a finite limit in $\mc{B}_t$(g).  Then,
\begin{equation}
\begin{split}
&\frac{\norm{m_{\perp}^t}^2}{n} = \norm{(\mathsf{I} - \mathsf{P}_{M_{t}})m^{t}}^2 \\
& =  \frac{\norm{m^{t}}^2}{n} - \frac{(m^{t})^*M_t}{n} \cdot \left(\frac{M_t^*M_t}{n}\right)^{-1} \cdot\frac{M_t^*m^{t}}{n}.
\end{split}
\label{eq:stepbteq1}
\end{equation}
Using the representation in \eqref{eq:stepbteq1}, it follows by $\mathcal{B}_{0}$(e) - $\mathcal{B}_{t}$(e),
\begin{equation*}
\lim \frac{\norm{m_{\perp}^{t}}^2}{n} \overset{(a)}{=}  \bar{\tau}^2_{t} - \bar{\tau}_{t}^2 \textsf{e}_t^* C^{-1} \textsf{e}_t \bar{\tau}_{t}^2 \overset{(b)}{=} \bar{\tau}^2_{t} - \frac{ \bar{\tau}_{t}^4}{\bar{\tau}^2_{t-1}} = (\bar{\tau}_t^{\perp})^2.
\label{eq:stepbteq2}
\end{equation*}
In step $(a)$, the matrix $C \in \mathbb{R}^{t \times t}$ has entries $C_{i,j} = \bar{\tau}_{\max(i-1,j-1)}^2$ for $1 \leq i,j \leq t$ and $\textsf{e}_t \in \mathbb{R}^t$ denotes the all-ones column vector.  The equality $(b)$ follows from the same reasoning as in \eqref{eq:alphat_conv}.

Finally since $\bar{\tau}^2_r = \sigma^2 \left( 1 + \snr \right)^{1-\xi_{r-1}} $ for $0 \leq r \leq t$, the definition of $\xi_{r-1}$ in \eqref{eq:lim_alph} implies that $(\bar{\tau}_t^{\perp})^2$ is strictly positive for $r \leq T^*$, where $T^*=  \left\lceil \frac{2 \mc{C}}{\log(\mc{C}/R)} \right \rceil$.

\subsubsection{Step 4:} \label{subsub:step4} \emph{Showing $\mathcal{H}_{t+1}$ holds.}

\textbf{(a)}  Let $\textbf{Q}_{t+1} := \frac{Q_{t+1}^* Q_{t+1}}{n}$ and $v' := \frac{B^*_{t+1} m_t^{\perp}}{n} - \frac{Q_{t+1}^*}{n}\left[q^t - \sum_{i=0}^{t-1} \alpha^t_i q^i\right]$.  From the definition of $\Delta_{t+1,t}$ in Lemma \ref{lem:hb_cond} \eqref{eq:Dt1t}, we have 
\be
\begin{split}
&\Delta_{t+1,t} = \sum_{r=0}^{t-2} \alpha^t_r h^{r+1} +  \left(\alpha^t_{t-1} -  \frac{\bar{\tau}_t^2}{\bar{\tau}_{t-1}^2}\right) h^{t}  \\
& + \left(\frac{\norm{m^t_{\perp}}}{\sqrt{n}} - \bar{\tau}_{t}^{\perp}\right) Z_t - \frac{\norm{m^t_{\perp}}}{\sqrt{n}}\sum_{r'=0}^{t} \tilde{q}^{r'} \frac{\bar{Z}_{t+1_r}}{\sqrt{n}} \\
&  + Q_{t+1} \textbf{Q}_{t+1}^{-1}v',
\end{split}
 \label{eq:newDeltat1t1}
\ee
where we have used Fact \ref{fact:gauss_p0} to write
\ben
\frac{\norm{m^t_{\perp}} }{\sqrt{n}} \mathsf{P}_{Q_{t+1}}Z_t \overset{d}{=} \frac{\norm{m^t_{\perp}} \tilde{Q}_{t+1} \bar{Z}_{t+1}}{n} =  \frac{\norm{m^t_{\perp}}}{\sqrt{n}}\sum_{s=0}^{t} \tilde{q}^{s} \frac{\bar{Z}_{t+1_r}}{\sqrt{n}}.
\een
The matrix $\tilde{Q}_{t+1}=[\tilde{q}^0| \ldots| \tilde{q}^t]$ forms an orthogonal basis for the columns of $Q_{t+1}$ such that $\norm{\tilde{q}^s} = \sqrt{n}$ and $\bar{Z}_{t+1} \in \mathbb{R}^{t+1}$ is an independent  i.i.d.\ $\mc{N}(0,1)$ random vector. It follows from \eqref{eq:newDeltat1t1} that
\be
\label{eq:newDeltat1teq1}
\begin{split}
&\max_{j \in sec(\ell)} \abs{\Delta_{t+1,t}} \leq \sum_{r=0}^{t-2} \abs{\alpha^t_r} \max \abs{h^{r+1}_j} \\
& \ +  \left \lvert \alpha^t_{t-1} -  \frac{\bar{\tau}_t^2}{\bar{\tau}_{t-1}^2}\right \lvert \max \abs{h^{t}_j} + \left \lvert \frac{\norm{m^t_{\perp}}}{\sqrt{n}} - \bar{\tau}_{t}^{\perp}\right \lvert \max \abs{Z_{t_j}}  \\
&\ +  \frac{\norm{m^t_{\perp}}}{\sqrt{n}} K \sqrt{nP_{\ell}} \sum_{r'=0}^{t} \frac{\abs{\bar{Z}_{t+1_{r'}}}}{\sqrt{n}} \\
& \ + K \sqrt{nP_{\ell}} \sum_{r=0}^{t} \left \lvert [\textbf{Q}_{t+1}^{-1}v']_{r+1}\right \lvert.
\end{split}
\ee
In the above we have used  $Q_{t+1} \textbf{Q}_{t+1}^{-1}v' = \sum_{r=0}^{t} q^r [\textbf{Q}_{t+1}^{-1}v']_{r+1}$, and the fact that both $\max \abs{q^r_j}$ and $\max \abs{\tilde{q}^r_j}$ are bounded by $K \sqrt{nP_{\ell}}$ for some constant $K>0$.  

We show that all terms on the RHS of \eqref{eq:newDeltat1teq1} are $o(n^{-\delta} \sqrt{\log M})$ almost surely.   This is true of the first two terms by $\mc{H}_1$(a)--$\mc{H}_t$(a), and  $\mc{B}_t$(g), which says that almost surely $\abs{\alpha^t_r} \in o(n^{-\delta})$ for $0 \leq r \leq t-2$ and $\left \lvert \alpha^t_{t-1} -  (\bar{\tau}_t^2/\bar{\tau}^2_{t-1})\right \lvert \in o(n^{-\delta})$.  By Fact \ref{lem:maxZbound} and $\mc{B}_t$(h) the third term is almost surely $o(n^{-\delta} \sqrt{\log M})$.  Considering the fourth term, $\norm{m^t_{\perp}}/\sqrt{n}$ has a bounded limiting value by $\mc{B}_t$(h), $\sqrt{nP_{\ell}} = \Theta(\sqrt{\log M})$ and $\abs{\bar{Z}_{t+1, r'}}/\sqrt{n} \in o(n^{-\delta})$ a.s. for $0 \leq r' \leq t$.  Finally, the fifth term is almost surely $o(n^{-\delta}\sqrt{\log M})$ if we can show that $\left \lvert [\textbf{Q}_{t+1}^{-1}v']_{r+1}\right \lvert \in o(n^{-\delta})$ for each $0 \leq r \leq t$.  We prove this in what follows.

Note that
\be
\begin{split}
&[\textbf{Q}_{t+1}^{-1}v']_{r+1} = \\
& \quad \begin{cases}
\alpha^t_{r+1} +  \left[\, \left(\tfrac{Q_{t+1}^*Q_{t+1}}{n} \right)^{-1} \tfrac{B_{t+1}^* m_{\perp}^t}{n} \, \right]_{r+1}  \text{ for } 0 \leq  r \leq t, \\
- 1 + \left[\, \left(\tfrac{Q_{t+1}^*Q_{t+1}}{n} \right)^{-1} \tfrac{B_{t+1}^* m_{\perp}^t}{n} \, \right]_{t},  \text{ for }  r = t. \label{eq:coeff_compare_H}
\end{cases}
\end{split}
\ee
We show that each of the above coefficients is  $o(n^{-\delta})$. Indeed, for $1 \leq i \leq t+1$,
\ben
\begin{split}
&\left[ \frac{B_{t+1}^* m_{\perp}^t}{n}\right]_{i} = \frac{(b^{i-1})^* m_{\perp}^t}{n}  = \frac{(b^{i-1})^*(m^t - m_{\parallel}^t)}{n} \\
&\qquad = \frac{(b^{i-1})^* m^t}{n} - \sum_{r=0}^{t-1} \alpha^t_r \frac{(b^{i-1})^* m^r}{n} \\
&\qquad  \stackrel{a.s.}{\to}   \lim\left[\frac{(q^{i})^* q^{t}}{n} -\sum_{r=0}^{t-1} \alpha^t_r \frac{(q^{i})^* q^{r}}{n}\right], 
\end{split}
\een
where the convergence (at rate $n^{-\delta}$) follows from $\mc{B}_t$(e), $\mc{B}_t$(f), and $\mc{B}_t$(g) (convergence of $\vec{\alpha}^t$ to finite values). Therefore, at rate $n^{-\delta}$,
\be 
\left[ \frac{B_{t+1}^* m_{\perp}^t}{n}\right] \stackrel{a.s.}{\to}  \lim\left[\frac{(Q_{t+1})^* q^{t}}{n} - \sum_{r=0}^{t-1} \alpha^t_{r+1} \frac{(Q_{t+1})^* q^{r}}{n}\right],
\label{eq:Btmt_conv} 
\ee
and substituting \eqref{eq:Btmt_conv} in \eqref{eq:coeff_compare_H} we see that each coefficient of \eqref{eq:coeff_compare_H} is $o(n^{-\delta})$.  This completes the proof demonstrating $\max_{j \in sec(\ell)} \abs{[\Delta_{t+1, t}]_{j}} \overset{a.s}{=} \Theta\left( n^{-\delta} \sqrt{\log M}\right)$.
 
Next, from Lemma \ref{lem:hb_cond} \eqref{eq:Ha_dist} it follows,
\begin{align*}
&\max_{j \in sec(\ell)} \abs{h^{t+1}_j} \\
&\leq \frac{\bar{\tau}_t^2}{\bar{\tau}_{t-1}^2} \max_{j \in sec(\ell)} \abs{ h^{t}_j}  + \abs{\bar{\tau}_t^{\perp}} \max_{j \in sec(\ell)} \abs{Z_{t_j}} + \max_{j \in sec(\ell)} \abs{\Delta_{{t+1,t}_j}} \\
& \overset{a.s}{\leq} \frac{\bar{\tau}_t^2}{\bar{\tau}_{t-1}^2} c_{t}\sqrt{\log M} + \abs{\bar{\tau}_0}\Theta\left(\sqrt{\log M}\right) + \Theta\left( n^{-\delta} \sqrt{\log M}\right).
\end{align*}
The second inequality above comes from $\mc{H}_t$(a), Fact \ref{lem:maxZbound}, and the first result of $\mc{H}_{t+1}(a)$ proved above.

\textbf{(b)}  As in the proof of $\mc{H}_1 (b)$, we provide the main steps of the proof, referring the reader to \cite{steps24b} for details. Throughout we use generic $\phi_{k, \ell}(x, y, z)$ as the steps are identical for all $k \in \{1, 2, 3, 4\}$.    From Lemma \ref{lem:hb_cond} \eqref{eq:Ha_dist},
\begin{equation*}
\begin{split}
\phi_{k, \ell} &\left.\left( \sum_{u=0}^{t} a_u h^{u+1}_{\ell}, \sum_{v=0}^{t} b_v h^{v+1}_{\ell}, \beta_{0_{\ell}}\right) \right \lvert_{\mscrs_{t+1,t}} \\
&\overset{d}{=} \phi_{k, \ell}\left( \sum_{u=0}^{t-1} a'_u h^{u+1}_{\ell} + a_t \bar{\tau}_{t}^{\perp} Z_{t_{\ell}} + a_t[\Delta_{t+1,t}]_{\ell}, \right. \\
&\qquad \qquad \left. \sum_{v=0}^{t-1} b'_v h^{v+1}_{\ell} + b_t \bar{\tau}_{t}^{\perp} Z_{t_{\ell}} + b_t[\Delta_{t+1,t}]_{\ell}, \beta_{0_{\ell}}\right),
\end{split}
\end{equation*}
where $a'_u = a_u$ and $b'_v = b_v$ for $0 \leq u,v \leq t-2$ and $a'_{t-1} = a_{t-1} + a_t \frac{\bar{\tau}_t^2}{\bar{\tau}_{t-1}^2}$ and $b'_{t-1} = b_{t-1} + b_t \frac{\bar{\tau}_t^2}{\bar{\tau}_{t-1}^2}$.  By $\mc{H}_{t+1}$(a), $\max_{j \in sec(\ell)} \abs{[\Delta_{t+1, t}]_{j}} \overset{a.s.}{=} o(n^{-\delta'} \sqrt{\log M} )$ for each $\ell \in [L]$ and some $\delta' >0$.   In \cite{steps24b}, the first step of the proof uses this to show for each of the functions in \eqref{eq:phih_fns},\begin{equation*}
\begin{split}
&\frac{1}{L}  \sum_{\ell=1}^L  \left \lvert  \phi_{k, \ell}\left( \sum_{u=0}^{t-1} a'_u h^{u+1}_{\ell} + a_t \bar{\tau}_{t}^{\perp} Z_{t_{\ell}} + a_t[\Delta_{t+1,t}]_{\ell}, \right. \right. \\
& \qquad \qquad \qquad \left. \left. \sum_{v=0}^{t-1} b'_v h^{v+1}_{\ell} + b_t \bar{\tau}_{t}^{\perp}Z_{t_{\ell}} + b_t[\Delta_{t+1,t}]_{\ell}, \beta_{0_{\ell}}\right) \right. - \\
&\left.   \phi_{k, \ell}\left( \sum_{u=0}^{t-1} a'_u h^{u+1}_{\ell} + a_t \bar{\tau}_{t}^{\perp} Z_{t_{\ell}}, \sum_{v=0}^{t-1} b'_v h^{v+1}_{\ell} + b_t \bar{\tau}_{t}^{\perp} Z_{t_{\ell}}, \beta_{0_{\ell}}\right)\right \lvert
\end{split}
\end{equation*}
is almost surely $o({n^{- \delta'} \log M})$.  Choosing $\delta \in (0, \delta')$ ensures that we can drop the deviation terms $\Delta_{t+1, t}$.  

The second step of the proof appeals to Fact \ref{fact:slln} to show that the limit of the expression
\begin{small}
\begin{equation}
\begin{split}
& \frac{n^{\delta}}{L} \hspace{-2pt} \sum_{\ell=1}^L \hspace{-1pt} \Big[ \phi_{k, \ell} \Big ( \sum_{u=0}^{t-1} a'_u h^{u+1}_{\ell} + a_t \bar{\tau}_{t}^{\perp} Z_{t_{\ell}}, \sum_{v=0}^{t-1} b'_v h^{v+1}_{\ell} + b_t \bar{\tau}_{t}^{\perp}Z_{t_{\ell}}, \beta_{0_{\ell}} \Big )  \\
& -  \mathbb{E}_{Z_t} \phi_{k, \ell}\Big( \sum_{u=0}^{t-1} a'_u h^{u+1}_{\ell} + a_t \bar{\tau}_{t}^{\perp} Z_{t_{\ell}}, \sum_{v=0}^{t-1} b'_v h^{v+1}_{\ell} + b_t \bar{\tau}_{t}^{\perp}Z_{t_{\ell}}, \beta_{0_{\ell}} \Big) \Big ]
\label{eq:2b51}
\end{split}
\end{equation}
\end{small}
is almost surely $0$.  Let $\tilde{Z}_t$ be an independent copy of $Z_t$. Define the value $\textsf{diff}_{k, \ell}$ to be the following difference for each $\ell \in [L]$ and each function in \eqref{eq:phih_fns} with $k = 1,2 , 3, 4$,
\begin{equation*}
\begin{split}
&\textsf{diff}_{k, \ell} := \\
& \phi_{k, \ell}\Big( \sum_{u=0}^{t-1} a'_u h^{u+1}_{\ell} + a_t \bar{\tau}_{t}^{\perp} \tilde{Z}_{t_{\ell}}, \sum_{v=0}^{t-1} b'_v h^{v+1}_{\ell} + b_t \bar{\tau}_{t}^{\perp} \tilde{Z}_{t_{\ell}}, \beta_{0_{\ell}} \Big)  \\
& - \phi_{k, \ell}\Big( \sum_{u=0}^{t-1} a'_u h^{u+1}_{\ell} + a_t \bar{\tau}_{t}^{\perp} Z_{t_{\ell}}, \sum_{v=0}^{t-1} b'_v h^{v+1}_{\ell} + b_t \bar{\tau}_{t}^{\perp} Z_{t_{\ell}}, \beta_{0_{\ell}} \Big) 
\end{split}
\end{equation*}
In order to use Fact \ref{fact:slln} (conditionally on $\mscrs_{t+1, t}$) to get the above result we must prove that
\begin{equation}
\begin{split}
\label{eq:2b4a1}
&\frac{1}{L}\sum_{\ell=1}^L\mathbb{E}_{\tilde{Z}_t, Z_t} \left\lvert n^{\delta} \textsf{diff}_{k, \ell} \right\lvert^{2 + \kappa}\leq cL^{\kappa/2},
\end{split}
\end{equation}
for some constants $0 \leq \kappa \leq 1$ and  $c>0$.  The exact condition required by Fact \ref{fact:slln} follows from \eqref{eq:2b4a1} by an application of Jensen's inequality.   In \cite{steps24b} it is shown that for each function in \eqref{eq:phih_fns} and each $\ell \in [L]$,
\begin{equation}
\begin{split}
 \mathbb{E}_{\tilde{Z}_t, Z_t}& \left\lvert\textsf{diff}_{k , \ell} \right\lvert^{2 + \kappa} \stackrel{a.s.}{=} O((\log M)^{2+\kappa}).
\label{eq:2b4a2}
\end{split}
\end{equation}
Bound \eqref{eq:2b4a2} implies \eqref{eq:2b4a1} holds if $\delta$ is chosen such that $\delta(2 + \kappa) < \kappa/2$. Hence \eqref{eq:2b51} holds.

Considering result \eqref{eq:2b51}, define new functions $\phi^{NEW}_{k, \ell}$ for $k \in \{1, 2, 3, 4\}$ as
\begin{small}
\ben
\begin{split}
&\phi^{NEW}_{k, \ell}\left( \sum_{u=0}^{t-1} a'_u h^{u+1}_{\ell} ,  \sum_{v=0}^{t-1} b'_v h^{v+1}_{\ell},  \beta_{0_\ell} \right) := \\
&\mathbb{E}_{Z_t}   \phi_{k, \ell}\Big(\sum_{u=0}^{t-1} a'_u h^{u+1}_{\ell}  + a_{t}\bar{\tau}_{t}^{\perp} Z_{t_{\ell}},  \sum_{v=0}^{t-1} b'_v  h^{v+1}_{\ell} + b_{t} \bar{\tau}_{t}^{\perp} Z_{t_{\ell}}, \beta_{0_{\ell}}\Big).
\end{split}
\een
\end{small}
Using Jensen's inequality, it can be shown that the induction hypothesis $\mc{H}_{t}(b)$ holds for the function $\phi^{NEW}_{k, \ell}$ whenever $\mc{H}_t$(b) holds for the function $\phi_{k, \ell}$ inside the expectation.  This work can be found in \cite{steps24b}. Therefore, the limit of
\ben
\begin{split}
\frac{n^{\delta} }{L}\sum_{\ell=1}^L & \left\{  \mathbb{E}_{Z_t} \Big[ \phi_{k, \ell} \Big( \sum_{u=0}^{t-1} a'_u h^{u+1}_{\ell} + a_{t}\bar{\tau}_{t}^{\perp} Z_{t_{\ell}},  \right. \\
& \hspace{1in}  \sum_{v=0}^{t-1} b'_v  h^{v+1}_{\ell} + b_{t} \bar{\tau}_{t}^{\perp} Z_{t_{\ell}}, \beta_{0_{\ell}} \Big) \Big ] \\
& -  \mathbb{E}\mathbb{E}_{Z_t} \Big[ \phi_{k, \ell} \Big( \sum_{u=0}^{t-1} a'_u \bar{\tau}_u\breve{Z}_{u_{\ell}}  + a_{t}\bar{\tau}_{t}^{\perp} Z_{t_{\ell}}, \\
&  \hspace{1in}  \left. \sum_{v=0}^{t-1} b'_v  \bar{\tau}_v\breve{Z}_{v_{\ell}}+ b_{t} \bar{\tau}_{t}^{\perp} Z_{t_{\ell}}, \beta_{\ell}\Big) \Big ] \right\}
\end{split}
\een
is almost surely $0$.  To complete the proof we show that
\begin{small}
\ben
\begin{split}
&\mathbb{E}\mathbb{E}_{Z_t} \Big[ \phi_{k, \ell} \Big( \sum_{u=0}^{t-1} a'_u \bar{\tau}_u\breve{Z}_{u}  + a_{t}\bar{\tau}_{t}^{\perp} Z_{t}, \, \sum_{v=0}^{t-1} b'_v  \bar{\tau}_v\breve{Z}_{v}+ b_{t} \bar{\tau}_{t}^{\perp} Z_{t}, \, \beta_{\ell} \Big) \Big]  \\
&\qquad = \mathbb{E} \Big[ \phi_{k, \ell} \Big(\sum_{u=0}^{t} a_u \bar{\tau}_u\breve{Z}_{u}, \, \sum_{v=0}^{t} b_v  \bar{\tau}_v\breve{Z}_{v}, \, \beta_{\ell}
\Big) \Big].
\end{split}
\een
\end{small}
Recall $a'_{t-1}  = a_{t-1} + a_t (\bar{\tau}_t^2/\bar{\tau}_{t-1}^2)$ and $b'_{t-1} = b_{t-1} + b_t (\bar{\tau}_t^2/\bar{\tau}_{t-1}^2)$. Then to prove the above, we will show that $(\bar{\tau}_t^2/\bar{\tau}_{t-1}) \breve{Z}_{t-1} + \bar{\tau}_{t}^{\perp} Z_{t} \overset{d}{=} \bar{\tau}_t \breve{Z}_t$ where $\bar{\tau}_r \bar{\tau}_t \mathbb{E}[ \breve{Z}_r \breve{Z}_t] = \bar{\tau}_t^2$ for $0 \leq r \leq t-1$.  Indeed, $\left((\bar{\tau}_t^2/\bar{\tau}_{t-1}) \breve{Z}_{t-1}  + \bar{\tau}_{t}^{\perp} Z_{t}\right)$ is Gaussian with variance equal to $(\bar{\tau}_t^2/\bar{\tau}_{t-1})^2 + (\bar{\tau}_{t}^{\perp} )^2 = \bar{\tau}_t^2$, using the definition of $\bar{\tau}_{t}^{\perp}$ in \eqref{eq:sigperp_defs} and the independence of $\breve{Z}_{t-1} $ and $Z_{t}$.  Further, for $0 \leq r \leq t-1$
\ben
\begin{split}
&\mathbb{E}\left[ \bar{\tau}_r \breve{Z}_r \left((\bar{\tau}_t^2/\bar{\tau}_{t-1}) \breve{Z}_{t-1} + \bar{\tau}_{t}^{\perp} Z_{t}\right)\right] \\
& = (\bar{\tau}_t^2/\bar{\tau}_{t-1}^2)  \bar{\tau}_r \bar{\tau}_{t-1}\mathbb{E}[ \breve{Z}_r \breve{Z}_{t-1}] = \bar{\tau}_t^2.
\end{split}
\een
The existence of the limit of $\mathbb{E}\{ \phi_{k, \ell} (\sum_{u=0}^{t} a_u \bar{\tau}_u\breve{Z}_{u}, \, \sum_{v=0}^{t} b_v  \bar{\tau}_v\breve{Z}_{v}, \, \beta_{\ell})\}$  for $k=1$ follows from the law of large numbers; for $k=2,3,4$, the existence of the limit follows from Appendix \ref{app:qrqs_lim}.

\textbf{(c)}, \textbf{(d)}, \textbf{(e)} These are shown by invoking $\mc{H}_{t+1}(b)$, and are similar to the corresponding results for step $\mc{H}_1$.

\textbf{(f)} Using the fourth function in \eqref{eq:phih_fns} for any $0 \leq r,s \leq t$ by $\mc{H}_{t+1}$(b),
\ben
\lim \frac{(h^{s+1})^* q^{r+1}}{n} \overset{a.s.}{=} \lim \frac{1}{n} \sum_{\ell = 1}^L \mathbb{E}\{\bar{\tau}_s \breve{Z}_{s_{\ell}}^*[\eta^r_{\ell}(\beta - \bar{\tau}_r \breve{Z}_r) - \beta_{\ell}]\},
\label{eq:4d1}
\een
and the convergence is $o(n^{-\delta})$.  Using arguments very similar to those in $\mc{H}_1$(f) (iterated expectations and Stein's lemma), we obtain that
\be
\begin{split}
&\mathbb{E}\{\bar{\tau}_s \breve{Z}_{s_{\ell}}^*[\eta^r_{\ell}(\beta - \bar{\tau}_r \breve{Z}_r) - \beta_{\ell}]\} \\
& \quad =\frac{\bar{\tau}_s }{\bar{\tau}_r}  \expec[\breve{Z}_{s_1} \breve{Z}_{r_1}] \left( \mathbb{E}\norm{\eta^r_{\ell}(\beta - \bar{\tau}_r \breve{Z}_{r})}^2 - nP_{\ell}\right) \\
&\quad = \frac{\bar{\tau}_{\max(r,s)}^2}{\bar{\tau}_r^2} \left( \mathbb{E}\norm{\eta^r_{\ell}(\beta - \bar{\tau}_r \breve{Z}_{r})}^2 - nP_{\ell}\right), \quad \ell \in [L].
\label{eq:4d11}
\end{split}
\ee
Here $\breve{Z}_{s_1}, \breve{Z}_{r_1}$ refer to  the first entries of the vectors $\breve{Z}_s, \breve{Z}_r$, respectively. Using  \eqref{eq:4d11} along with the fact that $\Big(P-\tfrac{1}{n}\mathbb{E}\left\{\norm{\eta^{r}(\beta - \bar{\tau}_r Z_r)}^2\right\}\Big)  \to \bar{\sigma}^2_{r+1}$ (cf. Appendix \ref{app:qrqs_lim}), \eqref{eq:4d1} becomes
\ben
\lim \frac{(h^{s+1})^* q^{r+1}}{n} \overset{a.s.}{=} -\frac{\bar{\tau}_{\max(r,s)}^2  \bar{\sigma}^2_{r+1}}{\bar{\tau}_r^2}.
\een
Next, from \eqref{eq:lambda_t_def}, we observe that
\be
\begin{split}
&\lambda_{r+1} =  \frac{1}{\bar{\tau}^2_r}\left(  \frac{\norm{\beta^{r+1}}^2}{n} - P \right) \\
&\stackrel{a.s.}{\to} \lim \frac{1}{\bar{\tau}_r^2}\left( \frac{\mathbb{E}\norm{\eta^{r}(\beta - \bar{\tau}_r \breve{Z}_r)}^2}{n} - P\right)  = 
\frac{-\bar{\sigma}^2_{r+1}}{\bar{\tau}^2_r},
\label{eq:lams1_conn} 
\end{split}
\ee
where the convergence at rate $n^{-\delta}$ follows from $\mc{H}_{t+1}$(b) applied to the second function in \eqref{eq:phih_fns}.  The last equality in \eqref{eq:lams1_conn} is from Appendix \ref{app:qrqs_lim}. By $\mc{B}_t$(e) $ \lim \, (m^r)^*m^s/n \overset{a.s.}{=} \bar{\tau}_{\max(r,s)}^2$, which along with \eqref{eq:lams1_conn} completes the proof.

\textbf{(g)} Note that $\vec{\gamma}^{t+1} = \left( \frac{Q_{t+1}^* Q_{t+1}}{n} \right)^{-1} \frac{Q_{t+1}^* q^{t+1}}{n}$.  Similarly to the proof of step $\mc{B}_t$(g), the matrix $\frac{1}{n}Q_{t+1}^* Q_{t+1}$ can be shown to be invertible with a finite limit using $\mc{H}_{1}$(e) -- $\mc{H}_{t}$(e), $\mc{H}_1$(h) -- $\mc{H}_{t}$(h), Fact \ref{fact:eig_proj}, and Fact \ref{fact:eig_conv}.  Then use $\mc{H}_{1}$(e) -- $\mc{H}_{t}$(e) to find the value of the limit of $\vec{\gamma}^{t+1}$.

\textbf{(h)}   This result follows similarly to $\mc{B}_t$(h) but uses the convergence results $\mathcal{H}_{1}$(e) -- $\mathcal{H}_{t+1}$(e).

\appendix
\subsection{AMP Derivation} \label{app:amp_derive}

In \eqref{eq:z_update}, the dependence of $z^t_{a \to i}$ on $i$ is only due to the  term $A_{ai} \beta^t_{i \to a}$ being excluded from the sum. Similarly, in \eqref{eq:beta_update} the dependence of $\beta^t_{i \to a}$ on $a$ is  due to excluding the term $A_{ai} z^t_{a \to i}$ from the argument.
We begin by estimating the order of these excluded terms.  %In \eqref{eq:z_update}, the term excluded from the sum is $A_{aj} \beta^t_{j \to a}$.

Note that $A_{ai} = O(n^{-1/2})$, and $\beta^t_{i \to a} = O(\sqrt{\log n} )$. The latter is true since for $i$ in section $\ell$,
$\beta_i \leq \sqrt{nP_\ell}$, where $P_\ell=O(1/L)$, and $L = \Theta(n/ \log n)$. Therefore $ A_{ai} \beta^t_{i \to a} = O\left( \sqrt{\log n/n}  \right)$.
In \eqref{eq:beta_update}, the excluded term  $A_{bi} z^t_{b \to i}$ is  $O({n}^{-1/2})$ because $z^t_{b \to i} = O(1)$.
We set
\be
z^{t}_{a \to i} = z^t_a + \delta z^t_{a \to i}, \quad \text{ and } \quad \beta^{t+1}_{i \to a} = \beta^{t+1}_{i} + \delta\beta^{t+1}_{i \to a}.
\label{eq:delta_defs}
\ee
Comparing \eqref{eq:delta_defs} with \eqref{eq:z_update}, we can write
\begin{align}
z^t_a  = y_a - \sum_{j \in [N]} A_{aj} \beta^t_{j \to a}, \quad \delta z^t_{a \to i} = A_{ai} \beta^t_{i \to a}.
\label{eq:za_delza}
\end{align}

 For $i \in [N]$, let $\secl$ denote the set of indices in the section containing $i$. To determine $\delta\beta^t_{i \to a}$, we expand $\eta_i^t$ in \eqref{eq:beta_update} in a Taylor series around the argument
$ \left\{ \sum_{b \in [n] }  A_{bj} z^t_{b \to j}  \right\}_{j \in \secl}$,
which does not depend on $a$. We thus obtain
\be
\begin{split}
\beta^{t+1}_{i \to a} 
& \approx  \eta_i^t \Bigg( \Big\{ \sum_{b \in [n] }  A_{bj} z^t_{b \to j}  \Big\}_{j \in \secl } \Bigg) \\
& \quad - A_{ai} z^t_{a \to i}  \, \partial_i \eta^t_i \Bigg( \Big\{ \sum_{b \in [n] }  A_{bj} z^t_{b \to j}  \Big\}_{j \in \secl }  \Bigg),
\end{split}
\label{eq:tay_bt}
\ee
where $\partial_i \eta^t_i(.)$ is the partial derivative  of $\eta^t_i$ with respect to the component of the argument corresponding to index $i$. (Recall from \eqref{eq:eta_def} that the argument is a length $M$ vector.) From \eqref{eq:eta_def}, the partial derivative can be evaluated as
\be
\begin{split}
 \partial_i \eta^t_i(s) &= \eta^t_i(s) \; \partial_i \ln \eta^t_i(s) \\
 &=  \eta^t_i(s) \, \left(  \frac{\sqrt{nP_\ell}}{\tau_t^2}  -   \frac{\sqrt{nP_\ell}}{\tau_t^2}  \frac{e^{\frac{s_i \sqrt{n P_\ell}}{\tau^2_t} }} {\sum_{j \in \text{sec}(i)} \, e^{\frac{s_j \sqrt{n P_\ell}}{\tau^2_t} }}  \right) \\
 &=   \frac{\eta^t_i(s)}{\tau^2_t} \left(  \sqrt{nP_\ell}  -  \eta^t_i (s)   \right).
\end{split}
\label{eq:partial_calc}
\ee
Using \eqref{eq:partial_calc} in \eqref{eq:tay_bt} yields
\be
\begin{split}
\beta^{t+1}_{i \to a}   &=   \eta_i^t \Bigg( \Big\{ \sum_{b \in [n] }  A_{bj} z^t_{b \to j} \Big\}_{j \in \secl } \Bigg)  \\
&  \quad -  \frac{A_{ai} z^t_{a} }{\tau^2_t}
\, \eta_i^t \Bigg( \Big\{ \sum_{b \in [n] }  A_{bj} z^t_{b \to j} \Big\}_{j \in \secl } \Bigg) \\
 & \qquad \cdot \left[\sqrt{nP_\ell} -  \eta_i^t \Bigg( \Big\{ \sum_{b \in [n] }  A_{bj} z^t_{b \to j} \Big\}_{j \in \secl } \Bigg)   \right].
\end{split}
\label{eq:beta_tia}
\ee
Notice that we have replaced the stand-alone term $A_{ai} z^t_{a \to i}$ in \eqref{eq:tay_bt} with $A_{ai} z^t_{a}$ because the difference $A_{ai} \delta z^t_{a \to i}$ is $O(\sqrt{\log n}/n)$, which can be ignored --- we only keep terms as small as $O(n^{-1/2})$.

Since only the second term on  the right-hand side of \eqref{eq:beta_tia} depends on $a$, we can write
\be
\beta^{t+1}_{i} = \eta_i^t \left( \Big\{ \sum_{b \in [n] }  A_{bj} (z^t_b + \delta z^t_{b \to j}) \Big\}_{j \in \secl } \right),
\label{eq:beta_t}
\ee
and
\be
\begin{split}
&\delta\beta^{t+1}_{i \to a} =  -  \frac{A_{ai} z^t_{a} }{\tau^2_t}
\, \eta_i^t \left( \Big\{ \sum_{b \in [n] }  A_{bj} (z^t_b + \delta z^t_{b \to j}) \Big\}_{j \in \secl } \right) \\
& \ \cdot \left[\sqrt{nP_\ell}  -  \eta_i^t \left( \Big\{ \sum_{b \in [n]} A_{bj} (z^t_b + \delta z^t_{b \to j}) \Big\}_{j \in \secl } \right) \right].
\label{eq:delbet}
\end{split}
\ee
We observe that $\delta\beta^t_{i \to a} = O(\log n/\sqrt{n})$. Hence, in \eqref{eq:za_delza},
we can write
\be
\delta z^t_{a \to i} = A_{ai} \beta^t_{i}
\label{eq:delz}
\ee
because the difference  $A_{ai} \delta\beta^t_{i \to a} = O(\log n/ n)$. Substituting \eqref{eq:delz} in \eqref{eq:beta_t}, we see that
\be
\begin{split}
\beta^{t+1}_{i} &= \eta_i^t \Bigg( \Big\{ \sum_{b \in [n] } A_{bj} z^t_b +  A^2_{bj} \beta^t_{j} \Big\}_{j \in \secl } \Bigg) \\
&\stackrel{(a)}{=}
\eta_i^t \left( \Big\{ (A^* z^t +   \beta^t)_j \Big\}_{j \in \secl } \right),
\label{eq:bet_t1}
\end{split}
\ee
where  $(a)$ holds because $\sum_{b} A^2_{bj} \to 1$ as $n \to \infty$. Analogously, using \eqref{eq:delz} in \eqref{eq:delbet} gives
\be
\begin{split}
\delta\beta^{t+1}_{i \to a} &=  \frac{-A_{ai} z^t_{a} }{\tau^2_t}
\eta_i^t \Bigg( \Big\{ \sum_{b \in [n] } (A^* z^t +   \beta^t)_j \Big\}_{j \in \secl } \Bigg) \\
& \cdot \left[\sqrt{nP_\ell}  -  \eta_i^t \Big( \Big\{ \sum_{b \in [n]} (A^* z^t +   \beta^t)_j \Big\}_{j \in \secl } \Big) \right].
\label{eq:delbet_t1}
\end{split}
\ee

Finally, we use \eqref{eq:bet_t1} and \eqref{eq:delbet_t1} in \eqref{eq:za_delza} to obtain
\begin{align}
& z^t_a   = y_a - \sum_{k \in [N]} A_{ak} (\beta^t_k + \delta \beta^t_{k \to a}) \nonumber \\
& = y_a - \sum_{k \in [N]} A_{ak} \, \eta_k^{t-1} \left(  A^* z^{t-1} +   \beta^{t-1}  \right) \nonumber \\
& \quad + \frac{ A^2_{ak} z^{t-1}_{a} }{\tau^2_{t-1}} \, \eta_k^{t-1} \left(  A^* z^{t-1} +   \beta^{t-1}  \right) \nonumber \\
&\qquad \cdot
\left[ \sqrt{n P_{\text{sec}(k)}} - \eta_k^{t-1} \left(A^* z^{t-1} +   \beta^{t-1}  \right) \right] \nonumber \\
& \stackrel{(b)}{=} y_a - (A\beta^t)_a + \frac{z^{t-1}_{a} }{n\tau^2_{t-1}} \, (nP - \norm{\beta^t}^2),
\label{eq:zta1}
\end{align}
where $(b)$ is obtained as follows.   First,  we use $A^2_{ak} \approx \tfrac{1}{n}$. Next,  \eqref{eq:eta_def} implies that for all $s$,
\[ \sum_{k \in [N]}\sqrt{n P_{\text{sec}(k)}} \ \eta^t_k (s) =  \sum_{\ell=1}^L n P_\ell = nP. \]
Finally, note  from \eqref{eq:bet_t1} that $\sum_{k} (\eta_k^{t-1} \left(  A^* z^{t-1} + \beta^{t-1}  \right))^2 = \sum_k (\beta_k^t)^2  = \norm{\beta^t}^2$.
The AMP update equations are thus given by \eqref{eq:zta1} and \eqref{eq:bet_t1}. 

\subsection{Proof of Lemma \ref{lem:conv_expec}} \label{app:conv_exp}

From \eqref{eq:xt_tau_def}, $x(\tau)$ can be written as
\be
x(\tau) := \sum_{\ell=1}^{L} \frac{P_\ell}{P} \, \mc{E}_\ell(\tau),
\ee
where 
\be 
\begin{split}
&\mc{E}_\ell (\tau) =
\expec \left[
\frac{e^{\frac{\sqrt{n P_\ell}}{\tau} \, U^{\ell}_1}}
{ e^{ \frac{\sqrt{n P_\ell}}{\tau} \, U^{\ell}_1}  +  e^{-\frac{n P_\ell}{\tau^2}} \sum_{j=2}^M e^{\frac{\sqrt{n P_\ell}}{\tau}U^{\ell}_j } } \right].
\label{eq:Eell_def}
\end{split}
\ee
The result needs to be proved only for $\xi^* >0$. (For brevity, we supress the dependence of $\xi^*$ on $\tau$.) Since $P_\ell$ is non-increasing with $\ell$,  it is enough\footnote{We can also prove that $\lim \mc{E}_{\lfloor \xi^* L \rfloor} = \tfrac{1}{2}$, but we do not need this for the exponentially decaying power allocation since it will only affect a vanishing fraction of sections as $L$ increases. Since $\mc{E}_\ell \in [0,1]$, these sections do not affect the value of $\lim x(\tau)$ in \eqref{eq:Eell_def}.} to prove that for  $\xi \in  ( 0, 1]$,  
\be
\lim \mc{E}_{\lfloor \xi L \rfloor} (\tau) = \left\{
\begin{array}{ll}
1, & \text{ if } \xi <  \xi^*, \\
0, & \text{ if } \xi > \xi^*.
\end{array}
\right.
\label{eq:Eell_def0}
\ee
Using the relation $nR =  {L \ln M}/{\ln 2}$,
we can write
\ben
\frac{n P_{\lfloor \xi L \rfloor} }{\tau^2} = \nu_{\lfloor \xi L \rfloor} \ln M, \quad \text{ where } \quad \nu_{\lfloor \xi L \rfloor} = \frac{ L P_{\lfloor \xi L \rfloor} }{R \tau^2 \ln 2 }.
\een
From the definition of $\xi^*$ in the lemma statement and the non-increasing power-allocation, we see that $\lim  \nu_{\lfloor \xi L \rfloor} >2$ for $\xi <\xi^*$, and $\lim  \nu_{\lfloor \xi L \rfloor} < 2$ for $\xi > \xi^*$.

For brevity, in what follows we drop the superscripts on  $U_j^{\lfloor \xi L \rfloor}$, and denote it by $U_j$ for $j \in [M]$. From  \eqref{eq:Eell_def},  $\mc{E}_{\lfloor \xi L \rfloor}(\tau) $  can be written as
\begin{small}
\begin{align}
&\mc{E}_{\lfloor \xi L \rfloor}(\tau) \nonumber \\
& = \expec \left[  \frac{e^{\sqrt{ \nu_{\lfloor \xi L \rfloor} \ln M} \, U_1}}
{e^{\sqrt{ \nu_{\lfloor \xi L \rfloor} \ln M} \, U_1 }  +   M^{- \nu_{\lfloor \xi L \rfloor}} \sum_{j=2}^M e^{\sqrt{ \nu_{\lfloor \xi L \rfloor} \ln M} \, U_j} }  \right] \nonumber \\
& = \expec \, \expec \left[  \frac{e^{\sqrt{ \nu_{\lfloor \xi L \rfloor} \ln M} \, U_1 }}
{e^{\sqrt{ \nu_{\lfloor \xi L \rfloor} \ln M} \, U_1}  +   M^{- \nu_{\lfloor \xi L \rfloor}} \sum_{j=2}^M e^{\sqrt{ \nu_{\lfloor \xi L \rfloor} \ln M} \, U_j } }   \Big{|} U_1\right].
\label{eq:Eell_iter}
\end{align}
\end{small}
The inner expectation in \eqref{eq:Eell_iter} is of the form
\be
\begin{split}
&\expec \left[  \frac{e^{\sqrt{ \nu_{\lfloor \xi L \rfloor} \ln M} \, U_1}}
{e^{\sqrt{ \nu_{\lfloor \xi L \rfloor} \ln M} \, U_1 }  +   M^{- \nu_{\lfloor \xi L \rfloor}} \sum_{j=2}^M e^{\sqrt{ \nu_{\lfloor \xi L \rfloor} \ln M} \, U_j} }   \Big{|} U_1\right]  \\
& \qquad = \expec_X \left[ \frac{c}{c + X} \right], 
\label{eq:inner_exp0} 
\end{split}
\ee
where $c = \exp\left(\sqrt{ \nu_{\lfloor \xi L \rfloor} \ln M} \, U_1 \right)$ is treated as a positive constant, and the expectation is with respect to the random variable
\be
X := M^{- \nu_{\lfloor \xi L \rfloor}} \sum_{j=2}^M \exp\left(\sqrt{ \nu_{\lfloor \xi L \rfloor} \ln M} \, U_j \right).
\label{eq:Xrv_def}
\ee

\textbf{Case $1$: $\xi < \xi^*$}. Here we have $ \lim \nu_{\lfloor \xi L \rfloor} >2$.  Since $\frac{c}{c +X}$ is a convex function of $X$, applying Jensen's inequality we get
$\expec_X [  \frac{c}{c + X} ] \geq \frac{c}{c + \expec X}$.
The expectation of $X$ is
\ben
\begin{split}
\expec X &= M^{- \nu_{\lfloor \xi L \rfloor}} \sum_{j=2}^M \expec \left[e^{\sqrt{ \nu_{\lfloor \xi L \rfloor} \ln M} \, U_j } \right] \\
& \stackrel{(a)}{=} M^{- \nu_{\lfloor \xi L \rfloor}} (M-1)  M^{ \nu_{\lfloor \xi L \rfloor} /2}  \leq M^{1 -  \nu_{\lfloor \xi L \rfloor} /2},
\end{split}
\een
with $(a)$ is obtained from the moment generating function of a Gaussian random variable. Therefore,
\be
\begin{split}
1 \geq \expec_X \left[ \frac{c}{c + X} \right] \geq  \frac{c}{c + \expec X} &\geq \frac{c}{c + M^{1-  \nu_{\lfloor \xi L \rfloor} /2}} \\
&= \frac{1}{1 + c^{-1} \, M^{1-  \nu_{\lfloor \xi L \rfloor} /2}}.
\label{eq:jensen_chain}
\end{split}
\ee
Recalling that $c = \exp\left(\sqrt{ \nu_{\lfloor \xi L \rfloor} \ln M} \, U_1 \right)$, \eqref{eq:jensen_chain} implies that 
\be
\begin{split}
&\expec_X \left[ \frac{e^{\sqrt{ \nu_{\lfloor \xi L \rfloor} \ln M} \, U_1}}{e^{\sqrt{ \nu_{\lfloor \xi L \rfloor} \ln M} \, U_1 } + X}  \ \Big{|} \ U_1 \right]  \\
& \qquad \geq \frac{1}{ 1 +   M^{1-  \nu_{\lfloor \xi L \rfloor} /2} \, e^{- \sqrt{ \nu_{\lfloor \xi L \rfloor} \ln M} \, U_1 }}.
\label{eq:UU_lb}
\end{split}
\ee
When $\{ U_1 > - (\ln M)^{1/4} \}$, the RHS of \eqref{eq:UU_lb} is at least $[1 +  M^{1-  \nu_{\lfloor \xi L \rfloor} /2} \, \exp\left( (\ln M)^{3/4} \sqrt{ \nu_{\lfloor \xi L \rfloor}} \right)]^{-1}$. Using this in  \eqref{eq:Eell_iter}, we obtain that
\be
\begin{split}
& 1 \geq \mc{E}_{\lfloor \xi L \rfloor} (\tau) \\
&\geq  \frac{P(U_1 > - (\ln M)^{1/4} )}{1 +  M^{1-  \nu_{\lfloor \xi L \rfloor} /2} \, e^{ (\ln M)^{3/4} \sqrt{ \nu_{\lfloor \xi L \rfloor}} }} \stackrel{ M \to \infty}{\longrightarrow} \  1,
\end{split}
\ee
since  $\lim  \nu_{\lfloor \xi L \rfloor} > 2$.  Hence $\mc{E}_{\lfloor \xi L \rfloor} \to 1$ when $\lim  \nu_{\lfloor \xi L \rfloor} > 2$.

\textbf{Case $2$: $\xi > \xi^*$.} Here we have $ \lim \nu_{\lfloor \xi L \rfloor} < 2$. The random variable $X$ in \eqref{eq:Xrv_def} can be bounded from below as follows.
\be
\begin{split}
X &\geq  M^{- \nu_{\lfloor \xi L \rfloor}} \max_{j \in \{2, \ldots, M\}} e^{\sqrt{ \nu_{\lfloor \xi L \rfloor} \ln M} \, U_j } \\
&= M^{- \nu_{\lfloor \xi L \rfloor}} e^{ \left[ \max_{j \in \{2, \ldots, M\}} U_j \right] \sqrt{ \nu_{\lfloor \xi L \rfloor} \ln M} }.
\label{eq:X_lb0}
\end{split}
\ee
Using standard bounds for the standard normal distribution, it can be shown that
\be
P\left( \max_{j \in \{2, \ldots, M\}} U_j \ < \sqrt{2 \ln M}(1-\e)\right) \leq e^{-M^{\e(1-\e)}},
\label{eq:pmax_gauss}
\ee
for
$ \e = \omega\left( \frac{\ln \ln M}{\ln M} \right)$.\footnote{Recall that $f(n) = \omega(g(n))$ if for each $k >0$, $\abs{f(n)} / \abs{g(n)} \geq k$ for all sufficiently large $n$. } \label{eq:eps_order}
Combining \eqref{eq:pmax_gauss} and \eqref{eq:X_lb0}, we obtain that
\ben
\begin{split}
&\exp(-M^{\e(1-\e)}) \geq P\left( \max_{j \in \{2, \ldots, M\}} U_j \ < \sqrt{2 \ln M}(1-\e)\right)  \\ 
 & \quad \geq  P\left( X <  M^{- \nu_{\lfloor \xi L \rfloor}} e^{ \sqrt{2 \ln M}(1-\e) \sqrt{ \nu_{\lfloor \xi L \rfloor} \ln M}} \right) \\
 & \quad = P\left( X  <  M^{\sqrt{2  \nu_{\lfloor \xi L \rfloor}} (1-\e) - \nu_{\lfloor \xi L \rfloor}}  \right).
\end{split}
\een
Since $\lim  \nu_{\lfloor \xi L \rfloor} < 2$ and $\e >0$ can be an arbitrarily small constant, there exists a strictly positive constant $\delta$ such that 
$\delta <  \sqrt{2  \nu_{\lfloor \xi L \rfloor}}(1-\e) - \nu_{\lfloor \xi L \rfloor}$ for all sufficiently large $L$. Therefore, for sufficiently large $M$, the expectation in \eqref{eq:inner_exp0} can be bounded as
\be
\begin{split}
\expec_X \left[ \frac{c}{c +X} \right] &\leq P(X < M^\delta) \cdot 1 + P (X \geq M^{\delta})\cdot \frac{c}{c + M^{\delta}} \\
& \leq e^{-M^{\e(1-\e)}} + 1 \cdot \frac{c}{c + M^{\delta}} \leq \frac{2}{1 + c^{-1} M^{\delta}}.
\end{split}
\label{eq:ccx_bound}
\ee
Recalling that $c = \exp\left(\sqrt{ \nu_{\lfloor \xi L \rfloor} \ln M} \, U_1  \right)$, and  using the bound  of \eqref{eq:ccx_bound} in  \eqref{eq:Eell_iter}, we obtain
\be
\begin{split}
&\mc{E}_{\lfloor \xi L \rfloor} (\tau)  \leq \expec \left[  \frac{2}
{1 +   M^{\delta} e^{-\sqrt{ \nu_{\lfloor \xi L \rfloor} \ln M} \, U_1 }}  \right] \\
& \leq P(U_1 > (\ln M)^{1/4}) \cdot 2 + \frac{2P(U_1 \leq (\ln M)^{1/4})}{1 + M^{\delta} e^{-\sqrt{ \nu_{\lfloor \xi L \rfloor}} \, (\ln M)^{3/4}} } \\
&  \stackrel{(a)}{\leq} 2e^{-\tfrac{1}{2} (\ln M)^{1/2}} + \, 1 \cdot \frac{2}{1 + e^{ \delta \ln M -\sqrt{ \nu_{\lfloor \xi L \rfloor}} \, (\ln M)^{3/4}}} \\
& \stackrel{(b)}{\longrightarrow} 0 \text{ as } M \to \infty.
\end{split}
\label{eq:Eell_final}
\ee
In \eqref{eq:Eell_final}, $(a)$ is obtained using the bound  $\Phi(x) < \exp(-x^2/2)$ for $x \geq 0$, where $\Phi(\cdot)$ is the Gaussian cdf; $(b)$ holds since $\delta$ and $\lim  \nu_{\lfloor \xi L \rfloor}$ are both positive constants.

This proves that $\mc{E}_{\lfloor \xi L \rfloor}(\tau) \to 0$ when $\lim \nu_{\lfloor \xi L \rfloor} < 2$.  The proof of the lemma is complete since we have proved both statements in \eqref{eq:Eell_def0}.

\subsection{Proof of Lemma \ref{lem:lim_xt_taut}} \label{app:lim_xt_taut}

 For brevity,  let  $\xi_t:= \xi^*(\bar{\tau}_t)$ for $t \geq 0$, where $\xi^*(\cdot)$ is defined in Lemma \ref{lem:conv_expec}.  For $t=0$,  $\bar{\tau}_0^2= \sigma^2+P$.  Then, from Lemma \ref{lem:conv_expec}  we obtain
\ben
\bar{x}_1 =   \lim_{L\to \infty} \sum_{\ell=1}^{\lfloor \xi_0 L \rfloor} \frac{P_\ell}{P},
\een
where $\xi_0$ is the supremum of all $\xi \in (0,1]$ that satisfy
\be \label{eq:x10_exp}  \lim_{ L \to \infty} \, L P_{\lfloor \xi L \rfloor} = \sigma^2 (1+\snr)^{1-\xi} \ln(1+\snr)  > 2R (\sigma^2+P) \ln 2. \ee
The first equality in  \eqref{eq:x10_exp} is due to \eqref{eq:cell}. Simplifying \eqref{eq:x10_exp} yields the condition $\xi < \frac{1}{2 \mc{C}} \log (\mc{C}/R)$, from which it follows that the supremum is $\xi_0 = \tfrac{\log(\mc{C}/R)}{2\mc{C}}$.

Using the geometric series formula
$\sum_{\ell=1}^k P_{\ell} = (P+\sigma^2)(1- 2^{-2\mc{C}k/L})$, \eqref{eq:x10_exp} becomes
\ben
\begin{split}
\bar{x}_1 = \lim_{L \to \infty} \sum_{\ell =1}^{\lfloor \xi_0 L \rfloor} \frac{P_\ell}{P} &= \frac{P+\sigma^2}{P}(1 - 2^{-2\mc{C} \xi_0}) \\
&=
\frac{ (1+ \snr) - (1+ \snr)^{1- \xi_0}}{\snr}.
\end{split}
\een
The expression for $\bar{\tau}^2_1$ is a straightforward simplification of $\sigma^2 + P(1-\bar{x}_1)$. 

Assume towards induction that \eqref{eq:limxt1} and \eqref{eq:limtaut1} hold for $\bar{x}_t, \bar{\tau}^2_t$. For step $(t+1)$,  from Lemma  \ref{lem:conv_expec},
\ben
\bar{x}_{t+1} = \lim_{L\to \infty} \sum_{\ell=1}^{\lfloor \xi_t L \rfloor} \frac{P_\ell}{P},
 \een
where $\xi_t$ is the supremum of all $\xi \in (0,1]$ that satisfy
\be \label{eq:x1t_exp}  \lim_{ L \to \infty} \, L P_{\lfloor \xi L \rfloor} = \sigma^2 (1+\snr)^{1-\xi} \ln(1+\snr)  > 2R \bar{\tau}_t^2 \ln 2. \ee
Using  the expression in \eqref{eq:limtaut1} for $\bar{\tau}_t^2$ (due to the induction hypothesis) and simplifying \eqref{eq:x1t_exp} yields the condition 
\[\xi < \xi_{t-1} + \frac{1}{2\mc{C}} \log_2 \frac{\mc{C}}{R}.  \]
Hence the supremum is $\xi_t = \xi_{t-1} + \frac{1}{2\mc{C}} \log_2 ( \mc{C}/{R})$. It follows that 
\be
\begin{split}
\bar{x}_{t+1} =   \lim_{L \to \infty} \sum_{\ell =1}^{\lfloor \xi_t L \rfloor} \frac{P_\ell}{P} &= \frac{P+\sigma^2}{P}(1 - 2^{-2\mc{C} \xi_t}) \\
&=
\frac{ (1+ \snr) - (1+ \snr)^{1- \xi_t}}{\snr}.
\label{eq:xt1_final}
\end{split}
\ee
The proof is concluded by using \eqref{eq:xt1_final} to compute $\bar{\tau}^2_{t+1} = P + \sigma^2(1 - \bar{x}_{t+1})$.

\subsection{The limit of  $\frac{1}{n} \mathbb{E}\{[\eta^r(\beta - \bar{\tau}_r \breve{Z}_r) - \beta]^*  [\eta^s(\beta - \bar{\tau}_s \breve{Z}_s) - \beta] \}$ equals $\bar{\sigma}^2_{s+1}$ for $-1 \leq r \leq s \leq t$.} \label{app:qrqs_lim}

Noting that $\norm{\beta}^2=nP$, we prove that the desired limit
\be
\begin{split}
&  \lim\left[ \frac{1}{n} \expec\{ [\eta^r(\beta - \bar{\tau}_r \breve{Z}_r)]^* [\eta^s(\beta - \bar{\tau}_s \breve{Z}_s)] \} \right.\\
& \left. -   \frac{1}{n} \expec\{   \beta^{*}\eta^r(\beta - \bar{\tau}_r \breve{Z}_r) \} -  \frac{1}{n} \expec\{   \beta^{*}\eta^s(\beta - \bar{\tau}_s \breve{Z}_s) \}  + P \right]
\end{split}
\label{eq:expec_expand} 
\ee
equals $\bar{\sigma}^2_{s+1}=\sigma^2\left( (1+\snr)^{1-\xi_s} -1 \right)$.  For the case $r= s= -1$ the result holds since $\bar{\sigma}_0^2 = P$, so assume $s > -1$. To obtain  \eqref{eq:expec_expand}, we show the following: for $0 \leq r \leq t$,
\be
\lim \frac{1}{n} \expec\{   \beta^{*}\eta^r(\beta - \bar{\tau}_r \breve{Z}_r) \} =  \bar{\tau}_0^2 - \bar{\tau}_{r+1}^2,  \label{eq:lim_beta_eta}
\ee
and for $0 \leq r \leq s \leq t$,
\be
 \lim \frac{1}{n} \expec\{ [\eta^r(\beta - \bar{\tau}_r \breve{Z}_r)]^* [\eta^s(\beta - \bar{\tau}_s \breve{Z}_s)] \}  = \bar{\tau}_0^2 - \bar{\tau}_{r+1}^2.
 \label{eq:lim_etar_etas}
\ee
The above results are all trivially true if $r = -1$.

We first show \eqref{eq:lim_beta_eta}.  Since $\beta$ is distributed uniformly over the set $\mcb$, the expectation in \eqref{eq:lim_beta_eta} can be computed by assuming that  $\beta$ has a non-zero in the first entry of each section. Thus
\begin{align}
& \lim \frac{1}{n} \expec\{   \beta^{*}\eta^r(\beta - \bar{\tau}_r \breve{Z}_r) \} \nonumber \\
 &=  \lim \sum_{\ell=1}^L P_\ell  \,  \expec \left[ \frac{ e^{ \frac{n P_\ell}{\bar{\tau}_r^2}}  e^{\frac{\sqrt{n P_\ell}}{\bar{\tau}_r} \, U_1  }}
{ e^{ \frac{n P_\ell}{\bar{\tau}_r^2} }   e^{\frac{\sqrt{n P_\ell}}{\bar{\tau}_r} \, U_1}  + \sum_{j=2}^M e^{\frac{\sqrt{n P_\ell}}{\bar{\tau}_r}U_j}} \right] \nonumber \\
& \stackrel{(a)}{=} \lim \sum_{\ell =1}^{\lfloor \xi_r L \rfloor} P_\ell  = \sigma^2\left( (1 + \snr)  - (1 + \snr)^{1-\xi_r} \right)  \nonumber \\
& \stackrel{(b)}{=}  \bar{\tau}_0^2 - \bar{\tau}_{r+1}^2. \label{eq:lim_beta_etar}
\end{align}
 In \eqref{eq:lim_beta_etar}, $\{U^\ell_j \}$ with $\ell \in [L], j \in [M]$ is a relabeled version of $-\breve{Z}_r$, and is thus i.i.d.\ $\mc{N}(0,1)$.  Equalities $(a)$ and $(b)$ are obtained from Lemmas \ref{lem:conv_expec} and \ref{lem:lim_xt_taut} (cf.\ \eqref{eq:xt_tau_def}, \eqref{eq:xt_tau_def1}, and \eqref{eq:limtaut1}).

Consider result \eqref{eq:lim_etar_etas}.  From the proof of Proposition \ref{prop:se_cons}, (noting that $\beta^{r+1} = \eta^r(\beta - \bar{\tau}_r \breve{Z}_r)$ and cf.\ \eqref{eq:betat_beta_sq} and \eqref{eq:betat_sq}), it follows that 
\be
\frac{1}{n} \expec\norm{\eta^r_\ell(\beta - \bar{\tau}_r \breve{Z}_r)}^2  = \frac{1}{n} \expec\{   \beta_\ell^{*}\eta^r_\ell(\beta - \bar{\tau}_r \breve{Z}_r) \}, \quad \ell \in [L],
 \label{eq:lim_etar_sq}
\ee
which proves the result if $r=s$.  For  $r < s$, we obtain the result by showing that 
\begin{align}
\lim \frac{1}{n} \expec\{  [\eta^r(\beta - \bar{\tau}_r \breve{Z}_r)]^*  [\eta^s(\beta - \bar{\tau}_s \breve{Z}_s)] \} &\leq\lim \sum_{\ell =1}^{\lfloor \xi_r L \rfloor} P_\ell , \label{eq:etars_ub} \\
\lim \frac{1}{n} \expec\{  [\eta^r(\beta - \bar{\tau}_r \breve{Z}_r)]^*  [\eta^s(\beta - \bar{\tau}_s \breve{Z}_s)] \} &\geq \lim \sum_{\ell =1}^{\lfloor \xi_r L \rfloor} P_\ell . \label{eq:etars_lb}
\end{align}
We then we get the desired result by observing that the limit on the RHS above  equals $\bar{\tau}_0^2 - \bar{\tau}_{r+1}^2$, as in \eqref{eq:lim_beta_etar}. From the Cauchy-Schwarz inequality, we have
\be
\begin{split}
& \lim  \frac{1}{n} \expec\{  [\eta^r(\beta - \bar{\tau}_r \breve{Z}_r)]^*  [\eta^s(\beta - \bar{\tau}_s \breve{Z}_s)] \} \\
&=  \lim \frac{1}{n} \sum_{\ell=1}^L  \expec\{  [\eta^r_\ell(\beta - \bar{\tau}_r \breve{Z}_{r})]^*  [\eta^s_\ell( \beta - \bar{\tau}_s \breve{Z}_{s} )] \} \\ 
  & \stackrel{(a)}{\leq}  \lim \frac{1}{n}\sum_\ell (\mathbb{E}\norm{\eta^r_\ell(\beta - \bar{\tau}_r \breve{Z}_{r} )}^2)^{1/2} 
  (\mathbb{E} \norm{\eta^s_\ell(\beta - \bar{\tau}_s \breve{Z}_{s} )}^2)^{1/2}  \\
& \stackrel{(b)}{=}  \lim \sum_{\ell} P_\ell \,  \mc{E}_\ell(\bar{\tau}_r^2) \,  \mc{E}_\ell(\bar{\tau}_s^2)
\stackrel{(c)}{=}  \, \lim \sum_{\ell =1}^{\lfloor \xi_r L \rfloor} P_\ell, 
\end{split}
\label{eq:etar_etas_ub}
\ee
where $(a)$ is obtained using the Cauchy-Schwarz inequality; $(b)$ follows from \eqref{eq:lim_etar_sq}, \eqref{eq:lim_beta_etar}, and the definition of $\mc{E}_\ell(\cdot)$ in \eqref{eq:Eell_def}; $(c)$ is obtained as follows. Consider $ \mc{E}_{\lfloor \xi L \rfloor}(\bar{\tau}_r^2)$ and $\mc{E}_{\lfloor \xi L \rfloor}(\bar{\tau}_s^2)$  for some $\xi \in (0,1]$. It follows from the proofs of Lemmas \ref{lem:conv_expec} and \ref{lem:lim_xt_taut} that, 
\ben
\lim \mc{E}_{\lfloor \xi L \rfloor} (\bar{\tau}_r) = \left\{
\begin{array}{ll}
1, & \text{ for } \xi <  \xi_r, \\
0, & \text{ for } \xi > \xi_r,
\end{array}
\right. 
\een
and
\ben
\lim \mc{E}_{\lfloor \xi L \rfloor} (\bar{\tau}_s) = \left\{
\begin{array}{ll}
1, & \text{ for } \xi <  \xi_s, \\
0, & \text{ for } \xi > \xi_s,
\end{array}
\right.
\een
where $\xi_r, \xi_s$ are as defined in Lemma \ref{lem:lim_xt_taut}. Since $r < s$, we have $ \xi_r \leq  \xi_s$, which yields $(c)$ in \eqref{eq:etar_etas_ub}.

For the lower bound \eqref{eq:etars_lb}, since $\beta$ is distributed uniformly over the set $\mcb$, the expectation in \eqref{eq:lim_etar_etas} can be computed by assuming that $\beta$ has a non-zero in the first entry of each section:
\be
\begin{split}
&\frac{1}{n} \expec\{  [\eta^r(\beta - \bar{\tau}_r \breve{Z}_r)]^*  [\eta^s(\beta - \bar{\tau}_s \breve{Z}_s)] \}  \\
&= \frac{1}{n}  \sum_{\ell} \mathbb{E}\{ [\eta^r_{\ell}(\beta - \bar{\tau}_r \breve{Z}_{r}) ]^*  [\eta^s_{\ell}(\beta - \bar{\tau}_s \breve{Z}_{s})]  \}
= \sum_{\ell} P_\ell \, \mc{E}_{rs,\ell}
 \label{eq:Ers_def}
 \end{split}
\ee
where
\be
\begin{split}
 & \mc{E}_{rs,\ell} = \\
 & \expec 
 \frac{ e^{\mathsf{b}^2_{r_\ell} + \mathsf{b}_{r_\ell} U^\ell_{r1}} \, e^{\mathsf{b}^2_{s_\ell} + \mathsf{b}_{s_\ell} U^\ell_{s1}} + \sum_{i=2}^M e^{\mathsf{b}_{r_\ell} U^\ell_{ri}} \,e^{\mathsf{b}_{s_\ell} U^\ell_{si} }}
{ (e^{\mathsf{b}^2_{r_\ell} + \mathsf{b}_{r_\ell} U^\ell_{r1} } +   \sum_{j=2}^M e^{\mathsf{b}_{r_\ell} U^\ell_{rj}} ) (e^{\mathsf{b}^2_{s_\ell}+ \mathsf{b}_{s_\ell} U^\ell_{s1} } +   \sum_{j=2}^M e^{\mathsf{b}_{s_\ell} U^\ell_{sj}})} 
\end{split}
\label{eq:Ersell_def}
\ee
with $\mathsf{b}^2_{r_\ell} := n P_\ell/\bar{\tau}_r^2$ and $\mathsf{b}^2_{s_\ell} := n P_\ell/\bar{\tau}_s^2$.  In \eqref{eq:Ersell_def}, the pairs of random variables $\{(U^\ell_{rj},  U^\ell_{sj} ) \}, \, j\in [M]$ are i.i.d.\ across index $j$, and for each $j$, $U^\ell_{rj}$ and $U^\ell_{sj}$ are jointly Gaussian with $\mc{N}(0,1)$ marginals and covariance $\bar{\tau}_s/\bar{\tau}_r$.

Consider the expectation using just the first term in the numerator on the right-hand side of \eqref{eq:Ersell_def}.  This can be written as
\be
\begin{split}
& \expec\Bigg[ \expec\Bigg[\Bigg( \frac{ e^{\mathsf{b}_{r_\ell} U^\ell_{r1}}  }
{ e^{\mathsf{b}_{r_\ell} U^\ell_{r1}}  +  \sum_{j =2}^M  e^{\mathsf{b}_{r_\ell} U^\ell_{rj}  -\mathsf{b}^2_{r_\ell}}} \Bigg) \\
&\qquad \cdot  \Bigg( \frac{e^{\mathsf{b}_{s_\ell} U^\ell_{s1}}  }
{ e^{\mathsf{b}_{s_\ell} U^\ell_{s1}}  + \sum_{j =2}^M  e^{\mathsf{b}_{s_\ell} U^\ell_{sj} -\mathsf{b}^2_{s_\ell}}}  \Bigg) \Big{|} \ U^\ell_{r1}, U^\ell_{s1} \Bigg] \Bigg] \\ 
& \stackrel{(a)}{\geq}  \expec\left[ \Bigg( \frac{ e^{\mathsf{b}_{r_\ell} U^\ell_{r1}}  } { e^{ \mathsf{b}_{r_\ell} U^\ell_{r1}}  +  M e^{-\frac{1}{2}\mathsf{b}_{r_\ell}^2} } \Bigg)
 \Bigg( \frac{ e^{\mathsf{b}_{s_\ell} U^\ell_{s1}}  } { e^{ \mathsf{b}_{s_\ell} U^\ell_{s1}}  +  M e^{-\frac{1}{2}\mathsf{b}_{s_\ell}^2} } \Bigg) \right]\\
& = \expec\left[ \left( 1 +  M e^{ -\frac{\mathsf{b}^2_{r_\ell}}{2} -\mathsf{b}_{r_\ell} U^\ell_{r1}} \right)^{-1}
 \left( 1 +  M e^{ -\frac{\mathsf{b}^2_{s_\ell}}{2} -\mathsf{b}_{s_\ell} U^\ell_{s1}} \right)^{-1} \right]\\
 & \geq P\left(U^\ell_{r1} > -\mathsf{b}_{r_\ell}^{1/2}, \  U^\ell_{s1} > -\mathsf{b}_{s_\ell}^{1/2} \right)  \left( 1 +  M e^{ -\frac{\mathsf{b}^2_{r_\ell}}{2} + \mathsf{b}_{r_\ell}^{3/2}}  \right)^{-1} \\
& \qquad \cdot \left( 1 +  M e^{ -\frac{\mathsf{b}^2_{s_\ell}}{2} + \mathsf{b}_{s_\ell}^{3/2}} \right)^{-1} 
 \\
& \stackrel{(b)}{\longrightarrow}1 \ \text{ as } \ M \to \infty \ \text{ for } 1\leq \ell <  \lfloor \xi_r L \rfloor.
\end{split}
\label{eq:1st_term}
\ee
In \eqref{eq:1st_term}, $(a)$ is obtained as follows. The inner expectation on the first line of the form $\expec_{X,Y}[f(X,Y)]$ with
$f(X,Y) = \tfrac{\kappa_1}{\kappa_1 + X} \cdot  \tfrac{\kappa_2}{\kappa_2 + Y}$, 
where $\kappa_1, \kappa_2$ are positive constants. Since $f$ is a convex function of $(X,Y)$, Jensen's inequality implies
$\expec[f(X,Y)] \geq f(\expec X, \expec Y)$, with 
$\expec [ \exp(\mathsf{b}_{r_\ell} U^\ell_{rj}) ] = \exp(\frac{1}{2} \mathsf{b}^2_{r_\ell})$.  

To obtain the convergence in step $(b)$ of \eqref{eq:1st_term},   note that for $\ell < \lfloor \xi_r L \rfloor$,  
\be
\begin{split}  
 \lim  \frac{\mathsf{b}^2_{r_\ell} }{2 \ln M}  =    \lim \frac{n P_\ell}{2\bar{\tau}_r^2 \ln M} & >  \lim \frac{n P_ {\lfloor \xi_r L \rfloor } }{2\bar{\tau}_r^2 \ln M} \\
&=  
 \lim \frac{L P_ {\lfloor \xi_r L \rfloor } }{2R \bar{\tau}_r^2 \ln 2} = 1, \label{eq:brl_lim} 
 \end{split}
 \ee
where we have used  $nR = L \log M$ and the fact that $\xi_r$ is the supremum of $\xi \in (0,1]$ for which $L P_ {\lfloor \xi_r L \rfloor} > 2R \bar{\tau}_r^2 \ln 2$ (see proof of Lemma \ref{lem:lim_xt_taut}).

Since $\mc{E}_{rs,\ell}$ in \eqref{eq:Ersell_def} lies in $[0,1]$ for all $\ell$, \eqref{eq:1st_term} implies that 
$ \lim \mc{E}_{rs,\ell} =1$ for $1 \leq \ell <  \lfloor \xi_r L \rfloor$. 
Using this in \eqref{eq:Ers_def} gives the lower bound \eqref{eq:etars_lb}.  Together with the upper bound in \eqref{eq:etars_ub}, this proves \eqref{eq:lim_etar_etas}, and hence completes the proof.

\subsection*{Acknowledgement}
The authors thank A. Barron and S. Cho for several insightful discussions, and the anonymous reviewers and the associate editor for their helpful comments. This work was supported in part by a  Marie Curie Career Integration Grant (Grant Agreement No. 631489). A. Greig was supported by an EPSRC Doctoral Training Award.

%%%%%%%%%%%%%%%
\IEEEtriggeratref{9}

%%%%%%%%%%%%

\end{document}